\providecommand{\algorithmname}{Algorithm}
  \theoremstyle{definition}
  \newtheorem{condition}{\protect\conditionname}
  \theoremstyle{remark}
  \newtheorem{rem}{\protect\remarkname}
\theoremstyle{plain}
\newtheorem{thm}{\protect\theoremname}
  \theoremstyle{plain}
  \newtheorem{prop}{\protect\propositionname}
  \theoremstyle{plain}
  \newtheorem{cor}{\protect\corollaryname}
  \theoremstyle{plain}
  \newtheorem{lem}{\protect\lemmaname}
 \theoremstyle{definition}
  \newtheorem{example}{\protect\examplename}
  \providecommand{\conditionname}{Condition}
  \providecommand{\examplename}{Example}
  \providecommand{\lemmaname}{Lemma}
  \providecommand{\propositionname}{Proposition}
  \providecommand{\remarkname}{Remark}
\providecommand{\corollaryname}{Corollary}
\providecommand{\theoremname}{Theorem}
\begin{document}

\title{Variance bounding and geometric ergodicity of Markov chain Monte
Carlo kernels for approximate Bayesian computation}

\author{Anthony Lee and Krzysztof Łatuszyński\\
Department of Statistics,\\
University of Warwick}
\maketitle
\begin{abstract}
Approximate Bayesian computation has emerged as a standard computational
tool when dealing with the increasingly common scenario of completely
intractable likelihood functions in Bayesian inference. We show that
many common Markov chain Monte Carlo kernels used to facilitate inference
in this setting can fail to be variance bounding, and hence geometrically
ergodic, which can have consequences for the reliability of estimates
in practice. This phenomenon is typically independent of the choice
of tolerance in the approximation. We then prove that a recently introduced
Markov kernel in this setting can inherit variance bounding and geometric
ergodicity from its intractable Metropolis--Hastings counterpart,
under reasonably weak and manageable conditions. We show that the
computational cost of this alternative kernel is bounded whenever
the prior is proper, and present indicative results on an example
where spectral gaps and asymptotic variances can be computed, as well
as an example involving inference for a partially and discretely observed,
time-homogeneous, pure jump Markov process. We also supply two general
theorems, one of which provides a simple sufficient condition for
lack of variance bounding for reversible kernels and the other provides
a positive result concerning inheritance of variance bounding and
geometric ergodicity for mixtures of reversible kernels.
\end{abstract}

\section{Introduction}

Approximate Bayesian computation refers to branch of Monte Carlo methodology
that uses the ability to simulate data according to a parametrized
likelihood function in lieu of computation of that likelihood to perform
approximate, parametric Bayesian inference. These methods have been
used in an increasingly diverse range of applications since their
inception in the context of population genetics \citep{Tavare1997,Pritchard1999},
particularly in cases where the likelihood function is either impossible
or computationally prohibitive to evaluate.

We are in a standard Bayesian setting with data $y\in\mathsf{Y}$,
a parameter space $\Theta$, a prior $p:\Theta\rightarrow\mathbb{R}_{+}$
and for each $\theta\in\Theta$ a likelihood $f_{\theta}:\mathsf{Y}\rightarrow\mathbb{R}_{+}$.
We assume $\mathsf{Y}$ is a metric space and consider the artificial
likelihood
\begin{equation}
f_{\theta}^{\epsilon}(y)=V(\epsilon)^{-1}\int_{\mathsf{Y}}I\left(y\in B_{\epsilon,x}\right)f_{\theta}(x)\mathrm{d}x=V(\epsilon)^{-1}f_{\theta}\left(B_{\epsilon,y}\right),\label{eq:abc_approx_likelihood}
\end{equation}
which is commonly employed in approximate Bayesian computation. The
value of $\epsilon$ can be interpreted as the tolerance of the approximation.
Here, $B_{r,z}$ denotes a metric ball of radius $r$ around $z$,
$V(r)=\int_{\mathsf{Y}}I\left(x\in B_{r,0}\right)\mathrm{d}x$ denotes
the volume of a ball of radius $r$ in $\mathcal{\mathsf{Y}}$ and
$I$ denotes the indicator function. We slightly abuse language by
referring to densities as distributions, and where convenient, employ
the measure-theoretic notation $\mu(A)=\int_{A}\mu(d\lambda)$. We
consider situations in which both $\epsilon$ and $y$ are fixed,
and so define functions $h:\Theta\rightarrow[0,1]$ and $w:\mathsf{Y}\rightarrow[0,1]$
by 
\begin{equation}
h(\theta)=f_{\theta}\left(B_{\epsilon,y}\right)\label{eq:h_defn}
\end{equation}
and $w(x)=I\left(y\in B_{\epsilon,x}\right)$ to simplify the presentation.
The value $h(\theta)$ can be interpreted as the probability of `hitting'
$B_{\epsilon,y}$ with a sample drawn from $f_{\theta}$.

While the artificial likelihood \eqref{eq:abc_approx_likelihood}
is also intractable in general, the approximate posterior it induces,
$\pi(\theta)=h(\theta)p(\theta)/\int_{\Theta}h(\vartheta)p(\vartheta){\rm d}\vartheta$,
can be dealt with using constrained versions of standard methods when
sampling from $f_{\theta}$ is possible for any $\theta\in\Theta$
\citep[see, e.g., ][]{Marin2012}. In particular, one typically uses
$f_{\theta}$ as a proposal in such a way that its explicit computation
is avoided. We are often interested in computing $\pi(\varphi)=\int_{\Theta}\varphi(\theta)\pi(\theta){\rm d\theta}$,
the posterior expectation of some function $\varphi$, and it is this
type of quantity that can be approximated using Monte Carlo methodology.
We focus on one such method, Markov chain Monte Carlo, whereby a Markov
chain is constructed by sampling iteratively from an irreducible Markov
kernel $P$ with unique stationary distribution $\pi$. We can use
such a chain directly to estimate $\pi(\varphi)$ using appropriately
normalized partial sums, i.e.,\ given the realization $\theta_{1},\theta_{2},\ldots$
of a chain started at $\theta_{0}$, where $\theta_{i}\sim P(\theta_{i-1},\cdot)$
for $i\in\mathbb{N}$ we compute the estimate
\begin{equation}
\frac{1}{m}\sum_{i=1}^{m}\varphi(\theta_{i}),\label{eq:mcmc_estimate}
\end{equation}
for some $m$. Alternatively, the Markov kernels can be used within
other methods such as sequential Monte Carlo \citep{DelMoral2006}.
In the former case, it is desirable that a central limit theorem holds
for \eqref{eq:mcmc_estimate} and that the asymptotic variance ${\rm var}(P,\varphi)$
of \eqref{eq:mcmc_estimate} be reasonably small, while in the latter
it is desirable that the kernel be geometrically ergodic, i.e.,\ $P^{m}(\theta_{0},\cdot)$
converges at a geometric rate in $m$ to $\pi$ in total variation
where $P^{m}$ is the $m$-fold iterate of $P$ \citep[see, e.g., ][]{roberts2004general,MT2009},
at least because this property is often assumed in analyses \citep[see, e.g.,][]{Jasra2008,Whiteley2012}.
In addition, consistent estimation of ${\rm var}(P,\varphi)$ is well
established \citep{hobert2002applicability,jones2006fixed,bednorz2007few,flegal2010batch}
for geometrically ergodic chains.

Motivated by these considerations, we study both the variance bounding
\citep{roberts2008variance} and geometric ergodicity properties of
a number of reversible kernels used for approximate Bayesian computation.
For reversible $P$, a central limit theorem holds for all $\varphi\in L^{2}(\pi)$
if and only if $P$ is variance bounding \citep[Theorem~7]{roberts2008variance},
where $L^{2}(\pi)$ is the space of square-integrable functions with
respect to $\pi$. Of course, reversible kernels that are not variance
bounding can still produce Markov chains where \eqref{eq:mcmc_estimate}
satisfies a central limit theorem for some, but not all, functions
in $L^{2}(\pi)$.

Much of the literature seeks to control the trade-off associated with
the quality of approximation \eqref{eq:abc_approx_likelihood}, controlled
by $\epsilon$ and manipulation of $y$, and counteracting computational
difficulties \citep[see, e.g.,][]{Fearnhead2012}. We address here
a separate issue, namely that many Markov kernels used in this context
are neither variance bounding nor geometrically ergodic, for any finite
$\epsilon$ in rather general situations when using `local' proposal
distributions.

As a partial remedy to the problems identified by this negative result,
we also show that under reasonably mild conditions, a kernel proposed
in \citet{Lee2012} can inherit variance bounding and geometric ergodicity
from its intractable Metropolis--Hastings \citep{Metropolis1953,Hastings1970}
counterpart. This allows for the specification of a broad class of
models for which we can be assured this particular kernel will be
geometrically ergodic. In addition, conditions ensuring inheritance
of either property can be met without knowledge of $f_{\theta}$,
e.g.\ by using a symmetric proposal and a prior that is continuous,
everywhere positive and has exponential or heavier tails.

To assist in the interpretation of results and the quantitative example
in the discussion, we provide some background on the spectral properties
of variance bounding and geometrically ergodic Markov kernels. Both
variance bounding and geometric ergodicity of a reversible Markov
kernel $P$ are related to $\sigma_{0}(P)$, the spectrum of $P$
considered as an operator on $L_{0}^{2}(\pi)$, the restriction of
$L^{2}(\pi)$ to zero-mean functions \citep[see, e.g.,][]{geyer2000non,mira2001ordering}.
Variance bounding is equivalent to $\sup\sigma_{0}(P)<1$ \citep[Theorem~14]{roberts2008variance}
and geometric ergodicity is equivalent to $\sup|\sigma_{0}(P)|<1$
\citep[Theorem~2.1]{kontoyiannis2009geometric,roberts1997geometric}.
The spectral gap ${\rm Gap}(P)=1-\sup|\sigma_{0}(P)|$ of a geometrically
ergodic kernel is closely related to its aforementioned geometric
rate of convergence to $\pi$, with faster rates associated with larger
spectral gaps. In particular, its convergence in total variation satisfies
for some $1>\rho\geq\sup|\sigma_{0}(P)|$ and some function $C_{\rho}:\Theta\rightarrow\mathbb{R}_{+}$
\citep[c.f. ][Section 6]{baxendale2005renewal} 
\begin{equation}
\Vert\pi(\cdot)-P^{m}(\theta_{0},\cdot)\Vert_{{\rm TV}}\leq C_{\rho}(\theta_{0})\rho^{m}.\label{eq:ge_tv}
\end{equation}

\section{The Markov kernels\label{sec:Algorithms}}

In this section we describe the algorithmic specification of the $\pi$-invariant
Markov kernels under study. The algorithms specify how to sample from
each kernel; in each, a candidate $\vartheta$ is proposed according
to a common proposal $q(\theta,\cdot)$ and accepted or rejected,
possibly along with other auxiliary variables, using simulations from
the likelihoods $f_{\vartheta}$ and $f_{\theta}$. We assume that
for all $\theta\in\Theta$, $q(\theta,\cdot)$ and $p$ are densities
with respect to a common dominating measure, e.g.\ the Lebesgue or
counting measures.

The first and simplest Markov kernel in this setting was proposed
in \citet{Marjoram2003}, and is a special case of a `pseudo-marginal'
kernel \citep{Beaumont2003,Andrieu2009}. Such kernels have been used
in the context of approximate Bayesian computation for the estimation
of parameters in speciation models \citep{becquet2007new,chen2009new,li2010demographic,kim2011patterns},
and as a methodological component within an SMC sampler \citep{DelMoral2011,drovandi2011estimation}.
They evolve on $\Theta\times\mathsf{Y}^{N}$ and involve sampling
auxiliary variables $z_{1:N}\sim f_{\vartheta}^{\otimes N}$ for a
fixed $N\in\mathbb{N}$. We denote kernels of this type for any $N$
by $P_{\ref{alg:pm_kernel_1},N}$, and describe their simulation in
Algorithm~\ref{alg:pm_kernel_1}. It is readily verified \citep{Beaumont2003,Andrieu2009}
that $P_{\ref{alg:pm_kernel_1},N}$ is reversible with respect to
\[
\bar{\pi}(\theta,x_{1:N})\propto p(\theta)\prod_{j=1}^{N}f_{\theta}(x_{j})\frac{1}{N}\sum_{j=1}^{N}w(x_{j}),
\]
and we have $\bar{\pi}(\theta)=\int\bar{\pi}(\theta,x_{1:N}){\rm d}x_{1:N}=\pi(\theta)$,
i.e.,\ the $\theta$-marginal of $\bar{\pi}$ is $\pi(\theta).$

\begin{algorithm}[H]

\caption{To sample from $P_{\ref{alg:pm_kernel_1},N}(\theta,x_{1:N};\cdot)$\label{alg:pm_kernel_1}}

\begin{enumerate}
\item Sample $\vartheta\sim q(\theta,\cdot)$ and $z_{1:N}\sim f_{\vartheta}^{\otimes N}$. 
\item With probability 
\[
1\wedge\frac{p(\vartheta)q(\vartheta,\theta)\sum_{j=1}^{N}w(z_{j})}{p(\theta)q(\theta,\vartheta)\sum_{j=1}^{N}w(x_{j})},
\]
output $(\vartheta,z_{1:N})$. Otherwise, output $(\theta,x_{1:N})$.
\end{enumerate}
\end{algorithm}

In \citet{Lee2012}, two alternative kernels were proposed in this
context, both of which evolve on $\Theta$. One, denoted $P_{\ref{alg:pm_kernel_2},N}$
and described in Algorithm~\ref{alg:pm_kernel_2}, is an alternative
pseudo-marginal kernel that in addition to sampling $z_{1:N}\sim f_{\vartheta}^{\otimes N}$,
also samples auxiliary variables $x_{1:N-1}\sim f_{\theta}^{\otimes N-1}$.
Detailed balance can be verified directly upon interpreting $\sum_{j=1}^{N}w(z_{j})$
and $\sum_{j=1}^{N-1}w(x_{j})$ as Binomial$\left\{ N,h(\vartheta)\right\} $
and Binomial$\left\{ N-1,h(\theta)\right\} $ random variables respectively.
The other kernel, denoted $P_{\ref{alg:1_hit_kernel}}$ and described
in Algorithm~\ref{alg:1_hit_kernel}, also involves sampling according
to $f_{\theta}$ and $f_{\vartheta}$ but does not sample a fixed
number of auxiliary variables. This kernel also satisfies detailed
balance \citep[Proposition 1]{Lee2012a}.

\begin{algorithm}[H]
\caption{To sample from $P_{\ref{alg:pm_kernel_2},N}(\theta,\cdot)$\label{alg:pm_kernel_2}}

\begin{enumerate}
\item Sample $\vartheta\sim q(\theta,\cdot)$, $x_{1:N-1}\sim f_{\theta}^{\otimes N-1}$
and $z_{1:N}\sim f_{\vartheta}^{\otimes N}$.
\item With probability 
\[
1\wedge\frac{p(\vartheta)q(\vartheta,\theta)\sum_{j=1}^{N}w(z_{j})}{p(\theta)q(\theta,\vartheta)\left\{ 1+\sum_{j=1}^{N-1}w(x_{j})\right\} },
\]
output $\vartheta$. Otherwise, output $\theta$.\end{enumerate}
\end{algorithm}

\begin{algorithm}[H]
\caption{To sample from $P_{\ref{alg:1_hit_kernel}}(\theta,\cdot)$\label{alg:1_hit_kernel}}

\begin{enumerate}
\item Sample $\vartheta\sim q(\theta,\cdot)$.
\item With probability 
\[
1-\left\{ 1\wedge\frac{p(\vartheta)q(\vartheta,\theta)}{p(\theta)q(\theta,\vartheta)}\right\} ,
\]
stop and output $\theta$.
\item For $i=1,2,\ldots$ until $\sum_{j=1}^{i}w(z_{j})+w(x_{j})\geq1$,
sample $x_{i}\sim f_{\theta}$ and $z_{i}\sim f_{\vartheta}$. Set
$N\leftarrow i$. 
\item If $w(z_{N})=1$, output $\vartheta$. Otherwise, output $\theta$.\end{enumerate}
\end{algorithm}

Our first results in Section~\ref{sec:Theoretical-properties} concern
$P_{\ref{alg:pm_kernel_1},N}$ and $P_{\ref{alg:pm_kernel_2},N}$.
One typically expects better performance from these kernels for larger
values of $N$ \citep[see, e.g, ][]{Andrieu2012}, and such behaviour
can often be demonstrated empirically. However, we establish that
both of these kernels can nevertheless fail to be variance bounding
regardless of the value of $N$ when $q$ proposes moves locally.
This suggests that increasing $N$ may only bring an improvement up
to a certain point. On the other hand, subsequent results for $P_{\ref{alg:1_hit_kernel}}$
show that by expending more computational effort in particular places
one can successfully inherit variance bounding and/or geometric ergodicity
from $P_{{\rm MH}}$, the Metropolis--Hastings kernel with proposal
$q$.

Because many of our positive results for $P_{\ref{alg:1_hit_kernel}}$
are in relation to $P_{{\rm MH}}$, we provide the algorithmic specification
for sampling from $P_{{\rm MH}}$ in Algorithm~\ref{alg:mh_kernel}.
In the approximate Bayesian computation setting, use of $P_{{\rm MH}}$
is ruled out by assumption since $h$ cannot be computed. However,
the preceding kernels are all, in some sense, exact approximations
of $P_{{\rm MH}}$.

\begin{algorithm}[H]
\caption{To sample from $P_{{\rm MH}}(\theta,\cdot)$\label{alg:mh_kernel}}

\begin{enumerate}
\item Sample $\vartheta\sim q(\theta,\cdot)$.
\item With probability 
\[
1\wedge\frac{p(\vartheta)h(\vartheta)q(\vartheta,\theta)}{p(\theta)h(\theta)q(\theta,\vartheta)},
\]
output $\vartheta$. Otherwise, output $\theta$.\end{enumerate}
\end{algorithm}

The kernels share a similar structure, and $P_{\ref{alg:pm_kernel_2},N}$,
$P_{\ref{alg:1_hit_kernel}}$ and $P_{{\rm MH}}$ can each be written
as
\begin{equation}
P(\theta,{\rm d}\vartheta)=q(\theta,{\rm d\vartheta)}\alpha(\theta,\vartheta)+\left\{ 1-\int_{\Theta}q(\theta,{\rm d}\theta')\alpha(\theta,\theta')\right\} \delta_{\theta}({\rm d}\vartheta),\label{eq:alpha_kernel}
\end{equation}
where only the acceptance probability $\alpha(\theta,\vartheta)$
differs. $P_{\ref{alg:pm_kernel_1},N}$ can be represented similarly,
with modifications to account for its evolution on the extended space
$\Theta\times\mathsf{Y}^{N}$. The representation \eqref{eq:alpha_kernel}
is used extensively in our analysis, and we have for $P_{\ref{alg:pm_kernel_2},N}$,
$P_{\ref{alg:1_hit_kernel}}$ and $P_{{\rm MH}}$, respectively 
\begin{align}
\alpha_{\ref{alg:pm_kernel_2},N}(\theta,\vartheta) & =\int_{\mathsf{Y}^{N}}\int_{\mathsf{Y}^{N-1}}\left[1\wedge\frac{c(\vartheta,\theta)\sum_{j=1}^{N}w(z_{j})}{c(\theta,\vartheta)\left\{ 1+\sum_{j=1}^{N-1}w(x_{j})\right\} }\right]f_{\theta}^{\otimes N-1}({\rm d}x_{1:N-1})f_{\vartheta}^{\otimes N}({\rm d}z_{1:N}),\label{eq:alpha_pm2}\\
\alpha_{\ref{alg:1_hit_kernel}}(\theta,\vartheta) & =\left\{ 1\wedge\frac{c(\vartheta,\theta)}{c(\theta,\vartheta)}\right\} \frac{h(\vartheta)}{h(\theta)+h(\vartheta)-h(\theta)h(\vartheta)},\label{eq:alpha_1hit}\\
\alpha_{{\rm MH}}(\theta,\vartheta) & =1\wedge\frac{c(\vartheta,\theta)h(\vartheta)}{c(\theta,\vartheta)h(\theta)},\label{eq:alpha_mh}
\end{align}
where $c(\theta,\vartheta)=p(\theta)q(\theta,\vartheta)$ and \eqref{eq:alpha_1hit}
is obtained, e.g.,\ in \citet{Lee2012a}. Finally, we reiterate that
all the kernels satisfy detailed balance and are therefore reversible.

\section{Theoretical properties\label{sec:Theoretical-properties}}

We assume that $\Theta$ is a metric space, and that 
\begin{equation}
H=\int_{\Theta}p(\theta)h(\theta)\mathrm{d}\theta\label{eq:H}
\end{equation}
satisfies $H\in(0,\infty)$ so $\pi$ is well defined. We allow $p$
to be improper, i.e.,\ for $\int_{\Theta}p(\theta){\rm d}\theta$
to be infinite but when it is proper we assume it is normalized so
$\int_{\Theta}p(\theta){\rm d}\theta=1$. We define the collection
of local proposals as 
\begin{equation}
\mathcal{Q}=\left\{ q\::\:\text{for all }\delta>0,\:\text{there exists }r\in(0,\infty)\:\text{such that for all }\theta\in\Theta,\: q\left(\theta,B_{r,\theta}^{\complement}\right)<\delta\right\} ,\label{eq:defQ}
\end{equation}
which encompasses a broad number of common choices in practice, e.g.,\ $q$
being a random walk. This corresponds to the\emph{ }tightness of centred
proposals $q$.

We denote by $\mathcal{V}$ and $\mathcal{G}$ the collections of
reversible kernels that are respectively variance bounding \citep{roberts2008variance}
and geometrically ergodic \citep[see, e.g., ][]{roberts2004general,MT2009},
noting that $\mathcal{G}\subset\mathcal{V}$. In our analysis, we
make use of the following conditions.
\begin{condition}
\label{cond:c1}The proposal $q$ is a member of $\mathcal{Q}$. In
addition, $\pi\left(B_{r,0}^{\complement}\right)>0$ for all $r>0$
but $\lim_{v\rightarrow\infty}\sup_{\theta\in B_{v,0}^{\complement}}h(\theta)=0$.
\end{condition}

\begin{condition}
a\label{cond:c2}The proposal $q$ is a member of $\mathcal{Q}$.
In addition, for all $K>0$, there exists an $M_{K}\in[1,\infty)$
such that for all $(\theta,\vartheta)$ in the set
\[
\left\{ (\theta,\vartheta)\in\Theta^{2}\::\:\vartheta\in B_{K,\theta}\text{ and }\pi(\theta)q(\theta,\vartheta)\wedge\pi(\vartheta)q(\vartheta,\theta)>0\right\} ,
\]
either $h(\vartheta)/h(\theta)\in[M_{K}^{-1},M_{K}]$ or $c(\vartheta,\theta)/c(\theta,\vartheta)\in[M_{K}^{-1},M_{K}]$.
\end{condition}
Condition~\ref{cond:c1} ensures that the posterior has mass arbitrarily
far from $0$ but that $h(\theta)$ gets arbitrarily small as we move
away from some compact set in $\Theta$, while Condition~\ref{cond:c2}
constrains the interplay between the likelihood and the prior-proposal
pair. For example, it is satisfied for symmetric $q$ when $p$ is
continuous, everywhere positive with exponential or heavier tails,
or alternatively, if the likelihood is continuous, everywhere positive
and decays at most exponentially fast. Conditions~\ref{cond:c1}
and~\ref{cond:c2} are not mutually exclusive.
\begin{rem}
\label{rem:global_remark}A global variant of Condition~\ref{cond:c2}
can be defined where $q$ need not be a member of $\mathcal{Q}$,
but there exists an $M\in[1,\infty)$ such that for all $(\theta,\vartheta)$
in the set $\left\{ (\theta,\vartheta)\in\Theta^{2}\::\:\pi(\theta)q(\theta,\vartheta)\wedge\pi(\vartheta)q(\vartheta,\theta)>0\right\} $,
either $h(\vartheta)/h(\theta)\in[M^{-1},M]$ or $c(\vartheta,\theta)/c(\theta,\vartheta)\in[M^{-1},M]$.
Theorems~\ref{thm:vb}--\ref{thm:ge}, which hold under Condition~\ref{cond:c2},
also hold under this variant, with simplified proofs that are omitted.
\end{rem}
We first provide a general theorem that supplements \citet[Theorem~5.1]{Roberts1996}
for reversible kernels, indicating that lack of geometric ergodicity
due to arbitrarily `sticky' states coincides with lack of variance
bounding. All proofs are housed in Appendix~\ref{sec:Proofs_main}.
\begin{thm}
\label{thm:gen_nvb}For any $\nu$ not concentrated at a single point
and any reversible, irreducible, $\nu$-invariant Markov kernel $P$,
such that $P(\theta,\{\theta\})$ is a measurable function, if $\nu-{\rm ess}\sup_{\theta}P(\theta,\{\theta\})=1$
then $P$ is not variance bounding.
\end{thm}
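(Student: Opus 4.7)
My approach exploits the spectral characterisation already recalled in the introduction: since $P$ is reversible, variance bounding is equivalent to $\sup\sigma_{0}(P)<1$, which via the Dirichlet form
\[
\mathcal{E}(f,f):=\langle(I-P)f,f\rangle_{\nu}=\tfrac{1}{2}\int\int(f(\vartheta)-f(\theta))^{2}P(\theta,\mathrm{d}\vartheta)\nu(\mathrm{d}\theta)
\]
is in turn equivalent to $\inf_{f\in L_{0}^{2}(\nu)\setminus\{0\}}\mathcal{E}(f,f)/\|f\|_{\nu}^{2}>0$. The strategy is therefore to produce a sequence of centred test functions whose Rayleigh quotient tends to $0$.

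For each $\epsilon>0$ I would set $A_{\epsilon}=\{\theta:P(\theta,\{\theta\})\geq 1-\epsilon\}$; these sets are measurable by the hypothesis on $\theta\mapsto P(\theta,\{\theta\})$, and $\nu(A_{\epsilon})>0$ since the essential supremum equals $1$. As test functions, take the centred indicators $f_{\epsilon}=\mathbf{1}_{A_{\epsilon}}-\nu(A_{\epsilon})\in L_{0}^{2}(\nu)$, so that $\|f_{\epsilon}\|_{\nu}^{2}=\nu(A_{\epsilon})(1-\nu(A_{\epsilon}))$. By reversibility (detailed balance between $A_\epsilon$ and $A_\epsilon^{\complement}$), the Dirichlet form of an indicator collapses to the conductance-style expression $\mathcal{E}(f_{\epsilon},f_{\epsilon})=\int_{A_{\epsilon}}P(\theta,A_{\epsilon}^{\complement})\nu(\mathrm{d}\theta)$; since $\theta\in A_{\epsilon}$ implies $A_{\epsilon}^{\complement}\subset\Theta\setminus\{\theta\}$, we get $P(\theta,A_{\epsilon}^{\complement})\leq 1-P(\theta,\{\theta\})\leq\epsilon$. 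Combining, $\mathcal{E}(f_{\epsilon},f_{\epsilon})/\|f_{\epsilon}\|_{\nu}^{2}\leq\epsilon/(1-\nu(A_{\epsilon}))$.

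The crux is then to show $\nu(A_{\epsilon})\to 0$ as $\epsilon\downarrow 0$; since $A_{\epsilon}$ decreases to $A_{0}:=\{\theta:P(\theta,\{\theta\})=1\}$, continuity of $\nu$ from above reduces this to $\nu(A_{0})=0$. This is the main obstacle and the point at which both hypotheses enter: $A_{0}$ is the set of absorbing states, and if $\nu(A_{0})>0$ I would pick $\theta_{0}\in A_{0}$ at which the irreducibility condition applies; because $\nu$ is not concentrated at a single point, $\nu(\{\theta_{0}\}^{\complement})>0$, so $B=\{\theta_{0}\}^{\complement}$ is a set of positive $\nu$-measure that the chain started at the absorbing $\theta_{0}$ can never reach ($P^{n}(\theta_{0},B)=0$ for every $n$), contradicting irreducibility. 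Having established $\nu(A_{0})=0$, one may choose $\epsilon$ small enough that $\nu(A_{\epsilon})\leq 1/2$, giving $\mathcal{E}(f_{\epsilon},f_{\epsilon})/\|f_{\epsilon}\|_{\nu}^{2}\leq 2\epsilon\to 0$, so $\sup\sigma_{0}(P)=1$ and $P$ fails to be variance bounding.
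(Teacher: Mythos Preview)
Your proof is correct and follows essentially the same route as the paper's. Both arguments define the sets $A_{\epsilon}=\{\theta:P(\theta,\{\theta\})\geq 1-\epsilon\}$, use irreducibility together with the assumption that $\nu$ is not a point mass to show $\nu(A_{0})=0$, and then exploit $P(\theta,A_{\epsilon}^{\complement})\leq\epsilon$ on $A_{\epsilon}$ to force the relevant gap to zero. The only cosmetic difference is that the paper packages the final step via conductance and the Lawler--Sokal equivalence $\kappa>0\Leftrightarrow\sup\sigma_{0}(P)<1$, whereas you compute the Rayleigh quotient of the centred indicator directly; your computation $\mathcal{E}(f_{\epsilon},f_{\epsilon})/\|f_{\epsilon}\|_{\nu}^{2}\leq \epsilon/(1-\nu(A_{\epsilon}))$ is precisely the easy direction of that equivalence specialised to $A_{\epsilon}$.
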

Our first result concerning the kernels under study is negative, and
indicates that performance of $P_{\ref{alg:pm_kernel_1},N}$ and $P_{\ref{alg:pm_kernel_2},N}$
under Condition~\ref{cond:c1} can be poor, irrespective of the value
of $N$. 
\begin{thm}
\label{thm:nvb}Under Condition~\ref{cond:c1}, $P_{\ref{alg:pm_kernel_1},N}\notin\mathcal{V}$
and $P_{\ref{alg:pm_kernel_2},N}\notin\mathcal{V}$ for all $N\in\mathbb{N}$. \end{thm}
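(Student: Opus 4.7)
The plan is to invoke Theorem~\ref{thm:gen_nvb} and show that under Condition~\ref{cond:c1} the rejection probability of each kernel can be forced arbitrarily close to $1$ on a set of positive stationary measure. Concretely, for a reversible kernel of the form \eqref{eq:alpha_kernel} the self-transition mass is $P(\theta,\{\theta\}) = 1 - \int q(\theta,\mathrm{d}\vartheta)\alpha(\theta,\vartheta)$, so it suffices to exhibit, for every $\delta>0$, a measurable set $A_\delta\subset\Theta$ with $\pi(A_\delta)>0$ on which $\int q(\theta,\mathrm{d}\vartheta)\alpha(\theta,\vartheta)<\delta$. The same template works on the extended space for $P_{\ref{alg:pm_kernel_1},N}$ since $\bar\pi$ has $\theta$-marginal equal to $\pi$.

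The key mechanical bound is that both $\alpha_{\ref{alg:pm_kernel_1},N}$ and $\alpha_{\ref{alg:pm_kernel_2},N}$ are upper bounded, for any values of the other variables, by $\mathbf{1}\bigl\{\sum_{j=1}^{N} w(z_{j})\geq 1\bigr\}$, and taking expectation over $z_{1:N}\sim f_{\vartheta}^{\otimes N}$ yields
\[
\alpha_\bullet(\theta,\vartheta) \;\leq\; 1-(1-h(\vartheta))^{N} \;\leq\; N\,h(\vartheta),
\]
uniformly in $\theta$ (and in $x_{1:N}$ for the $P_{\ref{alg:pm_kernel_1},N}$ case, where we restrict to the $\bar\pi$-full-measure set $\{\sum_j w(x_j)\geq 1\}$ on which the ratio in $\alpha_{\ref{alg:pm_kernel_1},N}$ is well defined). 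This replaces the awkward likelihood ratio by something controlled purely by how much mass $h$ places near the proposal.

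Now combine with locality: given $\delta>0$, Condition~\ref{cond:c1} gives $r$ with $q(\theta,B_{r,\theta}^{\complement})<\delta/2$ for every $\theta$, and the decay $\lim_{v\to\infty}\sup_{\theta\in B_{v,0}^{\complement}} h(\theta)=0$ gives $v_\delta$ such that $h(\vartheta)<\delta/(2N)$ whenever $\|\vartheta\|\geq v_\delta - r$. For $\theta\in B_{v_\delta,0}^{\complement}$, every $\vartheta\in B_{r,\theta}$ satisfies $\|\vartheta\|\geq\|\theta\|-r\geq v_\delta-r$, so splitting the integral at $B_{r,\theta}$,
\[
\int q(\theta,\mathrm{d}\vartheta)\alpha(\theta,\vartheta) \;\leq\; q(\theta,B_{r,\theta}^{\complement}) + N\sup_{\vartheta\in B_{r,\theta}} h(\vartheta) \;<\; \delta.
\]
By the first half of Condition~\ref{cond:c1}, $\pi(B_{v_\delta,0}^{\complement})>0$, hence the set $A_\delta:=B_{v_\delta,0}^{\complement}$ works, and so $\pi\text{-}\mathrm{ess}\sup_{\theta} P_{\ref{alg:pm_kernel_2},N}(\theta,\{\theta\})=1$. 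The same computation, interpreted on $\Theta\times\mathsf{Y}^{N}$ with the $\theta$-marginal of $\bar\pi$, yields $\bar\pi\text{-}\mathrm{ess}\sup P_{\ref{alg:pm_kernel_1},N}((\theta,x_{1:N}),\{(\theta,x_{1:N})\})=1$. Theorem~\ref{thm:gen_nvb} then completes the proof.

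The only genuinely delicate step is the uniformity of the bound $\alpha\leq N h(\vartheta)$ for $P_{\ref{alg:pm_kernel_1},N}$, because the current auxiliary state $x_{1:N}$ enters the acceptance ratio; the point is that under $\bar\pi$ we may condition on $\sum_j w(x_j)\geq 1$, after which the indicator $\mathbf{1}\{\sum_j w(z_j)\geq 1\}$ dominates the minimum regardless of $x_{1:N}$ and $\theta$, and integration over $z_{1:N}$ gives the clean upper bound needed in the locality/decay argument above.
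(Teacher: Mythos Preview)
Your proof is correct and follows essentially the same route as the paper: invoke Theorem~\ref{thm:gen_nvb}, bound the acceptance probability by the indicator $\mathbf{1}\{\sum_j w(z_j)\geq 1\}$, average over $z_{1:N}$, and then split the proposal into a local part (controlled by the decay of $h$) and a tail part (controlled by $q\in\mathcal{Q}$). The paper works directly with the bound $1-(1-h(\vartheta))^N$ and chooses the radius so that this is below $\tau/2$, whereas you pass through the extra inequality $1-(1-h)^N\leq Nh$; this is a cosmetic difference and the two arguments are otherwise identical, including your observation that $\bar\pi$ is supported on $\{\sum_j w(x_j)\geq 1\}$ for the $P_{\ref{alg:pm_kernel_1},N}$ case.
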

\begin{rem}
Theorem~\ref{thm:nvb} immediately implies that under Condition~\ref{cond:c1},
$P_{\ref{alg:pm_kernel_1},N}\notin\mathcal{G}$ and $P_{\ref{alg:pm_kernel_2},N}\notin\mathcal{G}$
by \citet[Theorem~1]{roberts2008variance}. The former implication
is not covered by \citet[Theorem~8]{Andrieu2009} or \citet[Propositions~9 or~12]{Andrieu2012}
because what they term weights in this context, $w(x)/h(\theta)$,
are upper bounded by $h(\theta)^{-1}$ for $\pi$-almost every $\theta\in\Theta$
and $f_{\theta}$-almost every $x\in\mathsf{Y}$ but are not uniformly
bounded in $\theta$.
\end{rem}
We emphasize that the choice of $q$ is crucial to establishing Theorem
\ref{thm:nvb}. Since $H>0$, if $q(\theta,\vartheta)=g(\vartheta)$,
e.g.,\ and $\sup_{\theta}p(\theta)/g(\theta)<\infty$ then by \citet[Theorem~2.1]{mengersen1996rates},
$P_{\ref{alg:pm_kernel_1},N}$ is uniformly ergodic and hence in $\mathcal{G}$.
Uniform ergodicity, however, does little to motivate the use of an
independent proposal in challenging scenarios, particularly when $\Theta$
is high dimensional.
\begin{rem}
We observe from \eqref{eq:h_defn} that when $\lim_{v\rightarrow\infty}\sup_{\theta\in B_{v,0}^{\complement}}h(\theta)=0$
holds for a given $\epsilon=\epsilon_{0}$, this implies that it holds
for all $\epsilon\in(0,\epsilon_{0}]$. Furthermore, often this condition
holds because $\lim_{v\rightarrow\infty}\sup_{\theta\in B_{v,0}^{\complement}}f_{\theta}(C)=0$
for any compact subset $C$ of $\mathsf{Y}$. In such cases, $\lim_{v\rightarrow\infty}\sup_{\theta\in B_{v,0}^{\complement}}h(\theta)=0$
for any finite $\epsilon>0$ and Theorem~\ref{thm:nvb} will correspondingly
hold for any finite $\epsilon>0$ such that $\pi\left(B_{r,0}^{\complement}\right)>0$
for all $r>0$.
\end{rem}
Our negative result is not exclusive to the particular approximate
Bayesian computation setup considered here. In Appendix \ref{sec:more_neg_results}
we provide supplementary results to indicate that the results can
be extended to the use of autoregressive proposals not covered by
$\mathcal{Q}$, approximations of the likelihood of a more general
form than \eqref{eq:abc_approx_likelihood} and Markov kernels with
an invariant distribution in which $\epsilon$ is a non-degenerate
auxiliary variable, as such cases do arise in practice \citep[see, e.g., ][]{bortot2007inference,sisson2010likelihood}.
However, the following results do not apply to these alternative settings,
since $P_{\ref{alg:1_hit_kernel}}$ lacks an obvious analogue when
the artificial likelihood is not given by \eqref{eq:abc_approx_likelihood}.

Our next three results concern $P_{\ref{alg:1_hit_kernel}}$, and
demonstrate first that variance bounding of $P_{{\rm MH}}$ is a necessary
condition for variance bounding of $P_{\ref{alg:1_hit_kernel}}$,
and further that $P_{{\rm MH}}$ is at least as good as $P_{\ref{alg:1_hit_kernel}}$
in terms of the asymptotic variance of estimates such as \eqref{eq:mcmc_estimate}.
More importantly, and in contrast to $P_{\ref{alg:pm_kernel_1},N}$
and $P_{\ref{alg:pm_kernel_2},N}$, $P_{\ref{alg:1_hit_kernel}}$
can systematically inherit variance bounding and geometric ergodicity
from $P_{{\rm MH}}$ under Condition~\ref{cond:c2}.
\begin{prop}
\label{prop:vb_necessity}$P_{\ref{alg:1_hit_kernel}}$ and $P_{{\rm MH}}$
are ordered in the sense of \citet{Peskun1973} and \citet{Tierney1998},
so $P_{\ref{alg:1_hit_kernel}}\in\mathcal{V}\Rightarrow P_{{\rm MH}}\in\mathcal{V}$
and ${\rm var}(P_{{\rm MH}},\varphi)\leq{\rm var}(P_{\ref{alg:1_hit_kernel}},\varphi)$.\end{prop}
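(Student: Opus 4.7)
The plan is to reduce the proposition to a pointwise comparison of acceptance probabilities and then invoke the standard Peskun–Tierney ordering. Since $P_{\ref{alg:1_hit_kernel}}$ and $P_{{\rm MH}}$ are both reversible, share the same proposal $q$ in the representation \eqref{eq:alpha_kernel}, and differ only in their acceptance functions, it suffices to show that $\alpha_{\ref{alg:1_hit_kernel}}(\theta,\vartheta)\leq\alpha_{{\rm MH}}(\theta,\vartheta)$ for every $(\theta,\vartheta)\in\Theta^{2}$. Once that pointwise inequality is established, the off-diagonal domination $P_{\ref{alg:1_hit_kernel}}(\theta,A)\leq P_{{\rm MH}}(\theta,A)$ for $\theta\notin A$ follows directly, and both conclusions of the proposition are immediate consequences of the ordering results in \citet{Peskun1973} and \citet{Tierney1998}: variance bounding is preserved under domination, and asymptotic variances reverse the ordering.

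For the pointwise inequality I would work with \eqref{eq:alpha_1hit} and \eqref{eq:alpha_mh} and first isolate the factor involving $h$. Writing the denominator in \eqref{eq:alpha_1hit} as $h(\theta)+h(\vartheta)\{1-h(\theta)\}$, I would observe that it is bounded below both by $h(\theta)$ and by $h(\vartheta)$, since $h$ takes values in $[0,1]$; these two bounds together give
\begin{equation*}
\frac{h(\vartheta)}{h(\theta)+h(\vartheta)-h(\theta)h(\vartheta)}\;\leq\;1\wedge\frac{h(\vartheta)}{h(\theta)}.
\end{equation*}
Combining this with the prior-proposal factor in \eqref{eq:alpha_1hit} yields
\begin{equation*}
\alpha_{\ref{alg:1_hit_kernel}}(\theta,\vartheta)\;\leq\;\left\{1\wedge\frac{c(\vartheta,\theta)}{c(\theta,\vartheta)}\right\}\left\{1\wedge\frac{h(\vartheta)}{h(\theta)}\right\}.
\end{equation*}

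The last step is the elementary inequality $(1\wedge x)(1\wedge y)\leq 1\wedge xy$ for $x,y\geq 0$, checked by splitting into the four cases determined by whether each of $x,y$ exceeds $1$. Applying it with $x=c(\vartheta,\theta)/c(\theta,\vartheta)$ and $y=h(\vartheta)/h(\theta)$ produces exactly $\alpha_{{\rm MH}}(\theta,\vartheta)$ as defined in \eqref{eq:alpha_mh}, completing the pointwise bound and hence the proof.

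There is no real obstacle here: the argument is structural and reduces entirely to two short algebraic facts. The only point worth a little care is handling the degenerate cases where some of $h(\theta)$, $h(\vartheta)$, $c(\theta,\vartheta)$, or $c(\vartheta,\theta)$ vanish, for which the ratios should be interpreted via the usual conventions (both sides are zero, or the ratio of $h$-terms collapses to $1\wedge 1=1$), and the inequality is preserved throughout. The Peskun/Tierney step is off-the-shelf and needs no additional assumptions beyond reversibility, which was noted at the end of Section~\ref{sec:Algorithms}.
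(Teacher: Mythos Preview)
Your proposal is correct and follows essentially the same route as the paper: establish the pointwise inequality $\alpha_{\ref{alg:1_hit_kernel}}(\theta,\vartheta)\leq\alpha_{{\rm MH}}(\theta,\vartheta)$ using the fact that the denominator $h(\theta)+h(\vartheta)-h(\theta)h(\vartheta)$ dominates both $h(\theta)$ and $h(\vartheta)$, then invoke Peskun--Tierney (the paper packages this as Lemma~\ref{lem:diagonal_dominance} and cites \citet[Theorem~8]{roberts2008variance} for the variance-bounding implication). The only cosmetic difference is that the paper splits on whether $c(\vartheta,\theta)\lessgtr c(\theta,\vartheta)$, whereas you use the cleaner product bound $(1\wedge x)(1\wedge y)\leq 1\wedge xy$; the content is the same.
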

\begin{thm}
\label{thm:vb}\textup{Under }Condition~\ref{cond:c2}\textup{, $P_{{\rm MH}}\in\mathcal{V}\Rightarrow P_{\ref{alg:1_hit_kernel}}\in\mathcal{V}$}.
\end{thm}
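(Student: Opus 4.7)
The plan is to exploit the Dirichlet form characterisation of variance bounding. For a $\pi$-reversible irreducible kernel $P$, the equivalence $P\in\mathcal{V}\Leftrightarrow \sup\sigma_{0}(P)<1$ recalled in the introduction, combined with the spectral theorem for self-adjoint operators on $L^{2}(\pi)$, gives the Rayleigh-quotient formula $1-\sup\sigma_{0}(P)=\inf\{\mathcal{E}_{P}(\varphi)/\|\varphi\|_{L^{2}(\pi)}^{2}:\varphi\in L_{0}^{2}(\pi),\,\varphi\neq 0\}$, where $\mathcal{E}_{P}(\varphi)=\tfrac{1}{2}\iint(\varphi(\vartheta)-\varphi(\theta))^{2}\pi(\mathrm{d}\theta)P(\theta,\mathrm{d}\vartheta)$. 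Setting $\gamma:=1-\sup\sigma_{0}(P_{{\rm MH}})>0$, it therefore suffices to lower bound $\mathcal{E}_{P_{\ref{alg:1_hit_kernel}}}(\varphi)$ by a positive multiple of $\|\varphi\|_{L^{2}(\pi)}^{2}$ on $L_{0}^{2}(\pi)$.

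The first ingredient is a pointwise comparison of acceptance probabilities. Writing $a=c(\vartheta,\theta)/c(\theta,\vartheta)$ and $b=h(\vartheta)/h(\theta)$, equation~\eqref{eq:alpha_mh} gives $\alpha_{{\rm MH}}=1\wedge(ab)$, and \eqref{eq:alpha_1hit} combined with $h(\theta)\in[0,1]$ (so $b/(1+b-h(\theta)b)\geq(1\wedge b)/2$) gives $\alpha_{\ref{alg:1_hit_kernel}}\geq\tfrac{1}{2}(1\wedge a)(1\wedge b)$. A brief case check over the sign combinations of $(a-1,b-1)$ then yields the elementary inequality $(1\wedge ab)/[(1\wedge a)(1\wedge b)]\leq M$ whenever $\min(a,1/a)\geq M^{-1}$ or $\min(b,1/b)\geq M^{-1}$. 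Invoking Condition~\ref{cond:c2}, for each $K>0$ this produces $\epsilon_{K}:=(2M_{K})^{-1}$ satisfying $\alpha_{\ref{alg:1_hit_kernel}}(\theta,\vartheta)\geq\epsilon_{K}\alpha_{{\rm MH}}(\theta,\vartheta)$ at every $(\theta,\vartheta)$ with $\vartheta\in B_{K,\theta}$ lying in the set appearing in Condition~\ref{cond:c2}; pairs where $c(\vartheta,\theta)$ or $h(\vartheta)$ vanishes force both acceptance probabilities to be zero simultaneously, and pairs with $\pi(\theta)q(\theta,\vartheta)=0$ contribute nothing to any Dirichlet form, so the comparison holds $\pi(\mathrm{d}\theta)q(\theta,\mathrm{d}\vartheta)$-almost everywhere on $\{\vartheta\in B_{K,\theta}\}$.

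Next I localise $P_{{\rm MH}}$ by defining $\tilde{P}_{K}$ to be the kernel obtained from \eqref{eq:alpha_kernel} with acceptance probability $\alpha_{{\rm MH}}(\theta,\vartheta)I(\vartheta\in B_{K,\theta})$, which is again $\pi$-reversible because metric balls are symmetric in their two arguments. The pointwise bound immediately yields $\mathcal{E}_{P_{\ref{alg:1_hit_kernel}}}(\varphi)\geq\epsilon_{K}\mathcal{E}_{\tilde{P}_{K}}(\varphi)$. To compare $\tilde{P}_{K}$ with $P_{{\rm MH}}$, I write
\begin{equation*}
\mathcal{E}_{P_{{\rm MH}}}(\varphi)-\mathcal{E}_{\tilde{P}_{K}}(\varphi)=\tfrac{1}{2}\iint(\varphi(\vartheta)-\varphi(\theta))^{2}\pi(\mathrm{d}\theta)q(\theta,\mathrm{d}\vartheta)\alpha_{{\rm MH}}(\theta,\vartheta)I(\vartheta\notin B_{K,\theta}),
\end{equation*}
bound $(\varphi(\vartheta)-\varphi(\theta))^{2}\leq 2\varphi(\theta)^{2}+2\varphi(\vartheta)^{2}$, and then use detailed balance of $P_{{\rm MH}}$ together with the symmetry $I(\vartheta\notin B_{K,\theta})=I(\theta\notin B_{K,\vartheta})$ to recast the $\varphi(\vartheta)^{2}$ piece as an integral of $\varphi(\theta)^{2}$. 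Bounding $\alpha_{{\rm MH}}\leq 1$ then leaves
\begin{equation*}
\mathcal{E}_{P_{{\rm MH}}}(\varphi)-\mathcal{E}_{\tilde{P}_{K}}(\varphi)\leq 2\int\varphi(\theta)^{2}q(\theta,B_{K,\theta}^{\complement})\pi(\mathrm{d}\theta)\leq 2\sup_{\theta}q(\theta,B_{K,\theta}^{\complement})\|\varphi\|_{L^{2}(\pi)}^{2}.
\end{equation*}

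Finally, since $q\in\mathcal{Q}$, I can choose $K$ large enough that $\sup_{\theta}q(\theta,B_{K,\theta}^{\complement})\leq\gamma/4$. Combined with $\mathcal{E}_{P_{{\rm MH}}}(\varphi)\geq\gamma\|\varphi\|_{L^{2}(\pi)}^{2}$ on $L_{0}^{2}(\pi)$, this yields $\mathcal{E}_{\tilde{P}_{K}}(\varphi)\geq(\gamma/2)\|\varphi\|_{L^{2}(\pi)}^{2}$ and consequently $\mathcal{E}_{P_{\ref{alg:1_hit_kernel}}}(\varphi)\geq(\epsilon_{K}\gamma/2)\|\varphi\|_{L^{2}(\pi)}^{2}$ on $L_{0}^{2}(\pi)$, giving $P_{\ref{alg:1_hit_kernel}}\in\mathcal{V}$ via the Rayleigh-quotient formula. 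I expect the Dirichlet-form step to be the main subtlety: since $\pi(\mathrm{d}\theta)q(\theta,\mathrm{d}\vartheta)$ is in general not symmetric, the uniform tightness supplied by $\mathcal{Q}$ can only be extracted after invoking $P_{{\rm MH}}$-reversibility to symmetrise the $\varphi(\vartheta)^{2}$ contribution. The pointwise comparison, by contrast, reduces to a short case check in the two ratios $a$ and $b$.
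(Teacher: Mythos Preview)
Your argument is correct. The heart of both your proof and the paper's is the same pointwise bound: under Condition~\ref{cond:c2}, for $\vartheta\in B_{K,\theta}$ one has $\alpha_{{\rm MH}}(\theta,\vartheta)\leq C_{K}\,\alpha_{\ref{alg:1_hit_kernel}}(\theta,\vartheta)$ (the paper obtains $C_{K}=1+M_{K}$, you obtain $C_{K}=2M_{K}$; either suffices). This is exactly the content of the paper's Lemma~\ref{lem:kernel_bound}.

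Where you diverge is only in the packaging of this pointwise comparison into a variance-bounding conclusion. The paper works with conductance: from Lemma~\ref{lem:kernel_bound} it gets $\kappa_{{\rm MH}}(A)\leq \sup_{\theta}q(\theta,B_{R,\theta}^{\complement})+(1+M_{R})\kappa_{\ref{alg:1_hit_kernel}}(A)$, then takes $R=r_{q}(\kappa_{{\rm MH}}/2)$ and infimises over $A$. You instead use the Dirichlet form and the Rayleigh-quotient identity $1-\sup\sigma_{0}(P)=\inf_{\varphi\in L_{0}^{2}(\pi)}\mathcal{E}_{P}(\varphi)/\|\varphi\|^{2}$, localise $P_{{\rm MH}}$ to $\tilde{P}_{K}$, and control the remainder via detailed balance and $q\in\mathcal{Q}$. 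These two routes are essentially dual: conductance is the Dirichlet form restricted to indicators, and both are linked to $\sup\sigma_{0}(P)$ by Cheeger's inequality on one side and the spectral theorem on the other. Your Dirichlet-form route gives a direct quantitative lower bound on the spectral gap of $P_{\ref{alg:1_hit_kernel}}$ in terms of that of $P_{{\rm MH}}$, namely $1-\sup\sigma_{0}(P_{\ref{alg:1_hit_kernel}})\geq (2M_{K})^{-1}\gamma/2$ with $K$ chosen from $\gamma$; the paper's conductance bound is simpler to state but would require a Cheeger step to yield an explicit spectral comparison. In short: same key lemma, same localisation via $\mathcal{Q}$, slightly different functional-analytic wrapper.
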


\begin{thm}
\label{thm:ge}\textup{Under }Condition~\ref{cond:c2}\textup{, $P_{{\rm MH}}\in\mathcal{G}\Rightarrow P_{\ref{alg:1_hit_kernel}}\in\mathcal{G}$}.\end{thm}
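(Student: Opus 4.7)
The plan is to transfer geometric ergodicity from $P_{{\rm MH}}$ to $P_{\ref{alg:1_hit_kernel}}$ via a pointwise local comparison of acceptance probabilities, combined with the tightness of $q$ and the mixture inheritance theorem for reversible kernels mentioned in the abstract.

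As a first step I would show, under Condition~\ref{cond:c2}, that for every $K>0$ there exists a constant $c_K\in(0,1]$ such that
\[
\alpha_{\ref{alg:1_hit_kernel}}(\theta,\vartheta)\geq c_K\,\alpha_{{\rm MH}}(\theta,\vartheta)
\]
whenever $\vartheta\in B_{K,\theta}$ and $\pi(\theta)q(\theta,\vartheta)\wedge\pi(\vartheta)q(\vartheta,\theta)>0$. The proof is a short case analysis on the two alternatives in Condition~\ref{cond:c2}: if $h(\vartheta)/h(\theta)\in[M_K^{-1},M_K]$ the factor $h(\vartheta)/\{h(\theta)+h(\vartheta)-h(\theta)h(\vartheta)\}$ in \eqref{eq:alpha_1hit} is uniformly bounded below by a function of $M_K$, while the elementary inequality $1\wedge(ab)\leq(1\wedge a)(1\vee b)$ delivers the matching upper bound $\alpha_{{\rm MH}}(\theta,\vartheta)\leq M_K\{1\wedge c(\vartheta,\theta)/c(\theta,\vartheta)\}$; the other case is symmetric in the roles of $h$ and $c$.

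Next, given $\eta>0$, I would use $q\in\mathcal Q$ to pick $K$ with $q(\theta,B_{K,\theta}^\complement)<\eta$ uniformly in $\theta$, and consider the $\pi$-reversible kernel $\bar P$ obtained from $P_{{\rm MH}}$ by additionally rejecting any proposal landing outside $B_{K,\theta}$. The local comparison then yields the off-diagonal domination $q\alpha_{\ref{alg:1_hit_kernel}}\geq c_K\mathbb{1}_{B_{K,\theta}}q\alpha_{{\rm MH}}$ which, together with $\alpha_{\ref{alg:1_hit_kernel}}\leq\alpha_{{\rm MH}}$ from Proposition~\ref{prop:vb_necessity} and the $\eta$-bound on the tail of $q$, allows one to write $P_{\ref{alg:1_hit_kernel}}=c_K\bar P+(1-c_K)P'$ for some $\pi$-reversible probability kernel $P'$ (after taking $\eta$ small enough that the on-diagonal residual is non-negative). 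The mixture inheritance theorem then reduces the target $P_{\ref{alg:1_hit_kernel}}\in\mathcal G$ to $\bar P\in\mathcal G$.

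The principal obstacle is verifying $\bar P\in\mathcal G$ from $P_{{\rm MH}}\in\mathcal G$. A conductance-type estimate gives $1-\sup\sigma_0(\bar P)\geq 1-\sup\sigma_0(P_{{\rm MH}})-O(\eta)$, so variance bounding of $\bar P$ is immediate for $\eta$ small; the subtler part is bounding $\inf\sigma_0(\bar P)$ strictly above $-1$, as the truncation moves $q\alpha_{{\rm MH}}$-mass from off the diagonal onto it in a $\theta$-dependent manner. I would handle this by exploiting that the transfer only makes $\bar P$ more rejection-heavy than $P_{{\rm MH}}$, so that a direct bound of the form $\langle \bar P f,f\rangle_\pi\geq\langle P_{{\rm MH}}f,f\rangle_\pi-O(\eta)\|f\|_\pi^2$ holds on $L^2_0(\pi)$, preserving the negative-spectrum gap of $P_{{\rm MH}}$ up to a controlled loss; choosing $\eta$ small enough relative to both spectral gaps of $P_{{\rm MH}}$ then yields $\bar P\in\mathcal G$, and hence $P_{\ref{alg:1_hit_kernel}}\in\mathcal G$.
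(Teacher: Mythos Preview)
Your first step---the local comparison $\alpha_{\ref{alg:1_hit_kernel}}(\theta,\vartheta)\geq c_K\,\alpha_{{\rm MH}}(\theta,\vartheta)$ for $\vartheta\in B_{K,\theta}$---is exactly the content of the paper's key inequality \eqref{eq:alpha_bound} in Lemma~\ref{lem:kernel_bound}, with $c_K=(1+M_K)^{-1}$. Your argument for $\bar P\in\mathcal G$ is also essentially sound: the extra rejection only increases the quadratic form, so $\inf\sigma_0(\bar P)\geq\inf\sigma_0(P_{{\rm MH}})$, while conductance drops by at most $\eta$.

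The gap is in the mixture decomposition $P_{\ref{alg:1_hit_kernel}}=c_K\bar P+(1-c_K)P'$. For $P'$ to be a genuine probability kernel you need $P_{\ref{alg:1_hit_kernel}}(\theta,\{\theta\})\geq c_K\bar P(\theta,\{\theta\})$ for every $\theta$. Writing out both sides, the residual equals
\[
(1-c_K)-\int_{B_{K,\theta}}\bigl(\alpha_{\ref{alg:1_hit_kernel}}-c_K\alpha_{{\rm MH}}\bigr)q(\theta,{\rm d}\vartheta)-\int_{B_{K,\theta}^{\complement}}\alpha_{\ref{alg:1_hit_kernel}}\,q(\theta,{\rm d}\vartheta).
\]
Using $\alpha_{\ref{alg:1_hit_kernel}}\leq\alpha_{{\rm MH}}$ the middle term can be as large as $(1-c_K)\int_{B_{K,\theta}}\alpha_{{\rm MH}}q$, so the residual is only guaranteed to exceed $(1-c_K)\bar P(\theta,\{\theta\})-\eta$. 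When $P_{{\rm MH}}(\theta,\{\theta\})$ is zero (or nearly so) and $q(\theta,B_{K,\theta}^{\complement})$ is small, $\bar P(\theta,\{\theta\})$ is itself close to zero and the residual is negative; shrinking $\eta$ does not help because $\eta$ bounds $q(\theta,B_{K,\theta}^{\complement})$ from above, not from below. Thus the decomposition fails in general and Theorem~\ref{thm:mix_geo_nongeo} cannot be invoked.

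The paper sidesteps this entirely: it uses the spectral mapping theorem to recast $P\in\mathcal G$ as positivity of the conductance of $P^2$, and then compares $\kappa_{{\rm MH}}^{(2)}(A)$ and $\kappa_{\ref{alg:1_hit_kernel}}^{(2)}(A)$ directly via the pointwise bounds from Lemmas~\ref{lem:diagonal_dominance}--\ref{lem:kernel_bound} (in particular a Radon--Nikodym bound ${\rm d}P_{{\rm MH}}(\theta,\cdot)/{\rm d}P_{\ref{alg:1_hit_kernel}}(\theta,\cdot)\leq 1+M_R$ on $B_{R,\theta}$), obtaining $\kappa_{{\rm MH}}^{(2)}\leq 2(1+M_R)^2\kappa_{\ref{alg:1_hit_kernel}}^{(2)}$. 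Your ingredients would in fact support a direct Dirichlet-form comparison $\mathcal E_{P_{\ref{alg:1_hit_kernel}}}(f)\geq c_K\bigl(\mathcal E_{P_{{\rm MH}}}(f)-2\eta\|f\|_\pi^2\bigr)$ together with $\langle P_{\ref{alg:1_hit_kernel}}f,f\rangle_\pi\geq\langle P_{{\rm MH}}f,f\rangle_\pi$ from Peskun ordering, which handles both ends of the spectrum without any mixture; but the route through Theorem~\ref{thm:mix_geo_nongeo} as written does not close.
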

\begin{rem}
\label{rmk:preciseness}Proposition~\ref{prop:vb_necessity} and
Theorems~\ref{thm:vb} and~\ref{thm:ge} are precise in the following
sense. There exist models for which $P_{\ref{alg:1_hit_kernel}}\in\mathcal{V}\setminus\mathcal{G}$
and $P_{{\rm MH}}\in\mathcal{V}\setminus\mathcal{G}$ and there exist
models for which $P_{\ref{alg:1_hit_kernel}}\in\mathcal{G}$ and $P_{{\rm MH}}\in\mathcal{V}\setminus\mathcal{G}$,
i.e.,\ under Condition~\ref{cond:c2}, $P_{{\rm MH}}\in\mathcal{V}\nRightarrow P_{\ref{alg:1_hit_kernel}}\in\mathcal{G}$
and $P_{\ref{alg:1_hit_kernel}}\in\mathcal{G}\nRightarrow P_{{\rm MH}}\in\mathcal{G}$.
Section~\ref{eg:compact} illustrates these possibilities. 
\end{rem}

\begin{rem}
\label{rmk:conditions}While Condition~\ref{cond:c2} is only a sufficient
condition, counterexamples can be constructed to show that some assumptions
are necessary for Theorems~\ref{thm:vb}--\ref{thm:ge} to hold.
Condition~\ref{cond:c2} allows us to ensure that $\alpha_{{\rm MH}}(\theta,\vartheta)$
and $\alpha_{\ref{alg:1_hit_kernel}}(\theta,\vartheta)$ differ only
in a controlled manner, for all $\theta$ and `enough' $\vartheta$,
and hence that $P_{{\rm MH}}$ and $P_{\ref{alg:1_hit_kernel}}$ are
not too different. As an example of the possible differences between
$P_{{\rm MH}}$ and $P_{\ref{alg:1_hit_kernel}}$ more generally,
consider the case where $p(\theta)=\tilde{p}(\theta)/\psi(\theta)$
and $h(\theta)=\tilde{h}(\theta)\psi(\theta)$ for some $\psi:\Theta\rightarrow(0,1]$.
Then properties of $P_{{\rm MH}}$ depend only on $\tilde{p}$ and
$\tilde{h}$ whilst those of $P_{\ref{alg:1_hit_kernel}}$ can additionally
be dramatically altered by the choice of $\psi$.
\end{rem}
Theorem~\ref{thm:ge} can be used to provide sufficient conditions
for $P_{\ref{alg:1_hit_kernel}}\in\mathcal{G}$ through $P_{{\rm MH}}\in\mathcal{G}$
and Condition~\ref{cond:c2}. The regular contour condition obtained
in \citet[Theorem~4.3]{jarner2000geometric}, e.g.,\ implies the
following corollary.
\begin{cor}
\label{cor:jarner_hansen}Assume (a) $h$ decays super-exponentially
and $p$ has exponential or heavier tails, or (b) $p$ has super-exponential
tails and $h$ decays exponentially or slower. If, moreover, $\pi$
is continuous and everywhere positive, $q$ is symmetric satisfying
$q(\theta,\vartheta)\geq\varepsilon_{q}$ whenever $|\theta-\vartheta|\leq\delta_{q}$,
for some $\varepsilon_{p},\:\delta_{q}>0$, and 
\begin{equation}
\limsup_{|\theta|\rightarrow\infty}\frac{\theta}{|\theta|}\cdot\frac{\nabla\pi(\theta)}{|\nabla\pi(\theta)|}<0,\label{eq:regular_contours}
\end{equation}
where $\cdot$ denotes the Euclidean scalar product, then $P_{3}\in\mathcal{G}$.
\end{cor}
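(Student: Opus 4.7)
The plan is to apply Theorem~\ref{thm:ge}, so it suffices to show (i) $P_{{\rm MH}}\in\mathcal{G}$ and (ii) Condition~\ref{cond:c2} holds. Step (i) is obtained from \citet[Theorem~4.3]{jarner2000geometric}: the hypotheses of continuity and positivity of $\pi$, symmetry of $q$ with the local lower bound $q(\theta,\vartheta)\geq\varepsilon_{q}$ on $|\theta-\vartheta|\leq\delta_{q}$, and the regular contour assumption \eqref{eq:regular_contours} are directly imported, and only the super-exponential decay of $\pi=ph/H$ needs to be extracted from the tail assumptions. This follows from the identity $\tfrac{\theta}{|\theta|}\cdot\nabla\log\pi(\theta)=\tfrac{\theta}{|\theta|}\cdot\nabla\log p(\theta)+\tfrac{\theta}{|\theta|}\cdot\nabla\log h(\theta)$: under (a) the second term tends to $-\infty$ while the first is bounded below by some constant, and under (b) the roles are reversed, so in either case $\tfrac{\theta}{|\theta|}\cdot\nabla\log\pi(\theta)\to-\infty$.

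For step (ii), the symmetry of $q$ reduces $c(\vartheta,\theta)/c(\theta,\vartheta)$ to $p(\vartheta)/p(\theta)$. I would then show that whichever of $p$ or $h$ has exponential or heavier tails produces a ratio $[\cdot](\vartheta)/[\cdot](\theta)$ that is bounded above and below on balls of radius $K$, uniformly in $\theta$, via the observation that exponential-or-heavier tails for a continuous positive density amount to boundedness of $|\nabla\log(\cdot)|$ outside a compact set, and hence a uniform Lipschitz bound on $\log(\cdot)$ large $|\theta|$; combined with continuity and positivity on compact sets, this gives a constant $M_{K}$ with the required two-sided bound. Specifically, under (a) we take $M_{K}$ to control $p(\vartheta)/p(\theta)$, while under (b) we use the corresponding control of $h(\vartheta)/h(\theta)$. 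The tightness portion $q\in\mathcal{Q}$ required in Condition~\ref{cond:c2} is inherited from the random-walk character of the proposal, since a symmetric density with the stated local lower bound that additionally appears in a Jarner--Hansen setting is already of the form $q(\theta,\vartheta)=g(\vartheta-\theta)$ for a tight symmetric $g$.

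The main obstacle is making precise the passage from the qualitative tail descriptors (``super-exponential'', ``exponential or heavier'', ``exponential or slower'') to the quantitative statements actually needed, namely the $-\infty$ limit for Jarner--Hansen and the local Lipschitz bound on $\log(\cdot)$ for Condition~\ref{cond:c2}. Once these two implications are nailed down the corollary follows immediately: steps (i) and (ii) feed into Theorem~\ref{thm:ge} to yield $P_{\ref{alg:1_hit_kernel}}\in\mathcal{G}$.
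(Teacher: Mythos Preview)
Your proposal is correct and mirrors the paper's own treatment: the paper does not supply a detailed proof but presents the corollary as an immediate consequence of Theorem~\ref{thm:ge}, with $P_{{\rm MH}}\in\mathcal{G}$ delivered by \citet[Theorem~4.3]{jarner2000geometric} and Condition~\ref{cond:c2} handled via the tail discussion given immediately after that condition is stated. Your two-step decomposition and the supporting arguments for each step are exactly this route.
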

Following Remark~\ref{rem:global_remark}, an alternative condition,
independent of the choice of $q$, that ensures inheritance of variance
bounding and geometric ergodicity of $P_{\ref{alg:1_hit_kernel}}$
from $P_{{\rm MH}}$ is that $\inf_{\theta\in\Theta}h(\theta)>0$,
i.e., that $h$ is lower bounded. This condition will usually only
hold when $\Theta$ is compact. Under this condition, both $P_{\ref{alg:pm_kernel_1},N}$
and $P_{\ref{alg:pm_kernel_2},N}$ will also successfully inherit
these properties, the former being already shown in \citet[Proposition~9]{Andrieu2012}
and for $P_{\ref{alg:pm_kernel_2},N}$ the same type of argument can
be used. This allows us to state the following corollary, which can
be verified by the arguments in \citet[Section~3.3]{roberts2004general}.
\begin{cor}
\label{cor:upper_lower_bounded}Let $\Theta$ be compact with $q$,
$p$ and $h$ all continuous, with $\inf_{\theta,\vartheta\in\Theta}q(\theta,\vartheta)>0$
and $\inf_{\theta\in\Theta}h(\theta)>0$. Then $P_{\ref{alg:pm_kernel_1},N}$,
$P_{\ref{alg:pm_kernel_2},N}$ and $P_{\ref{alg:1_hit_kernel}}$ are
all geometrically ergodic.\end{cor}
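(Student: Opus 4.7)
The plan is to establish that $P_{{\rm MH}}$ is uniformly ergodic under the corollary's hypotheses, and then invoke Theorem~\ref{thm:ge} together with known pseudo-marginal inheritance results to transfer geometric ergodicity to the three kernels of interest. Under compactness, continuity and $\inf h > 0$, the functions $p$, $h$ and $q$ are uniformly bounded above and bounded below by positive constants (restricting to the support of $\pi$ if necessary). Consequently the Metropolis--Hastings acceptance probability \eqref{eq:alpha_mh} is bounded below by some $\alpha_{*}>0$, and combined with $q(\theta,\vartheta)\geq \varepsilon_q$ this yields the minorization $P_{{\rm MH}}(\theta,\cdot)\geq \alpha_{*}\varepsilon_q \lambda(\cdot)$, where $\lambda$ is the dominating measure on $\Theta$. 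The whole state space is thus a small set, giving uniform and hence geometric ergodicity of $P_{{\rm MH}}$ by the standard Doeblin argument in \citet[Section~3.3]{roberts2004general}.

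To transfer this to $P_{\ref{alg:1_hit_kernel}}$, I would observe that $h(\theta) \in [h_{\min},1]$ globally implies $h(\vartheta)/h(\theta) \in [h_{\min}, 1/h_{\min}]$ for all $\theta,\vartheta\in\Theta$, placing us in the global variant of Condition~\ref{cond:c2} described in Remark~\ref{rem:global_remark} with $M = 1/h_{\min}$. Theorem~\ref{thm:ge} then gives $P_{\ref{alg:1_hit_kernel}}\in\mathcal{G}$ directly.

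For $P_{\ref{alg:pm_kernel_1},N}$ and $P_{\ref{alg:pm_kernel_2},N}$, the key uniform bound is that the importance weights satisfy $w(x)/h(\theta) \leq 1/h_{\min}$. For $P_{\ref{alg:pm_kernel_1},N}$ this is precisely the setting of \citet[Proposition~9]{Andrieu2012}, which transfers geometric ergodicity from $P_{{\rm MH}}$ to the pseudo-marginal kernel on the extended state space $\Theta\times\mathsf{Y}^{N}$. For $P_{\ref{alg:pm_kernel_2},N}$ one can either adapt that argument --- noting that the additional $+1$ in the denominator of \eqref{eq:alpha_pm2} only strengthens the comparison with $P_{{\rm MH}}$ --- or argue directly: with probability at least $1-(1-h_{\min})^N>0$ the sum $\sum_j w(z_j)\geq 1$, and on this event the ratio inside the $1\wedge(\cdot)$ of \eqref{eq:alpha_pm2} is bounded below by a positive constant by the same arguments as in the first paragraph, so a one-step Doeblin minorization again applies.

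The chief obstacle is the extension to $P_{\ref{alg:pm_kernel_2},N}$, since it is not treated explicitly in \citet{Andrieu2012}; however, given the uniform lower bound on $h$ the required argument collapses to a routine minorization check essentially identical to that for $P_{{\rm MH}}$, so no substantive new difficulty arises.
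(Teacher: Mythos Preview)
Your proposal is correct and mirrors the paper's own route: establish uniform ergodicity of $P_{{\rm MH}}$ via a Doeblin minorization as in \citet[Section~3.3]{roberts2004general}, then transfer to $P_{\ref{alg:1_hit_kernel}}$ via the global variant of Condition~\ref{cond:c2} (Remark~\ref{rem:global_remark}) together with Theorem~\ref{thm:ge}, to $P_{\ref{alg:pm_kernel_1},N}$ via \citet[Proposition~9]{Andrieu2012}, and to $P_{\ref{alg:pm_kernel_2},N}$ by the same type of bounded-weight argument. The paper compresses all of this into a single sentence and, in Remark~\ref{rem:uniformlyergodic}, notes the slightly sharper observation that the ratios $\alpha_{{\rm MH}}/\alpha_i$ are uniformly bounded, so each kernel is in fact uniformly ergodic; your direct minorization for $P_{\ref{alg:pm_kernel_2},N}$ is exactly this.
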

\begin{rem}
\label{rem:uniformlyergodic}In fact, under the conditions of Corollary~\ref{cor:upper_lower_bounded},
$P_{\ref{alg:pm_kernel_1},N}$, $P_{\ref{alg:pm_kernel_2},N}$ and
$P_{\ref{alg:1_hit_kernel}}$ are all uniformly ergodic since the
ratio of the acceptance probabilities $\alpha_{{\rm MH}}(\theta,\vartheta)/\alpha_{i}(\theta,\vartheta)$
is upper bounded by a constant for $i\in\{1,2,3\}$. This suggests
that in approximate Bayesian computation, a conservative choice is
to restrict inference to a compact set $\Theta$ in which $h$ is
lower bounded. 
\end{rem}
The proofs of Theorems~\ref{thm:vb} and~\ref{thm:ge} can also
be extended to cover the case where $\tilde{P}_{{\rm MH}}$ is a finite,
countable or continuous mixture of $P_{{\rm MH}}$ kernels associated
with a collection of proposals $\{q_{s}\}_{s\in S}$ and $\tilde{P}_{\ref{alg:1_hit_kernel}}$
is the corresponding mixture of $P_{\ref{alg:1_hit_kernel}}$ kernels.
With a modification of Condition~\ref{cond:c2}, the following proposition
is stated without proof, and could be used, e.g.,\ in conjunction
with \citet[Theorem~3]{fort2003geometric}.
\begin{condition}
\label{cond:c3}Each proposal $q$ is a member of $\mathcal{Q}$.
In addition, for all $K>0$, there exists an $M_{K}\in[1,\infty)$
such that for all $q_{t}\in\{q_{s}\}_{s\in S}$ and $(\theta,\vartheta)$
in the set
\[
\left\{ (\theta,\vartheta)\in\Theta^{2}\::\:\vartheta\in B_{K,\theta}\text{ and }\pi(\theta)q_{t}(\theta,\vartheta)\wedge\pi(\vartheta)q_{t}(\vartheta,\theta)>0\right\} ,
\]
either $h(\vartheta)/h(\theta)\in[M_{K}^{-1},M_{K}]$ or $c_{t}(\vartheta,\theta)/c_{t}(\theta,\vartheta)\in[M_{K}^{-1},M_{K}]$,
where $c_{t}(\theta,\vartheta)=p(\theta)q_{t}(\theta,\vartheta)$.\end{condition}
\begin{prop}
\label{prop:mixture_kernels}Let $\tilde{P}_{{\rm MH}}(\theta,{\rm d}\vartheta)=\int_{S}\mu({\rm d}s)P_{{\rm MH}}^{(s)}(\theta,{\rm d}\vartheta)$,
where $\mu$ is a mixing distribution on $S$ and each $P_{{\rm MH}}^{(s)}$
is a $\pi$-invariant Metropolis--Hastings kernel with proposal $q_{s}$.
Let\textup{ $\tilde{P}_{\ref{alg:1_hit_kernel}}(\theta,{\rm d}\vartheta)=\int_{S}\mu({\rm d}s)P_{\ref{alg:1_hit_kernel}}^{(s)}(\theta,{\rm d}\vartheta)$
be defined analogously. Then }$\tilde{P}_{\ref{alg:1_hit_kernel}}\in\mathcal{V}\Rightarrow\tilde{P}_{{\rm MH}}\in\mathcal{V}$
and ${\rm var}(\tilde{P}_{{\rm MH}},\varphi)\leq{\rm var}(\tilde{P}_{\ref{alg:1_hit_kernel}},\varphi)$,
and under Condition~\ref{cond:c3}, both \textup{$\tilde{P}_{{\rm MH}}\in\mathcal{V}\Rightarrow\tilde{P}_{\ref{alg:1_hit_kernel}}\in\mathcal{V}$
and $\tilde{P}_{{\rm MH}}\in\mathcal{G}\Rightarrow\tilde{P}_{\ref{alg:1_hit_kernel}}\in\mathcal{G}$.}
\end{prop}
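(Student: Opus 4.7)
The plan is to reduce each of the three assertions to its single-proposal counterpart (Proposition \ref{prop:vb_necessity} and Theorems \ref{thm:vb}--\ref{thm:ge}) and then to exploit the fact that off-diagonal kernel evaluations and Dirichlet forms are linear in the kernel, so they interact cleanly with the mixing measure $\mu$.

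For the unconditional ordering, Proposition \ref{prop:vb_necessity} applied to each proposal $q_s$ gives $P^{(s)}_{\ref{alg:1_hit_kernel}}(\theta,A)\le P^{(s)}_{{\rm MH}}(\theta,A)$ for every $\theta\in\Theta$ and every measurable $A\subset\Theta\setminus\{\theta\}$; indeed $\alpha^{(s)}_{\ref{alg:1_hit_kernel}}\le\alpha^{(s)}_{{\rm MH}}$ off the diagonal because $h(\vartheta)/\{h(\theta)+h(\vartheta)-h(\theta)h(\vartheta)\}\le 1\wedge h(\vartheta)/h(\theta)$. Integrating against $\mu({\rm d}s)$ preserves this inequality, and $\tilde P_{{\rm MH}}$ and $\tilde P_{\ref{alg:1_hit_kernel}}$ are both reversible with respect to $\pi$ because every summand is. The Peskun--Tierney ordering then delivers ${\rm var}(\tilde P_{{\rm MH}},\varphi)\le{\rm var}(\tilde P_{\ref{alg:1_hit_kernel}},\varphi)$ and $\tilde P_{\ref{alg:1_hit_kernel}}\in\mathcal{V}\Rightarrow \tilde P_{{\rm MH}}\in\mathcal{V}$.

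For the two inheritance claims under Condition \ref{cond:c3}, the natural object is the Dirichlet form $\mathcal{E}_P(f)=\tfrac{1}{2}\int\pi({\rm d}\theta)P(\theta,{\rm d}\vartheta)\{f(\vartheta)-f(\theta)\}^2$, which is linear in $P$ and hence satisfies $\mathcal{E}_{\tilde P}(f)=\int_S\mu({\rm d}s)\mathcal{E}_{P^{(s)}}(f)$ for both mixtures. Condition \ref{cond:c3} is precisely Condition \ref{cond:c2} with a single constant $M_K$ valid simultaneously for every $s\in S$. Inspecting the proof of Theorem \ref{thm:vb}, the comparison $\mathcal{E}_{P_{\ref{alg:1_hit_kernel}}}(f)\ge \gamma\,\mathcal{E}_{P_{{\rm MH}}}(f)$ it produces depends on $q$ only through the constants entering Condition \ref{cond:c2}, so applying it per $s$ yields the same inequality with $\gamma>0$ independent of $s$. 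Integrating in $s$ gives $\mathcal{E}_{\tilde P_{\ref{alg:1_hit_kernel}}}(f)\ge \gamma\,\mathcal{E}_{\tilde P_{{\rm MH}}}(f)$, and the Dirichlet-form lower bound equivalent to variance bounding of $\tilde P_{{\rm MH}}$ transfers to $\tilde P_{\ref{alg:1_hit_kernel}}$. The same integration scheme carries the additional ingredient used in the proof of Theorem \ref{thm:ge} to control the negative part of $\sigma_0(P_{\ref{alg:1_hit_kernel}})$ uniformly in $s$, yielding $\sup|\sigma_0(\tilde P_{\ref{alg:1_hit_kernel}})|<1$ whenever $\tilde P_{{\rm MH}}\in\mathcal{G}$.

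The main obstacle, and the reason Condition \ref{cond:c3} is stated with constants uniform over $S$ rather than merely requiring Condition \ref{cond:c2} to hold for each individual $q_s$, is verifying that every constant extracted from the proofs of Theorems \ref{thm:vb}--\ref{thm:ge} can be chosen independently of $s$. Once this uniformity is in hand, the remaining steps reduce to interchanging integration against $\mu({\rm d}s)$ with the relevant quadratic or linear functionals and invoking the spectral characterizations of $\mathcal{V}$ and $\mathcal{G}$ for reversible kernels.
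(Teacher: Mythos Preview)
Your treatment of the Peskun--Tierney part is correct and is exactly the argument the paper has in mind: $\alpha^{(s)}_{\ref{alg:1_hit_kernel}}\le\alpha^{(s)}_{{\rm MH}}$ pointwise (Lemma~\ref{lem:diagonal_dominance}), integrate in $\mu$, and apply \citet{Tierney1998}.

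The inheritance arguments, however, contain a genuine gap. You assert that the proof of Theorem~\ref{thm:vb} yields a Dirichlet-form comparison $\mathcal{E}_{P_{\ref{alg:1_hit_kernel}}}(f)\ge\gamma\,\mathcal{E}_{P_{{\rm MH}}}(f)$. It does not: the proof is a \emph{conductance} argument, and the only pointwise comparison available is \eqref{eq:alpha_bound}, namely $\alpha_{{\rm MH}}(\theta,\vartheta)\le(1+M_r)\alpha_{\ref{alg:1_hit_kernel}}(\theta,\vartheta)$ for $\vartheta\in B_{r,\theta}$. This local bound does not control the Dirichlet form because the contribution from $\vartheta\notin B_{r,\theta}$ carries the factor $\{f(\vartheta)-f(\theta)\}^2$, which can be arbitrarily large; a uniform Dirichlet-form inequality would essentially require the global variant of Condition~\ref{cond:c2} in Remark~\ref{rem:global_remark}, not Condition~\ref{cond:c3}. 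A second issue is that the constant you call $\gamma$ cannot be read off from Condition~\ref{cond:c2} alone: in the proof of Theorem~\ref{thm:vb} the radius $R=r_q(\kappa_{{\rm MH}}/2)$ is chosen \emph{after} the conductance $\kappa_{{\rm MH}}$ is known, so ``applying it per $s$'' with an $s$-independent $\gamma$ is not well defined.

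The route the paper intends (it states the proposition without proof but says the proofs of Theorems~\ref{thm:vb}--\ref{thm:ge} ``can also be extended'') is to integrate the pointwise bound of Lemma~\ref{lem:kernel_bound} in $\mu$ first and then run the conductance argument at the mixture level. Concretely, for each $s$ and any $r>0$,
\[
P_{{\rm MH}}^{(s)}(\theta,A^{\complement})\le \sup_{\theta}q_s(\theta,B_{r,\theta}^{\complement})+(1+M_r)P_{\ref{alg:1_hit_kernel}}^{(s)}(\theta,A^{\complement}),
\]
with $M_r$ uniform in $s$ by Condition~\ref{cond:c3}; integrating in $\mu$ gives the same inequality for $\tilde P_{{\rm MH}}$ and $\tilde P_{\ref{alg:1_hit_kernel}}$ with the first term replaced by $\int_S\mu({\rm d}s)\sup_\theta q_s(\theta,B_{r,\theta}^{\complement})$, which tends to $0$ as $r\to\infty$ by dominated convergence (each $q_s\in\mathcal{Q}$). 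Now choose $R$ so this integral is below $\tilde\kappa_{{\rm MH}}/2$ and proceed exactly as in the proof of Theorem~\ref{thm:vb}. The geometric ergodicity claim follows in the same way: integrate Lemma~\ref{lem:diagonal_dominance} and \eqref{eq:alpha_bound} in $\mu$ to obtain the analogues of the two displayed bounds at the start of the proof of Theorem~\ref{thm:ge} for $\tilde P_{{\rm MH}}$ and $\tilde P_{\ref{alg:1_hit_kernel}}$, and then repeat the conductance-of-$P^2$ computation verbatim.
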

We provide also a general result that can justify, e.g., using $P_{3}$
as one component of a mixture of reversible kernels, of which some
may not be variance bounding or geometrically ergodic.
\begin{thm}
\label{thm:mix_geo_nongeo}Let $\tilde{K}=\sum_{i=1}^{\infty}a_{i}K_{i}$
be a mixture of reversible Markov kernels with invariant distribution
$\pi$ where $\sum_{i=1}^{\infty}a_{i}=1$ and $a_{i}\geq0$ for $i\in\mathbb{N}$.
Let $K_{1}$ have unique invariant distribution $\pi$ and $a_{1}>0$.
Then $K_{1}\in\mathcal{V}\Rightarrow\tilde{K}\in\mathcal{V}$ and
$K_{1}\in\mathcal{G}\Rightarrow\tilde{K}\in\mathcal{G}$.
\end{thm}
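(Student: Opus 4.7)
The plan is to exploit the spectral characterizations recalled in the introduction, $K\in\mathcal{V}\iff\sup\sigma_{0}(K)<1$ and $K\in\mathcal{G}\iff\sup|\sigma_{0}(K)|<1$, together with the fact that each $K_{i}$, being $\pi$-reversible, induces a self-adjoint contraction on $L^{2}(\pi)$ that preserves $L_{0}^{2}(\pi)$. The key observation is that $\tilde{K}$, as a convex combination of such operators, is itself self-adjoint on $L_{0}^{2}(\pi)$, and its Rayleigh quotients are controlled by those of $K_{1}$ up to a slack of $1-a_{1}$.

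Before invoking the spectral characterization I would verify that $\pi$ is the unique invariant distribution of $\tilde{K}$. Reversibility of $\tilde{K}$ is immediate from linearity of detailed balance under countable convex combinations. Since $K_{1}$ is reversible with unique invariant distribution $\pi$, it is $\pi$-irreducible; the pointwise minorization $\tilde{K}(\theta,\cdot)\geq a_{1}K_{1}(\theta,\cdot)$ with $a_{1}>0$ then transfers $\pi$-irreducibility to $\tilde{K}$, forcing $\pi$ to be its unique invariant distribution and making the definitions of $\mathcal{V}$ and $\mathcal{G}$ applicable to $\tilde{K}$.

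For the spectral bound, pick any $f\in L_{0}^{2}(\pi)$ with $\|f\|=1$. Each contraction $K_{i}$ satisfies $\langle f,K_{i}f\rangle\in[-1,1]$, so bounding the $i\geq 2$ terms crudely gives
\begin{equation*}
a_{1}\langle f,K_{1}f\rangle-(1-a_{1})\;\leq\;\langle f,\tilde{K}f\rangle=\sum_{i=1}^{\infty}a_{i}\langle f,K_{i}f\rangle\;\leq\;a_{1}\langle f,K_{1}f\rangle+(1-a_{1}).
\end{equation*}
Taking the supremum and infimum over such $f$ and using the standard identities $\sup\sigma_{0}(T)=\sup_{\|f\|=1}\langle f,Tf\rangle$ and $\inf\sigma_{0}(T)=\inf_{\|f\|=1}\langle f,Tf\rangle$ for self-adjoint $T$ on $L_{0}^{2}(\pi)$, I obtain
\begin{equation*}
\sup\sigma_{0}(\tilde{K})\leq 1-a_{1}\bigl\{1-\sup\sigma_{0}(K_{1})\bigr\},\qquad \inf\sigma_{0}(\tilde{K})\geq -1+a_{1}\bigl\{1+\inf\sigma_{0}(K_{1})\bigr\}.
\end{equation*}
If $K_{1}\in\mathcal{V}$ the first bound gives $\sup\sigma_{0}(\tilde{K})<1$ and hence $\tilde{K}\in\mathcal{V}$; if $K_{1}\in\mathcal{G}$ both bounds apply and force $\sup|\sigma_{0}(\tilde{K})|<1$, i.e.\ $\tilde{K}\in\mathcal{G}$. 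I expect the only conceptual step to be the preliminary irreducibility argument used to secure uniqueness of $\pi$ for $\tilde{K}$; the spectral estimate itself is elementary once the convex combination structure and the $L_{0}^{2}(\pi)$-contractivity of each $K_{i}$ are exploited.
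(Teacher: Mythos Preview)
Your proof is correct and takes a genuinely different route from the paper. The paper first collapses the mixture to two components by setting $L_{2}=(1-a_{1})^{-1}\sum_{i\geq2}a_{i}K_{i}$ and then argues via conductance: for variance bounding it shows $\tilde{\kappa}\geq a_{1}\kappa_{1}$, and for geometric ergodicity it passes to the two-step kernel $\tilde{K}^{2}$ (invoking the spectral mapping theorem) and bounds its conductance below by $a_{1}^{2}$ times that of $K_{1}^{2}$. You instead work directly with Rayleigh quotients on $L_{0}^{2}(\pi)$ and exploit the convex-combination structure together with the contractivity of each $K_{i}$; this yields the explicit inequalities $1-\sup\sigma_{0}(\tilde{K})\geq a_{1}\{1-\sup\sigma_{0}(K_{1})\}$ and $1+\inf\sigma_{0}(\tilde{K})\geq a_{1}\{1+\inf\sigma_{0}(K_{1})\}$, which are sharper than what one would recover from the conductance bound via Cheeger-type inequalities, and handle the geometric ergodicity case in a single step without squaring the kernel. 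The paper's conductance argument is consistent with the machinery used in its other proofs, but for this particular statement your spectral approach is both more elementary and quantitatively cleaner.
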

While the sampling of a random number of auxiliary variables in the
implementation of $P_{\ref{alg:1_hit_kernel}}$ appears to be helpful
in inheriting qualitative properties of $P_{{\rm MH}}$, one may be
concerned that the computational effort associated with the kernel
can be unbounded. Our final result indicates that this is not the
case whenever $p$ is proper.
\begin{prop}
\label{prop:n}Let $(N_{i})$ be the sequence of random variables
associated with step 3 of Algorithm~\ref{alg:1_hit_kernel} if one
iterates $P_{\ref{alg:1_hit_kernel}}$, with $N_{j}=0$ if at iteration
$j$ the kernel outputs at step 2. Then if $\int p(\theta)\mathrm{d}\theta=1$,
$H>0$, and $P_{\ref{alg:1_hit_kernel}}$ is irreducible, 
\[
n=\lim_{m\rightarrow\infty}m^{-1}\sum_{i=1}^{m}N_{i}\leq H^{-1}<\infty.
\]

\end{prop}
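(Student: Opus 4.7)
The plan is to apply the Birkhoff ergodic theorem to the $\pi$-invariant irreducible chain generated by $P_{\ref{alg:1_hit_kernel}}$, identify the stationary expectation of $N_i$, and then exploit a cancellation between the hitting probability appearing in the denominator and the density $\pi=hp/H$ appearing in the numerator.

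First, I would compute $\bar{n}(\theta):=E[N_i\mid\theta_{i-1}=\theta]$. Conditional on reaching step~3 with proposal $\vartheta$, the number $N$ of draws is geometric with success probability $\gamma(\theta,\vartheta):=h(\theta)+h(\vartheta)-h(\theta)h(\vartheta)$, since at each index $j$ the event $\{w(z_j)=1\}\cup\{w(x_j)=1\}$ has exactly this probability and the pairs are independent. Hence $E[N\mid\theta,\vartheta,\text{reach step 3}]=1/\gamma(\theta,\vartheta)$, and step~2 contributes the factor $1\wedge\{c(\vartheta,\theta)/c(\theta,\vartheta)\}$, giving
\[
\bar{n}(\theta)\;=\;\int_{\Theta}q(\theta,\mathrm{d}\vartheta)\left[1\wedge\frac{c(\vartheta,\theta)}{c(\theta,\vartheta)}\right]\frac{1}{\gamma(\theta,\vartheta)}.
\]

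Next, by irreducibility and $\pi$-invariance, Birkhoff's ergodic theorem yields $n=\int\pi(\theta)\bar{n}(\theta)\mathrm{d}\theta$ almost surely (with value in $[0,\infty]$; to bypass integrability of $N_i$ itself one can first truncate at $K$, apply the bounded ergodic theorem, and let $K\to\infty$ by monotone convergence). Using $\pi(\theta)=h(\theta)p(\theta)/H$ and the Metropolis--Hastings rearrangement
\[
p(\theta)q(\theta,\vartheta)\left[1\wedge\frac{c(\vartheta,\theta)}{c(\theta,\vartheta)}\right]\;=\;c(\theta,\vartheta)\wedge c(\vartheta,\theta),
\]
this becomes
\[
n\;=\;\frac{1}{H}\int_{\Theta}\int_{\Theta}\frac{h(\theta)}{\gamma(\theta,\vartheta)}\bigl[c(\theta,\vartheta)\wedge c(\vartheta,\theta)\bigr]\,\mathrm{d}\vartheta\,\mathrm{d}\theta.
\]

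The key observation is that $\gamma(\theta,\vartheta)=h(\theta)+h(\vartheta)\{1-h(\theta)\}\geq h(\theta)$, so $h(\theta)/\gamma(\theta,\vartheta)\leq 1$. Combined with $c(\theta,\vartheta)\wedge c(\vartheta,\theta)\leq c(\theta,\vartheta)$ and Fubini,
\[
n\;\leq\;\frac{1}{H}\int_{\Theta}\int_{\Theta}c(\theta,\vartheta)\,\mathrm{d}\vartheta\,\mathrm{d}\theta\;=\;\frac{1}{H}\int_{\Theta}p(\theta)\,\mathrm{d}\theta\;=\;\frac{1}{H},
\]
where properness of $p$ is used in the last equality, and $H>0$ makes the bound finite.

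The main obstacle is the technical justification that the ergodic theorem legitimately converts the pathwise average of $N_i$ (which depends on both $\theta_{i-1}$ and auxiliary randomness, and is not uniformly bounded) into $\int\pi\bar{n}$; the truncation argument above handles this cleanly because we only need an upper bound. Everything else reduces to algebra driven by the elementary inequality $h(\theta)\leq\gamma(\theta,\vartheta)$, which is precisely what makes the $h$-factor in $\pi$ cancel the geometric mean lifetime in step~3.
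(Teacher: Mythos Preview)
Your argument is correct and follows essentially the same route as the paper: compute the conditional expectation of $N$ given the current state as a geometric mean $1/\gamma(\theta,\vartheta)$ weighted by the step-2 acceptance probability, apply the ergodic theorem to identify $n=\int\pi(\theta)\bar{n}(\theta)\,\mathrm{d}\theta$, and then use $h(\theta)\leq\gamma(\theta,\vartheta)$ together with $\int p=1$ to obtain the bound $H^{-1}$. Your truncation-and-monotone-convergence justification for the ergodic limit is in fact more careful than the paper's, which simply invokes the strong law of large numbers for Markov chains.
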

When $p$ is proper, $H$ is a natural quantity; if $n_{R}$ is the
expected number of proposals to obtain a sample from $\pi$ using
the rejection sampler of \citet{Pritchard1999} we have $n_{R}=H^{-1}$,
and if we construct $P_{\ref{alg:pm_kernel_1},N}$ with proposal $q(\theta,\vartheta)=p(\vartheta)$
then $H$ lower bounds its spectral gap. In fact, $n$ can be arbitrarily
smaller than $n_{R}$, as we illustrate in Section~\ref{eg:compact},
and on a realistic example in Section~\ref{eg:lotka_volterra} the
average number of samples required per iteration was much smaller
than $H^{-1}$.

One potential issue with all three of the kernels $P_{\ref{alg:pm_kernel_1},N}$,
$P_{\ref{alg:pm_kernel_2},N}$ and $P_{\ref{alg:1_hit_kernel}}$,
when implemented using local proposals, is that their performance
for a fixed computational budget will be poor if the Markov chain
is initialized in a region of the state space with little posterior
mass. This can be circumvented by trying to identify regions of high
posterior mass and initializing the chain at a point in such a region.
Finally, Remark~\ref{rem:uniformlyergodic} suggests that a conservative
choice is to let $\Theta$ be a compact set in which $h$ is lower
bounded, and would contain most of the interesting values of $\theta$.

\section{Examples\label{sec:Examples}}

\subsection{A posterior with compact support\label{eg:compact}}

We begin with a simple example that clarifies comments in Remark~\ref{rmk:preciseness}
and some of those following Proposition~\ref{prop:n}. In particular,
$\theta\in\Theta=\mathbb{R}_{+}$, $p(\theta)=I\left(0\leq\theta\leq a\right)/a$
and $h(\theta)=bI\left(0\leq\theta\leq1\right)$ for $(a,b)\in[1,\infty)\times(0,1]$,
with $\pi$ supported on $[0,1]$.

We have $H^{-1}=a/b$ and $n\leq b^{-1}$ for any $q$ so $n_{R}/n\geq a$.
Furthermore, even if $p$ is improper, $n$ is finite. Regarding Remark~\ref{rmk:preciseness},
for any $a\geq1$, consider the proposal $q(\theta,\vartheta)=2I\left(0\leq\theta\leq1/2\right)I\left(1/2<\vartheta\leq1\right)+2I\left(1/2<\theta\leq1\right)I\left(0\leq\vartheta\leq1/2\right)$.
If $b=1$, then $P_{\ref{alg:1_hit_kernel}}\in\mathcal{V}\setminus\mathcal{G}$
and $P_{{\rm MH}}\in\mathcal{V}\setminus\mathcal{G}$. However, if
$b\in(0,1)$ then $P_{\ref{alg:1_hit_kernel}}\in\mathcal{G}$ and
$P_{{\rm MH}}\in\mathcal{V}\setminus\mathcal{G}$.

\subsection{Geometric distribution\label{eg:geometric}}

We consider the situation where $\theta\in\Theta=\mathbb{Z}_{+}$,
$p(\theta)=I\left(\theta\in\mathbb{N}\right)(1-a)a^{\theta-1}$ and
$h(\theta)=b^{\theta}$ for $(a,b)\in(0,1)^{2}$. The posterior $\pi$
is a geometric distribution with success parameter $1-ab$ and geometric
series manipulations provided in Appendix~\ref{sec:eg1_calculations}
give the expected number of proposals needed in the rejection sampler
$n_{R}=(1-ab)/\left\{ b(1-a)\right\} $. If $q(\theta,\vartheta)=\left\{ I(\vartheta=\theta-1)+I(\vartheta=\theta+1)\right\} /2$,
we have 
\begin{equation}
\frac{(1-ab)}{2}\left\{ \frac{(a+b)}{b(1-a)(1+b)}-1\right\} \leq n\leq\frac{(1-ab)}{2}\left\{ \frac{a+b}{b(1-a)}-1\right\} ,\label{eq:n_bounds_eg1}
\end{equation}
where $n$ is as in Proposition~\ref{prop:n}, and so $n_{R}/n\geq2/\left\{ a(1+b)\right\} $,
which grows without bound as $a\rightarrow0$. Regarding the propriety
condition on $p$, we observe that $n_{R}\rightarrow\infty$ and $n\rightarrow\infty$
as $a\rightarrow1$ with $b$ fixed.

To supplement the qualitative results regarding variance bounding
and geometric ergodicity of the kernels, we investigated a modification
of this example with a finite number of states. More specifically,
we considered the case where the prior is truncated to the set $\{1,\ldots,D\}$
for some $D\in\mathbb{N}$. In this context, we can calculate explicit
transition probabilities and hence spectral gaps $1-|\sigma_{0}(P)|$
and asymptotic variances ${\rm var}(P,\varphi)$ of \eqref{eq:mcmc_estimate}
for $P_{\ref{alg:pm_kernel_2},N}$, $P_{\ref{alg:1_hit_kernel}}$
and $P_{{\rm MH}}$. Figure~\ref{fig:log_spectral_gaps} shows the
$\log$ spectral gaps for a range of values of $D$ for each kernel
and $b\in\{0.1,0.5,0.9\}$. We can see the spectral gaps of $P_{\ref{alg:1_hit_kernel}}$
and $P_{{\rm MH}}$ stabilize, whilst those of $P_{\ref{alg:pm_kernel_2},N}$
decrease exponentially fast in $D$, albeit with some improvement
for larger $N$. The spectral gaps obtained, with \eqref{eq:ge_tv},
suggest that the convergence of $P_{\ref{alg:pm_kernel_2},N}$ to
$\pi$ can be extremely slow for some $\theta_{0}$ even when $D$
is relatively small. Indeed, in this finite, discrete setting with
reversible $P$, the bounds 
\[
\frac{1}{2}\left\{ \max|\sigma_{0}(P)|\right\} ^{m}\leq\max_{\theta_{0}}\Vert\pi(\cdot)-P^{m}(\theta_{0},\cdot)\Vert_{{\rm TV}}\leq\frac{1}{2}\left\{ \max|\sigma_{0}(P)|\right\} ^{m}\left\{ \frac{1-\min_{\theta}\pi(\theta)}{\min_{\theta}\pi(\theta)}\right\} ^{1/2}
\]
hold \citep[Section~2 and Theorem~5.9]{montenegro2006mathematical},
which clearly indicate that $P_{\ref{alg:pm_kernel_2},N}$ can converge
exceedingly slowly when $P_{\ref{alg:1_hit_kernel}}$ and $P_{{\rm MH}}$
converge reasonably quickly. The value of $n$ in these cases stabilized
at $4.77$, $0.847$ and $0.502$ for $b\in\{0.1,0.5,0.9\}$ respectively,
within the bounds of \eqref{eq:n_bounds_eg1}, and considerably smaller
than $100$.

\begin{figure}
\begin{centering}
\subfloat[$b=0.1$]{\includegraphics[scale=0.25]{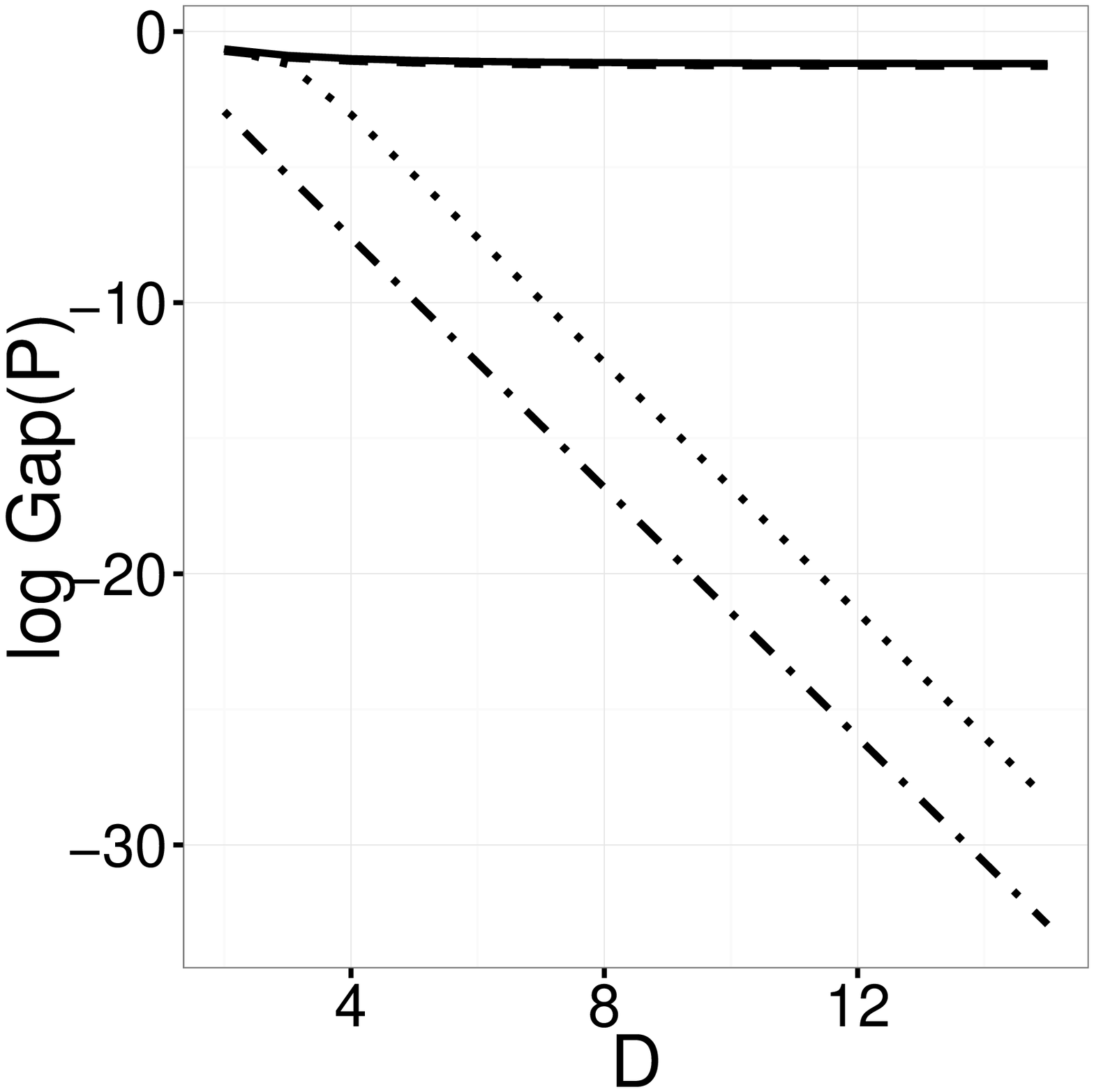}}\subfloat[$b=0.5$]{\includegraphics[scale=0.25]{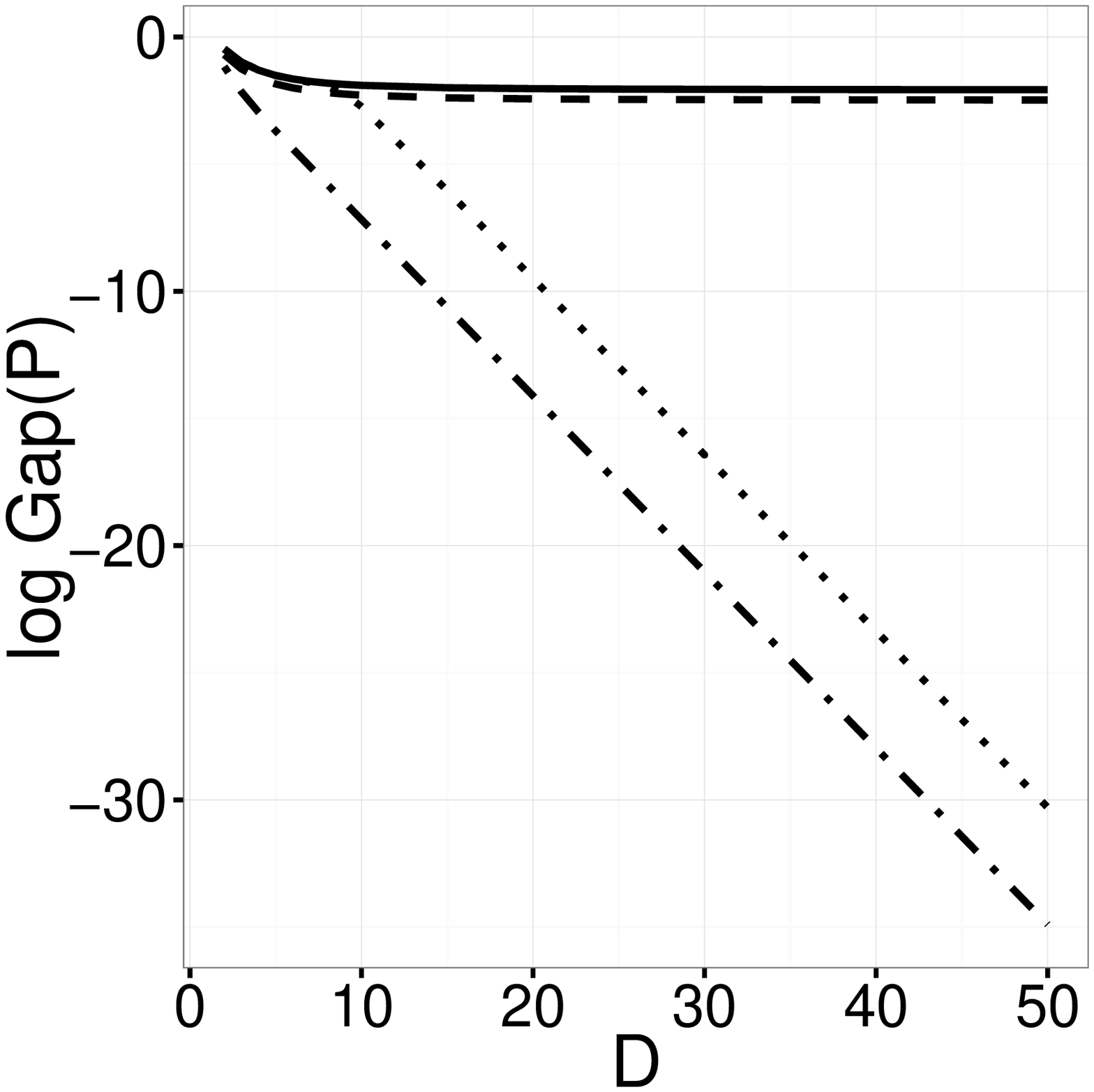}}\subfloat[$b=0.9$]{\includegraphics[scale=0.25]{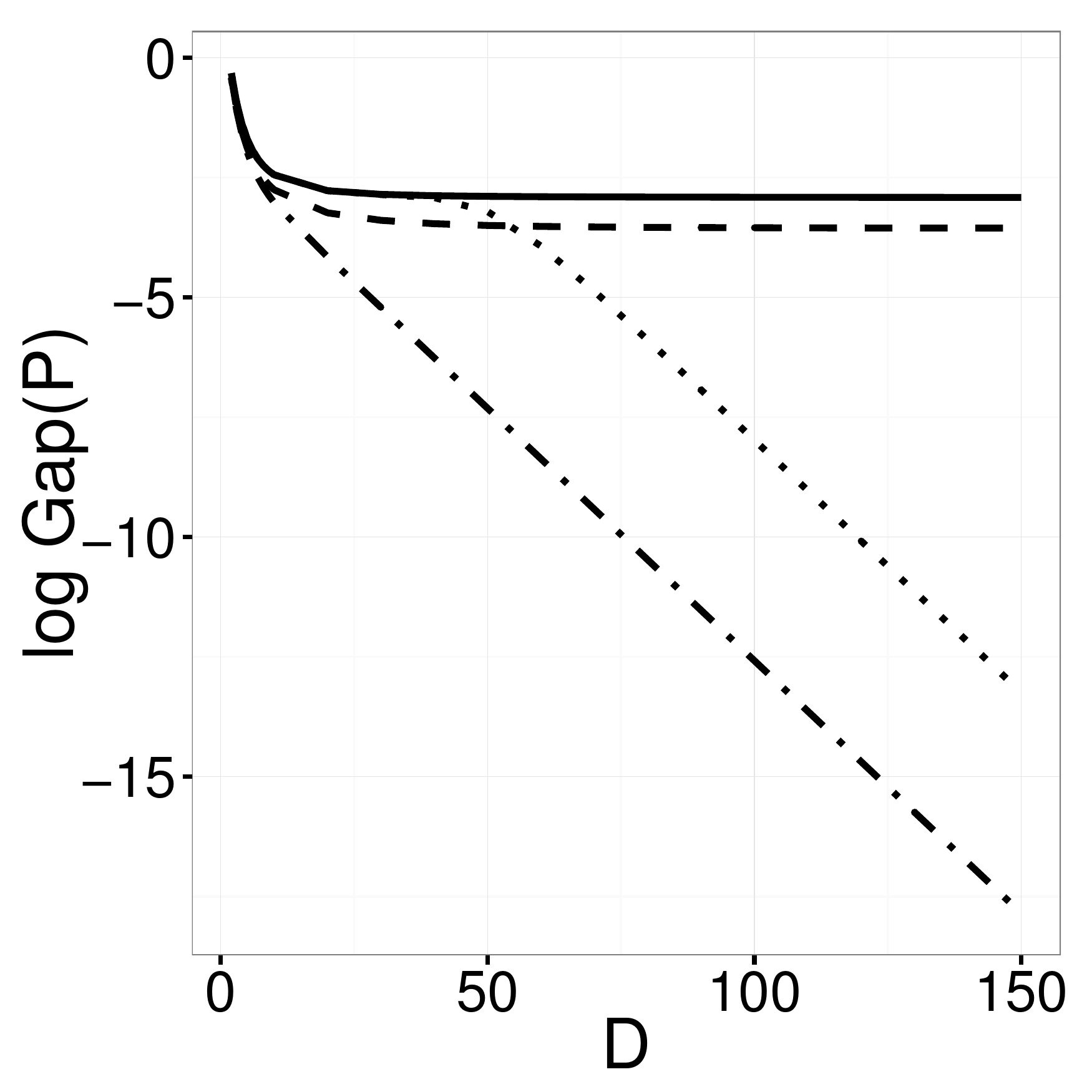}}
\par\end{centering}

\caption{Plot of the $\log$ spectral gap against $D$ for $P_{\ref{alg:pm_kernel_2},1}$
(dot-dashed), $P_{\ref{alg:pm_kernel_2},100}$ (dotted), $P_{\ref{alg:1_hit_kernel}}$
(dashed) and $P_{{\rm MH}}$ (solid), with $a=0.5$.\label{fig:log_spectral_gaps}}
\end{figure}

Figures~\ref{fig:log_asymp_variance} and~\ref{fig:log_asymp_variance_2}
show $\log{\rm var}(P,\varphi)$ against $D$ for $\varphi_{1}(\theta)=\theta$
and $\varphi_{2}(\theta)=(ab)^{-\theta/2.1}$, respectively, computed
using the expression of \citet[p. 84]{Kemeny1969}. The choice of
$\varphi_{2}$ is motivated by the fact when $p$ is not truncated,
$\varphi(\theta)=(ab)^{-\theta/(2+\delta)}$ is in $L^{2}(\pi)$ if
and only if $\delta>0$. While ${\rm var}(P,\varphi_{1})$ is stable
for all the kernels, ${\rm var}(P,\varphi_{2})$ increases rapidly
with $D$ for $P_{\ref{alg:pm_kernel_2},1}$ and $P_{\ref{alg:pm_kernel_2},100}$.
While ${\rm var}(P_{\ref{alg:pm_kernel_2},N},\varphi_{1})$ can be
lower than ${\rm var}(P_{\ref{alg:1_hit_kernel}},\varphi_{1})$, the
former requires many more simulations from the likelihood. Indeed,
while the results we have obtained pertain to qualitative properties
of the Markov kernels, this example illustrates that $P_{\ref{alg:1_hit_kernel}}$
can significantly outperform $P_{\ref{alg:pm_kernel_2},100}$ for
estimating even the more well-behaved $\pi(\varphi_{1})$, when cost
per iteration of each kernel is taken into account.

\begin{figure}
\begin{centering}
\subfloat[$b=0.1$]{\includegraphics[scale=0.25]{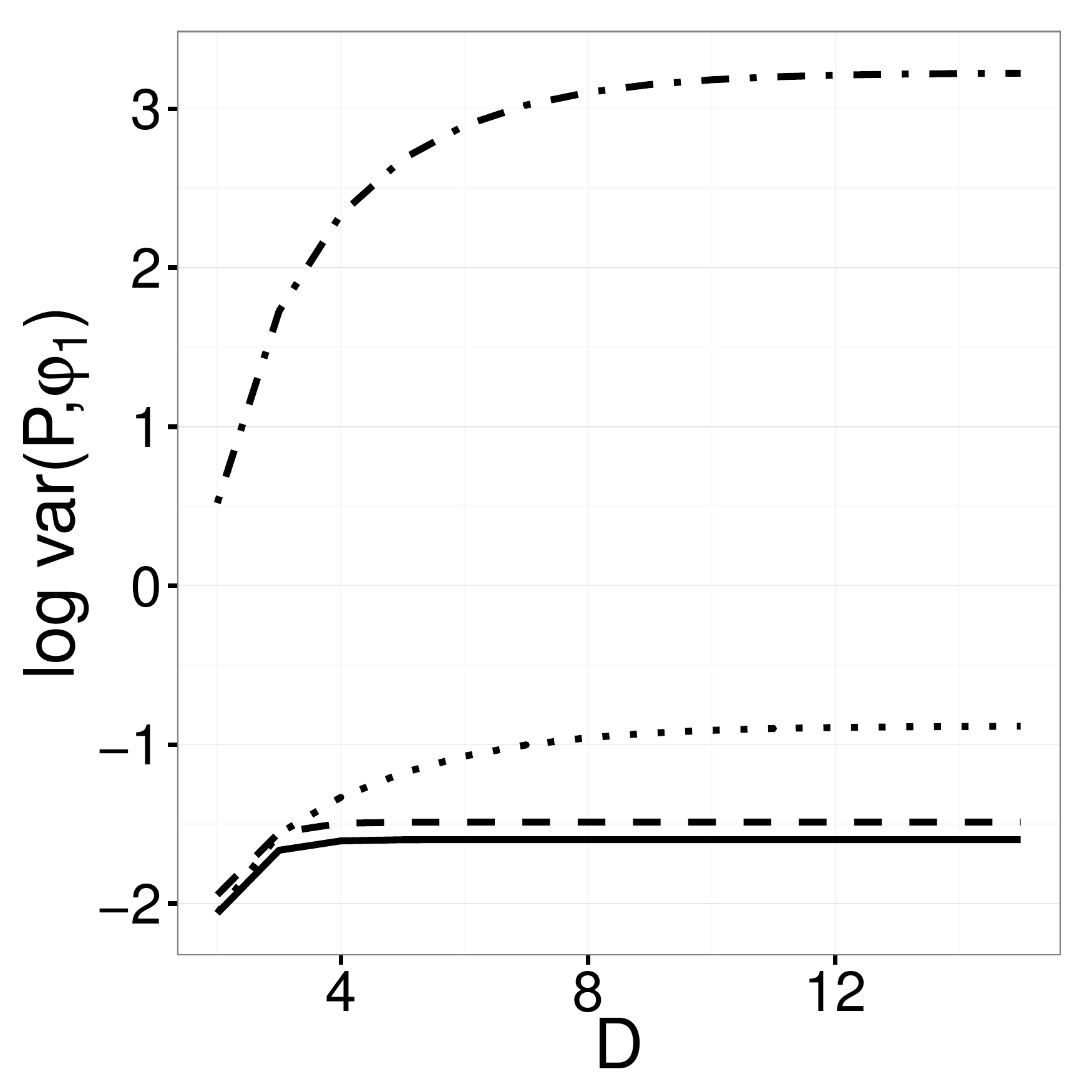}}\subfloat[$b=0.5$]{\includegraphics[scale=0.25]{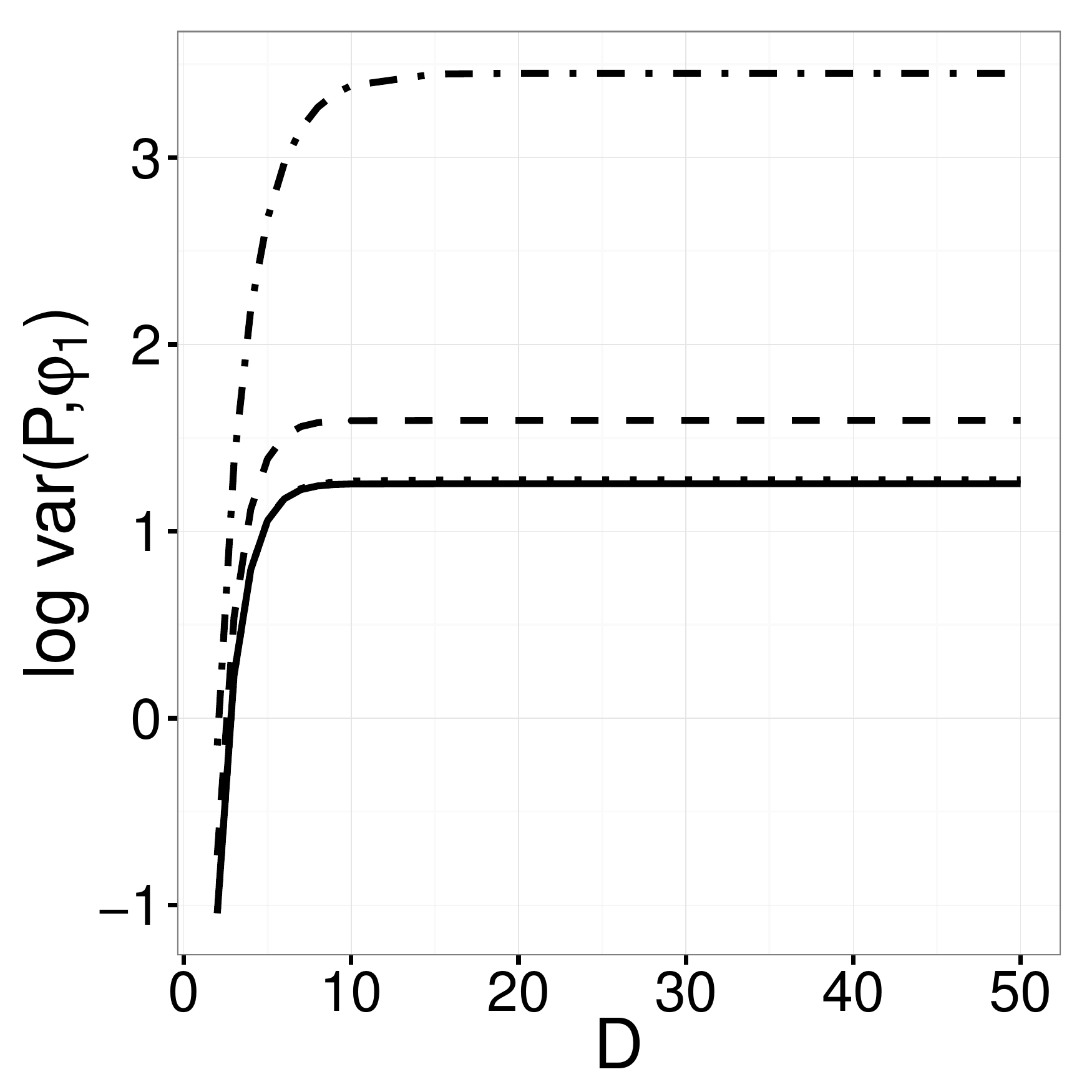}}\subfloat[$b=0.9$]{\includegraphics[scale=0.25]{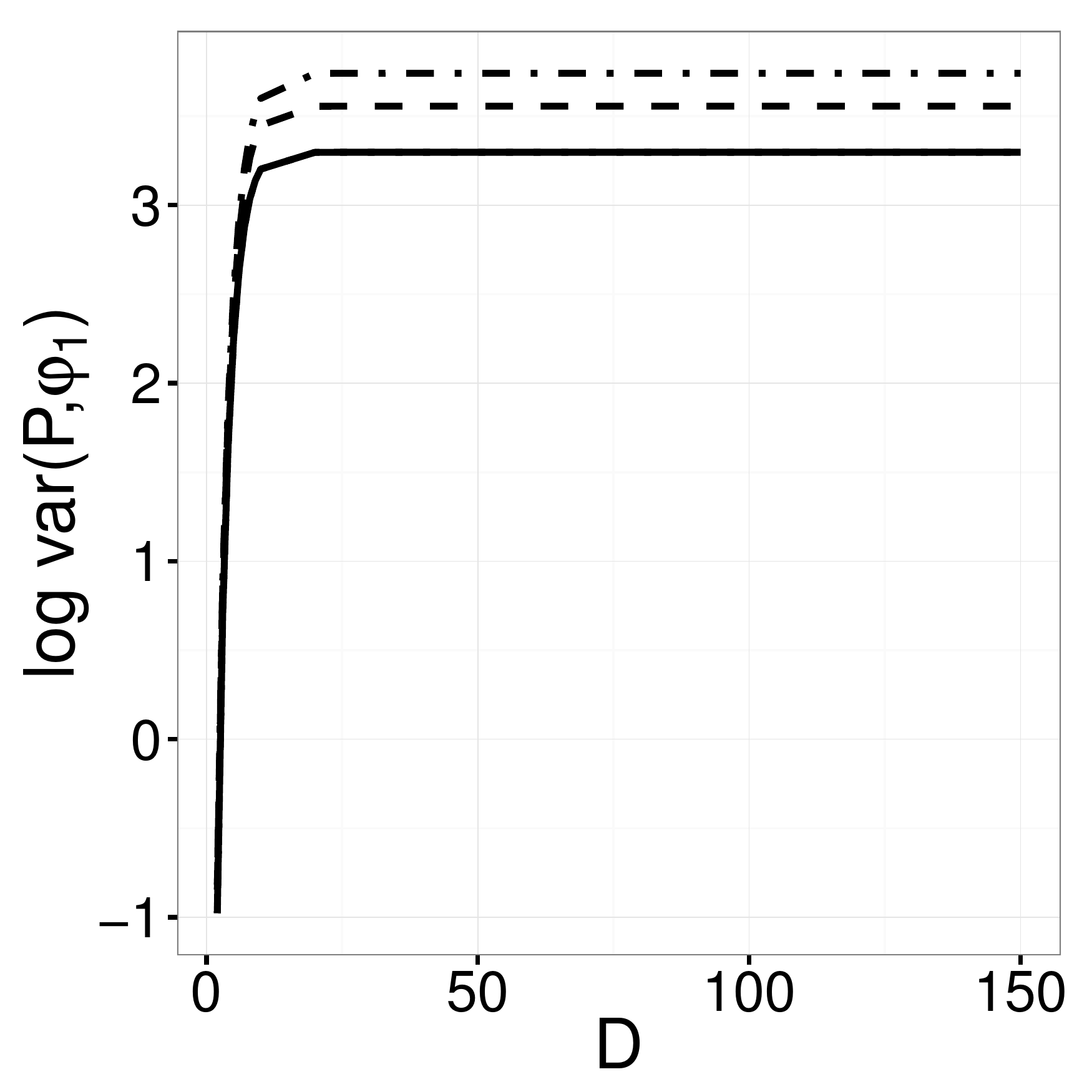}}
\par\end{centering}

\caption{Plot of $\log{\rm var}(P,\varphi_{1})$ against $D$ for $P=P_{\ref{alg:pm_kernel_2},1}$
(dot-dashed), $P=P_{\ref{alg:pm_kernel_2},100}$ (dotted), $P=P_{\ref{alg:1_hit_kernel}}$
(dashed) and $P=P_{{\rm MH}}$ (solid), with $a=0.5$.\label{fig:log_asymp_variance}}
\end{figure}

\begin{figure}
\begin{centering}
\subfloat[$b=0.1$]{\includegraphics[scale=0.25]{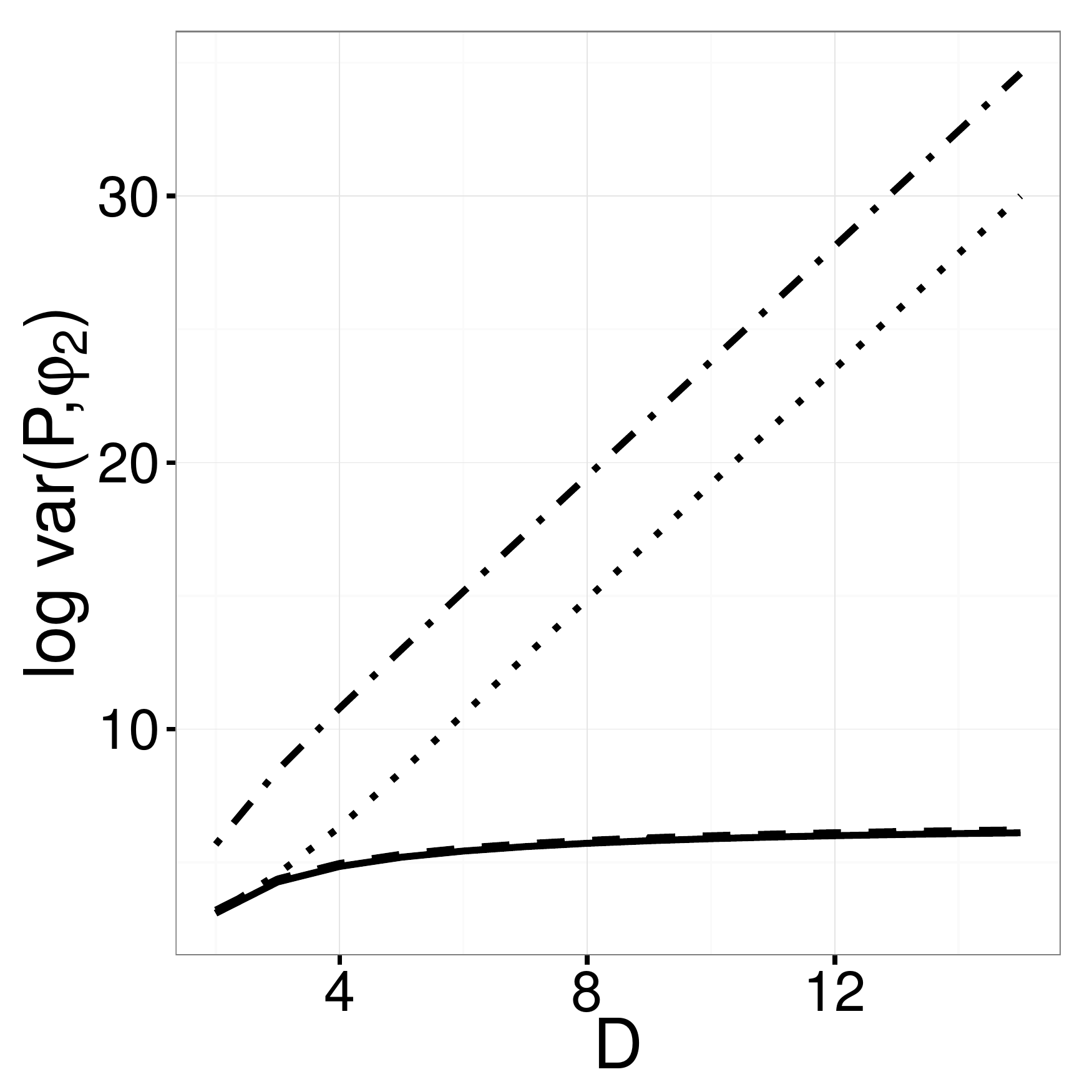}}\subfloat[$b=0.5$]{\includegraphics[scale=0.25]{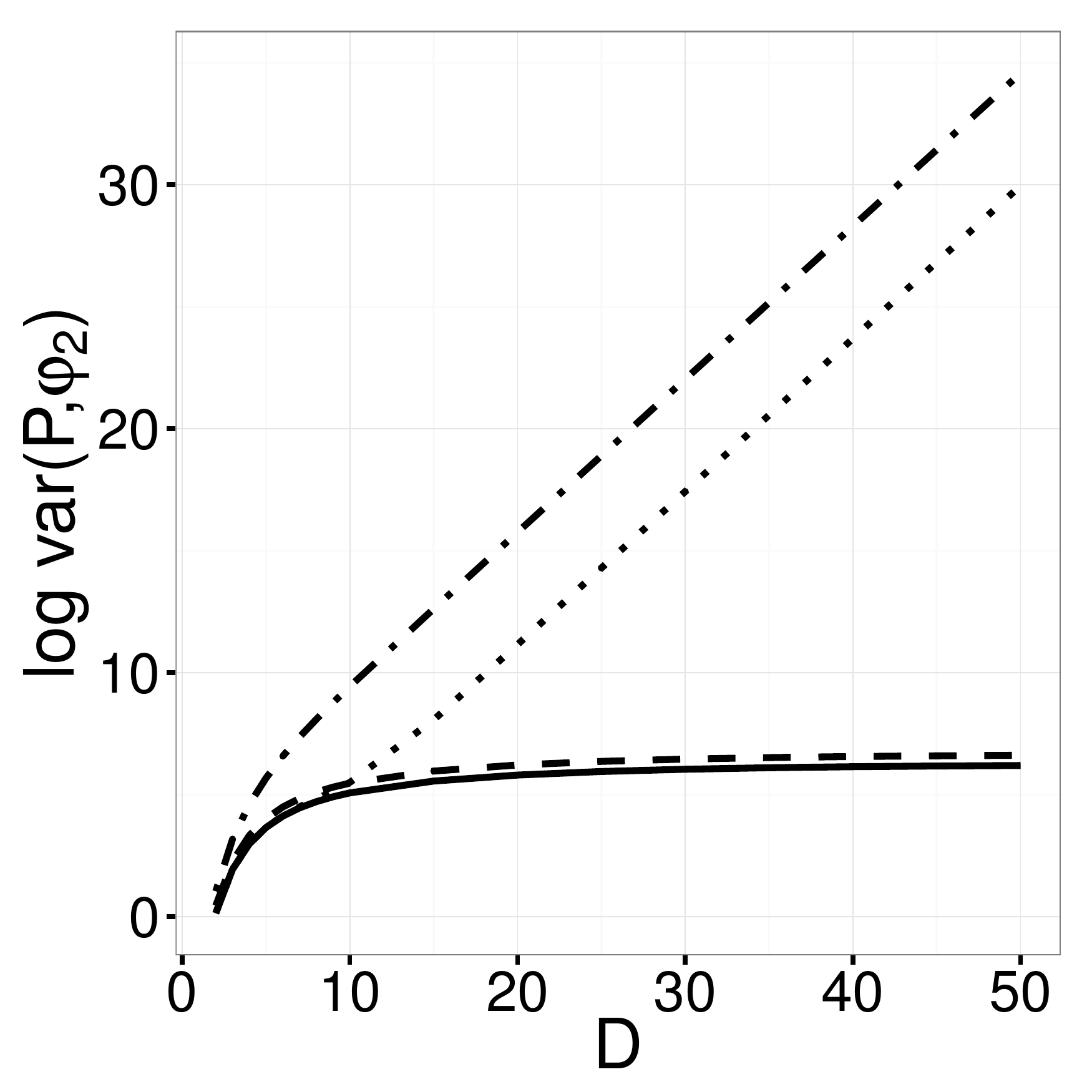}}\subfloat[$b=0.9$]{\includegraphics[scale=0.25]{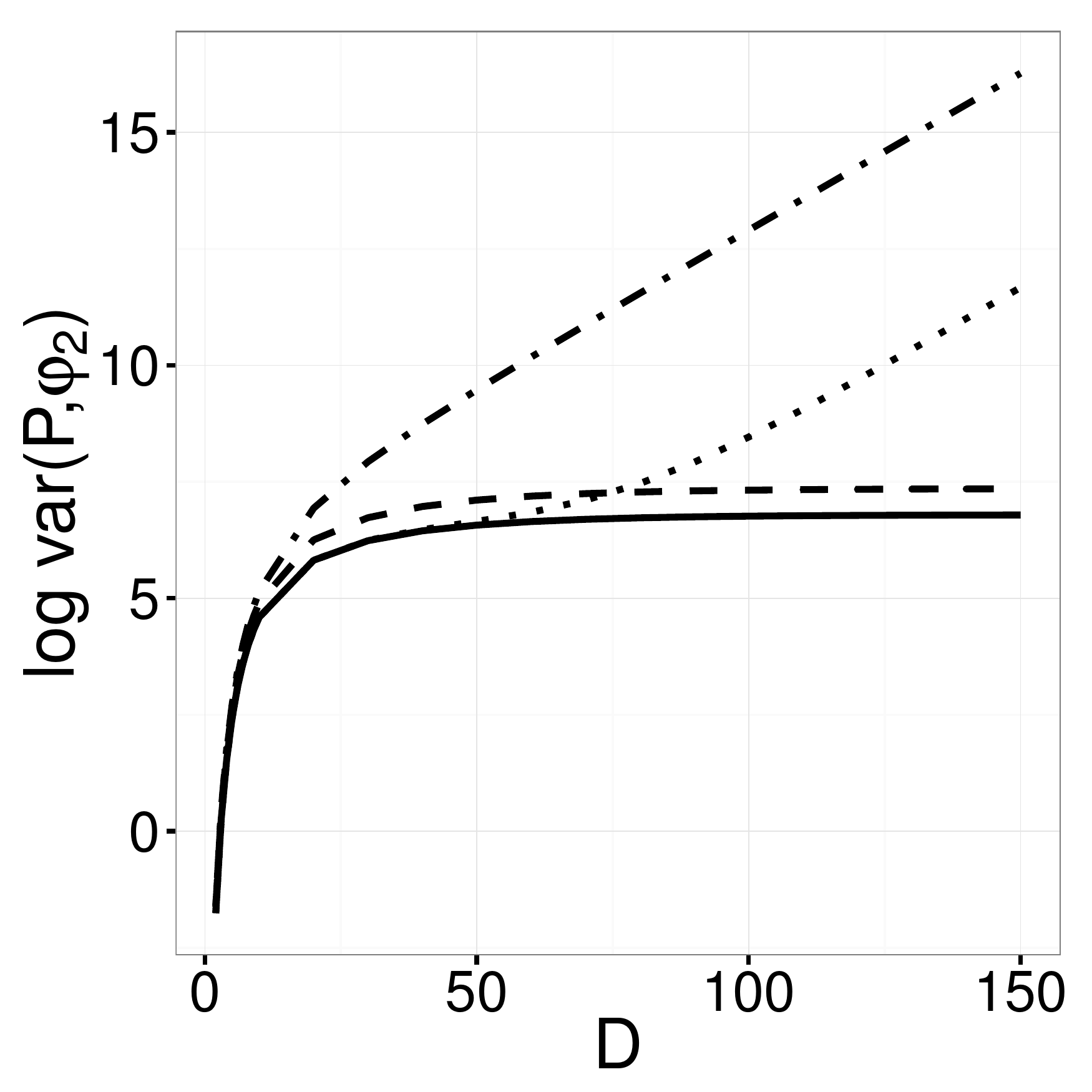}}
\par\end{centering}

\caption{Plot of $\log{\rm var}(P,\varphi_{2})$ against $D$ for $P=P_{\ref{alg:pm_kernel_2},1}$
(dot-dashed), $P=P_{\ref{alg:pm_kernel_2},100}$ (dotted), $P=P_{\ref{alg:1_hit_kernel}}$
(dashed) and $P=P_{{\rm MH}}$ (solid), with $a=0.5$.\label{fig:log_asymp_variance_2}}
\end{figure}

\begin{figure}
\begin{centering}
\subfloat[$b=0.1$]{\includegraphics[scale=0.25]{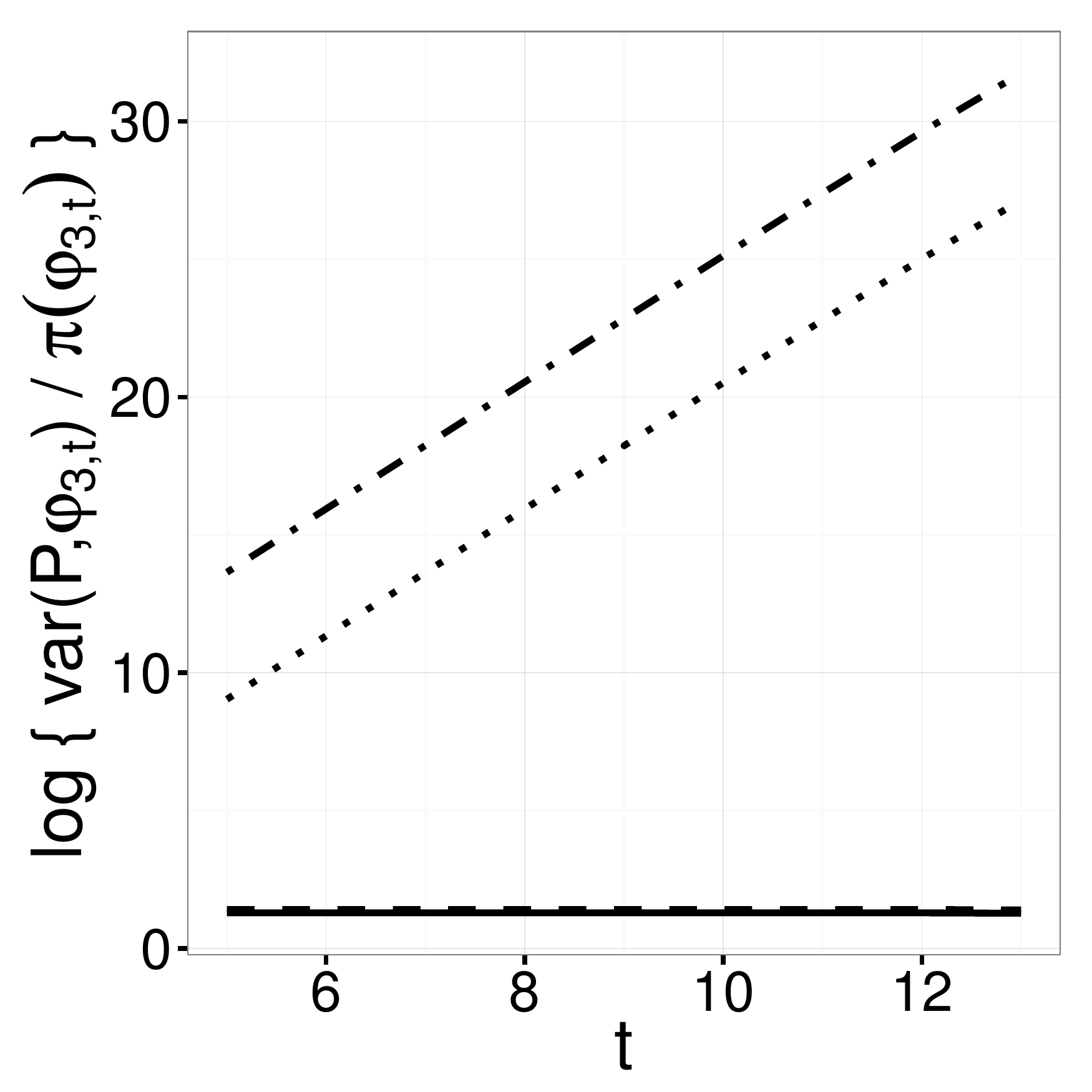}}\subfloat[$b=0.5$]{\includegraphics[scale=0.25]{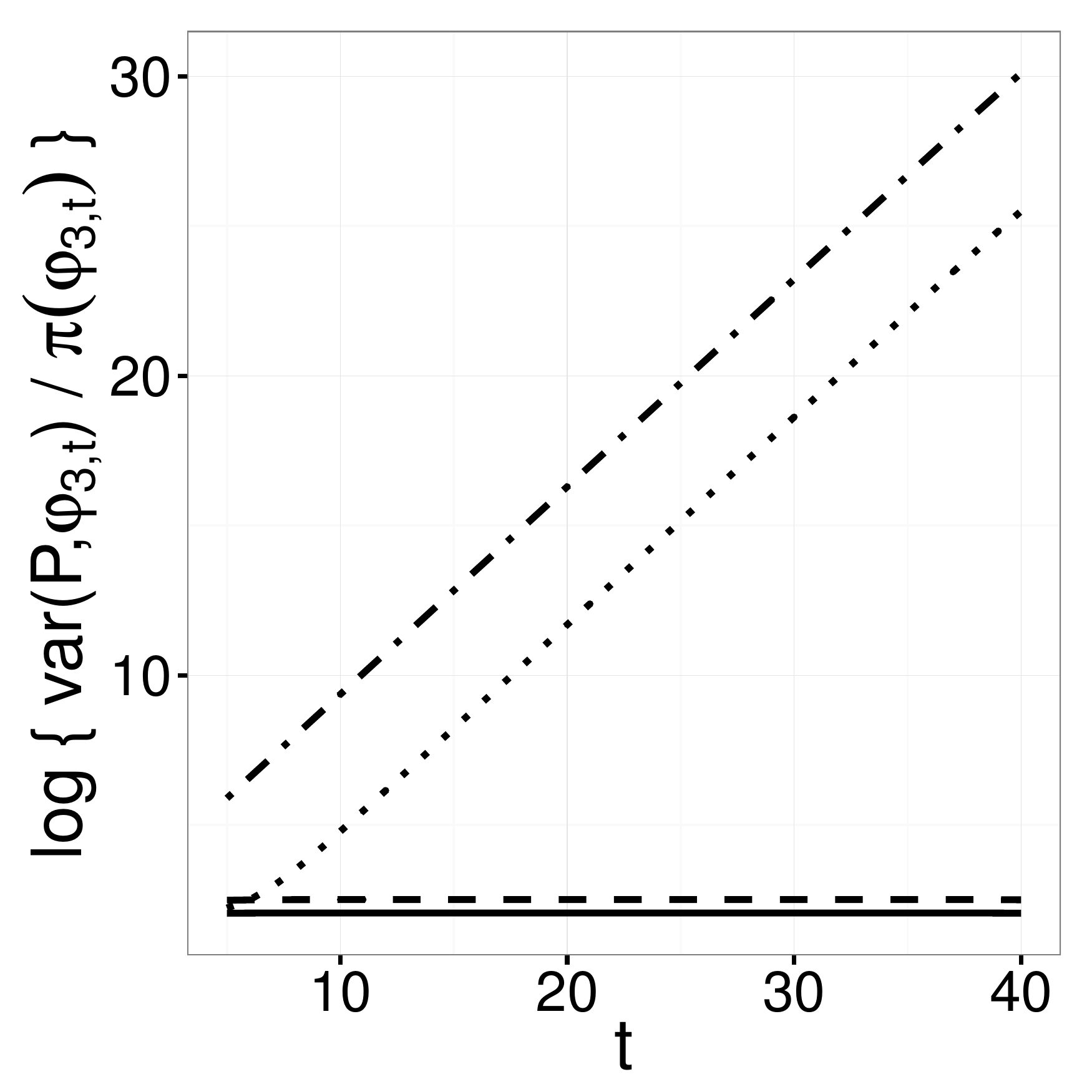}}\subfloat[$b=0.9$]{\includegraphics[scale=0.25]{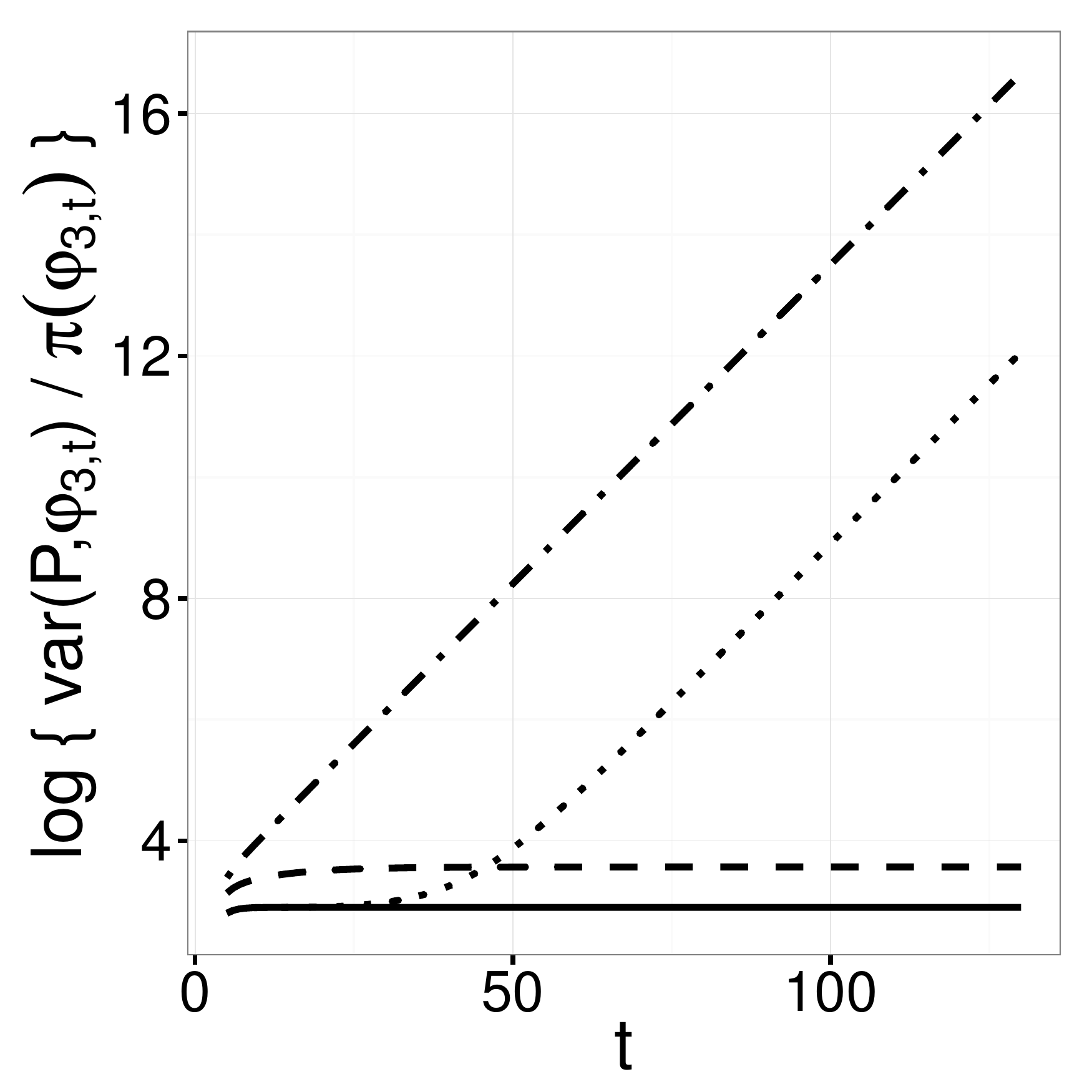}}
\par\end{centering}

\caption{Plot of $\log\left\{ {\rm var}(P,\varphi_{3,t})/\pi(\varphi_{3,t})\right\} $
against $t$ for $P=P_{\ref{alg:pm_kernel_2},1}$ (dot-dashed), $P=P_{\ref{alg:pm_kernel_2},100}$
(dotted), $P=P_{\ref{alg:1_hit_kernel}}$ (dashed) and $P=P_{{\rm MH}}$
(solid), with $a=0.5$.\label{fig:log_asymp_variance_tail}}
\end{figure}

Figure~\ref{fig:log_asymp_variance_tail} shows $\log\left\{ {\rm var}(P,\varphi_{3,t})/\pi(\varphi_{3,t})\right\} $
against $t$ for $\varphi_{3,t}(\theta)=\mathbf{1}_{\{t,t+1,\ldots\}}(\theta)$
so that $\pi(\varphi_{3,t})$ is the tail probability. The division
by $\pi(\varphi_{3,t})$ makes this an appropriately scaled relative
asymptotic variance since one needs $1/\pi(\varphi_{3,t})$ perfect
samples from $\pi$ in expectation to get a single sample in the region
$\{t,t+1,\ldots\}$. The figure shows that while $P_{{\rm MH}}$ and
$P_{\ref{alg:1_hit_kernel}}$ have constant $\log\left\{ {\rm var}(P,\varphi_{3,t})/\pi(\varphi_{3,t})\right\} $
as $t$ increases, $P_{\ref{alg:pm_kernel_2},1}$ and $P_{\ref{alg:pm_kernel_2},100}$
do not, as a result of their inability to estimate tail probabilities
accurately. In various applications, approximate Bayesian computation
might be used to infer such posterior tail probabilities and these
results indicate that $P_{\ref{alg:pm_kernel_1},N}$ and $P_{\ref{alg:pm_kernel_2},N}$
may not be appropriate when such inferences are desired.

\begin{figure}
\begin{centering}
\subfloat[$a=0.9$]{\includegraphics[scale=0.25]{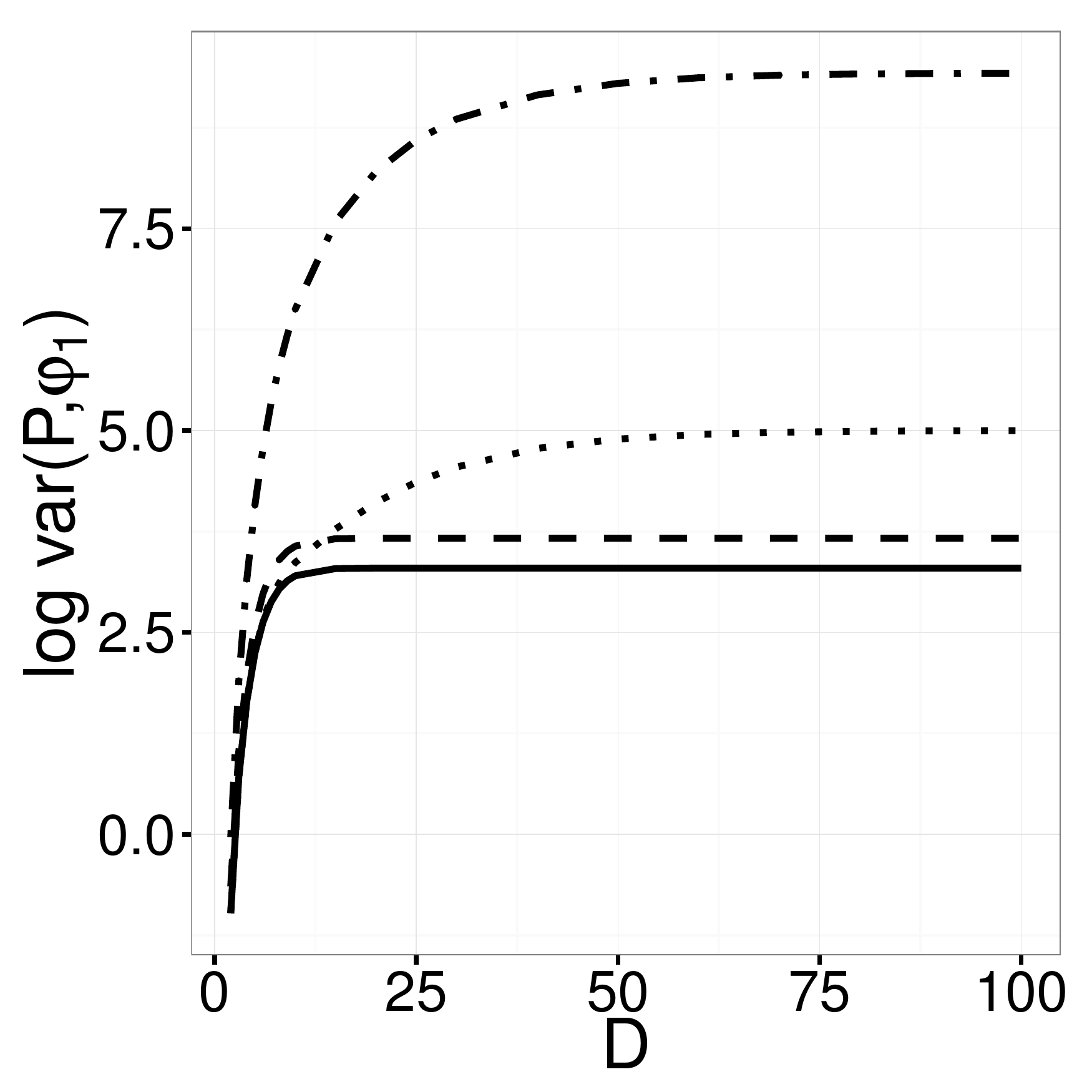}}\subfloat[$a=0.99$]{\includegraphics[scale=0.25]{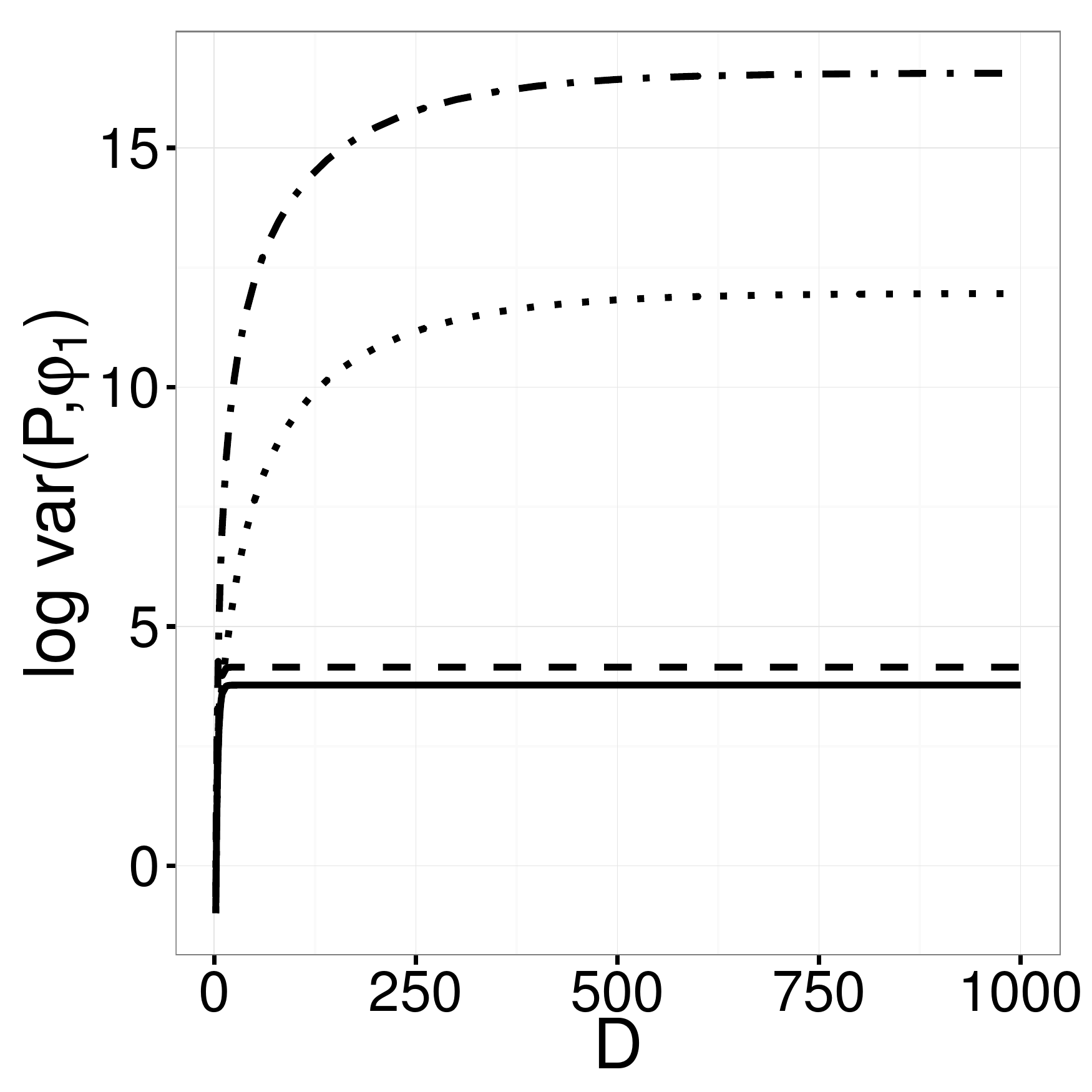}}\subfloat[$a=0.999$]{\includegraphics[scale=0.25]{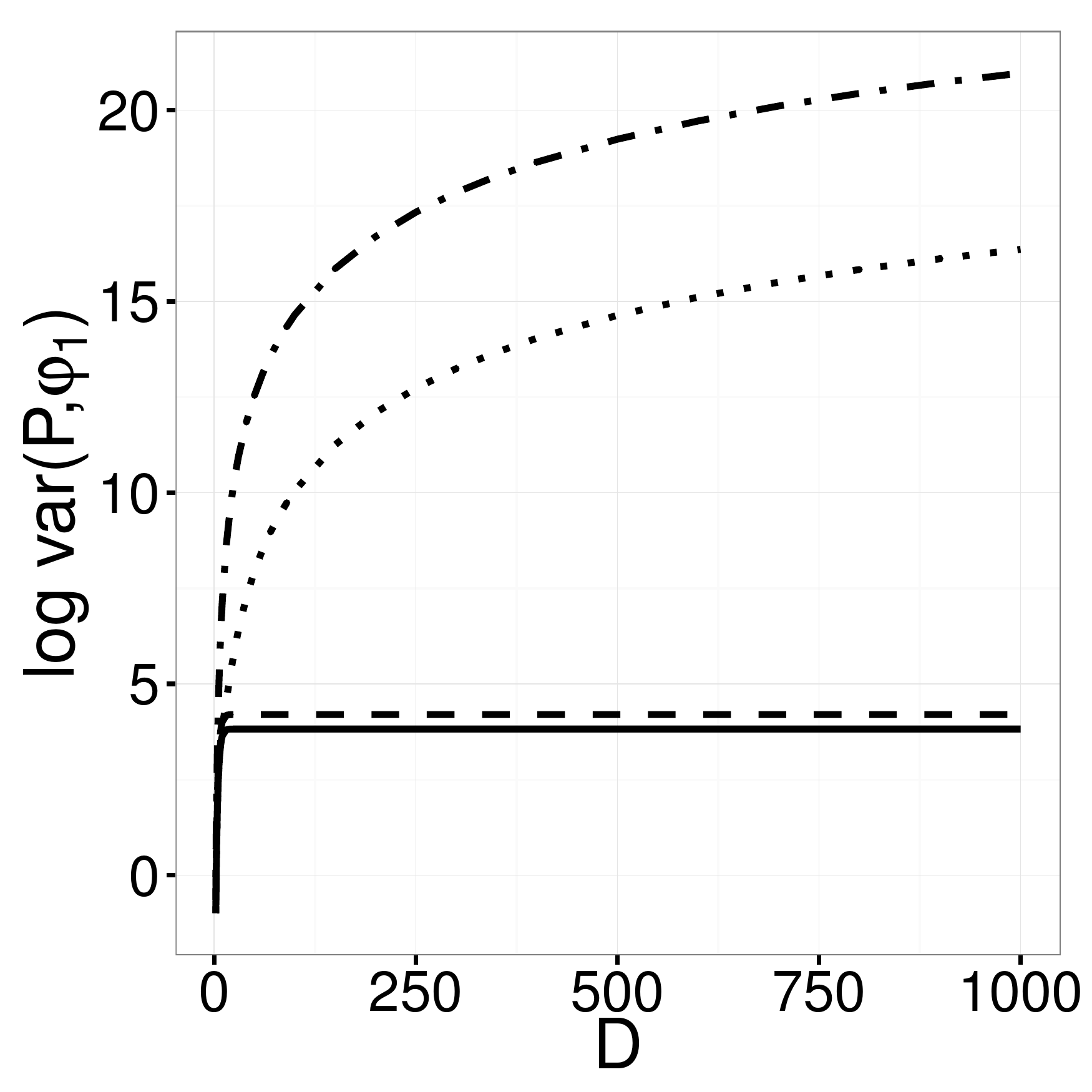}}
\par\end{centering}

\caption{Plot of $\log{\rm var}(P,\varphi_{1})$ against $D$ for $P=P_{\ref{alg:pm_kernel_2},1}$
(dot-dashed), $P=P_{\ref{alg:pm_kernel_2},100}$ (dotted), $P=P_{\ref{alg:1_hit_kernel}}$
(dashed) and $P=P_{{\rm MH}}$ (solid), with $b=0.5$.\label{fig:log_asymp_variance_varyinga}}
\end{figure}

Figure~\ref{fig:log_asymp_variance_varyinga} shows the behaviour
of the estimate of the posterior mean for $a\in\{0.9,0.99,0.999\}$
with corresponding values of $n$ for $P_{\ref{alg:1_hit_kernel}}$
being approximately $5$, $50$ and $500$. To take account of the
cost of the kernels, it is informative to consider $N{\rm var}(P_{\ref{alg:pm_kernel_2},N},\varphi_{1})$
and $n{\rm var}(P_{\ref{alg:1_hit_kernel}},\varphi_{1})$. For these
values of $a$, we have ${\rm var}(P_{\ref{alg:pm_kernel_2},1},\varphi_{1})$
roughly equal to $100{\rm var}(P_{\ref{alg:pm_kernel_2},100},\varphi_{1})$,
although $P_{\ref{alg:pm_kernel_2},100}$ is more feasibly implemented
in parallel on emerging many-core devices such as graphics processing
units \citep[see, e.g., ][]{Lee2010}. On the other hand ${\rm var}(P_{\ref{alg:pm_kernel_2},1},\varphi_{1})/\left\{ n{\rm var}(P_{\ref{alg:1_hit_kernel}},\varphi_{1})\right\} $
is about $75$, $5000$ and well over $60000$ for $a$ equal to $0.9$,
$0.99$ and $0.999$ respectively.

\subsection{Stochastic Lotka--Volterra model\label{eg:lotka_volterra}}

We turn to stochastic kinetic models for which the posterior is not
of a simple form, and exhibits strong correlations between components
of $\theta$. Such models are used, e.g.,\ in systems biology where
Bayesian inference has been investigated in \citet{boys2008bayesian}
and \citet{wilkinson2011stochastic}. We consider a simple member
of this class of models, the Lotka--Volterra predator-prey model \citep{Lotka1925,volterra1927variazioni},
which was also considered as an example for approximate Bayesian computation
in \citet{toni2009approximate} and \citet{Fearnhead2012}.

In this setting $X_{1:2}(t)$ is bivariate, integer-valued pure jump
Markov process with $X_{1:2}(0)=(50,100)$. For small $\Delta t$,
we have
\begin{multline*}
{\rm pr}\left\{ X_{1:2}(t+\Delta t)=z_{1:2}\mid X_{1:2}(t)=x_{1:2}\right\} \\
=\begin{cases}
\theta_{1}x_{1}\Delta t+o(\Delta t) & \text{if }z_{1:2}=(x_{1}+1,x_{2}),\\
\theta_{2}x_{1}x_{2}\Delta t+o(\Delta t) & \text{if }z_{1:2}=(x_{1}-1,x_{2}+1),\\
\theta_{3}x_{2}\Delta t+o(\Delta t) & \text{if }z_{1:2}=(x_{1},x_{2}-1),\\
1-\Delta t\left(\theta_{1}x_{1}+\theta_{2}x_{1}x_{2}+\theta_{3}x_{2}\right)+o(\Delta t) & \text{if }z_{1:2}=x_{1:2},\\
o(\Delta t) & {\rm otherwise},
\end{cases}
\end{multline*}
where the first three cases correspond in order to prey birth, prey
consumption and predator death. Theory and methodology related to
the simulation of this type of time-homogeneous, pure jump Markov
process and historical uses in statistics can be traced through \citet{feller1940integro},
\citet{doob1945markoff} and \citet{kendall1949stochastic,kendall1950artificial},
and the method was rediscovered in \citet{gillespie1977exact} in
the context of stochastic kinetic models. These articles develop a
straightforward way to simulate the full process $X_{1:2}(t),\; t\in[0,10]$,
as the inter-jump times are exponential random variables, although
more sophisticated approaches are possible \citep[see, e.g.,][Chapter 8]{wilkinson2011stochastic}.

The data was simulated with $\theta=(1,0.005,0.6)$, an example from
\citet[p. 152]{wilkinson2011stochastic}. Our observations are both
partial and discrete with $y=\{88,165,274,268,114,46,32,36,53,92\}$
the simulated values of $X_{1}$ at times $\{1,2,\ldots,10\}$, and
for approximate Bayesian computation we use a $\log$ transformation
of $X_{1}(t)$ and $y(t)$ with $\epsilon=1$, i.e., 
\[
B_{\epsilon}(y)=\left\{ X_{1}(t)\;:\;\log\left\{ X_{1}(i)\right\} -\log\left\{ y(i)\right\} \leq\epsilon,\text{ for each }i\in\{1,\ldots,10\}\right\} .
\]

\begin{figure}
\begin{centering}
\subfloat{\includegraphics[scale=0.25]{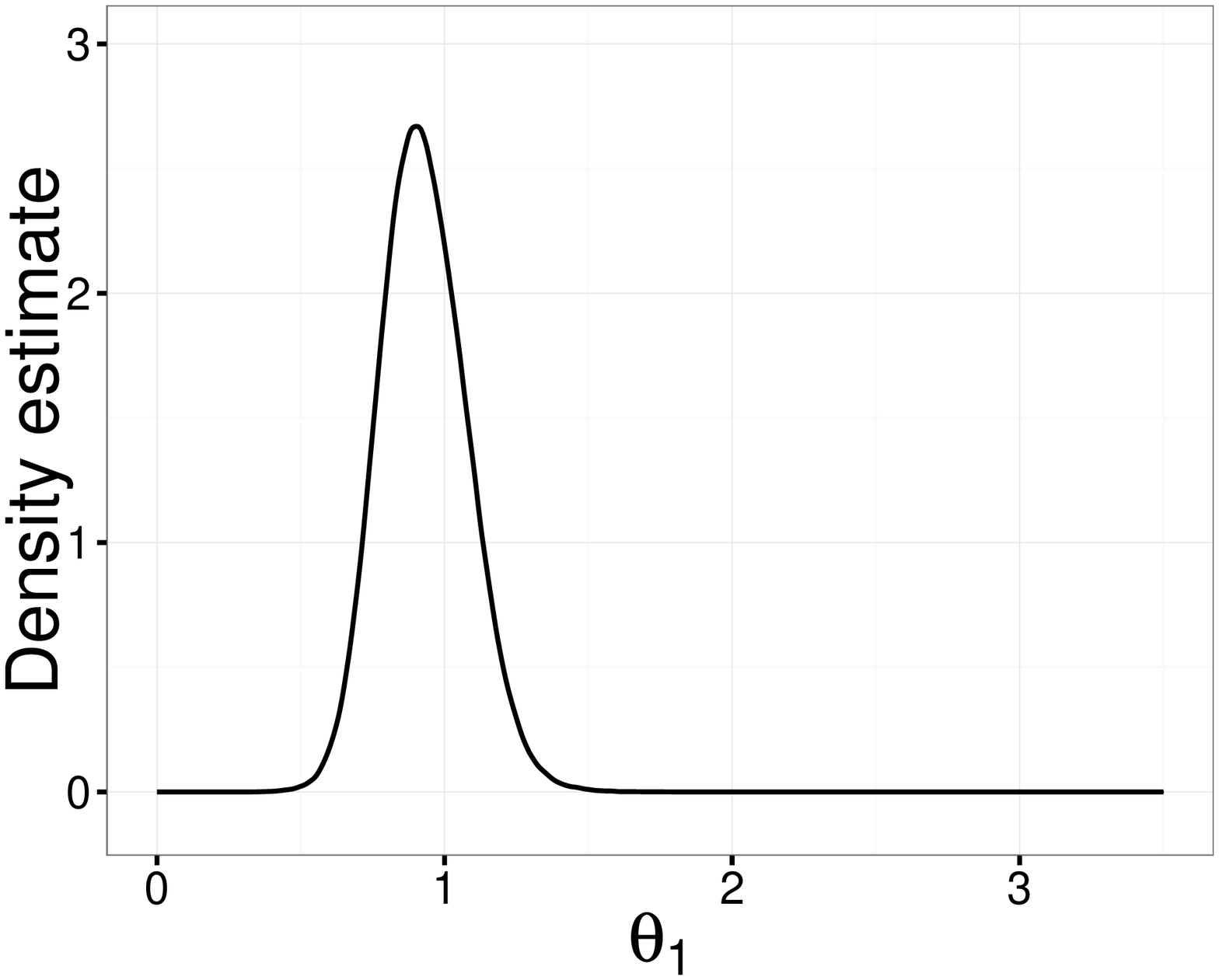}}\subfloat{\includegraphics[scale=0.25]{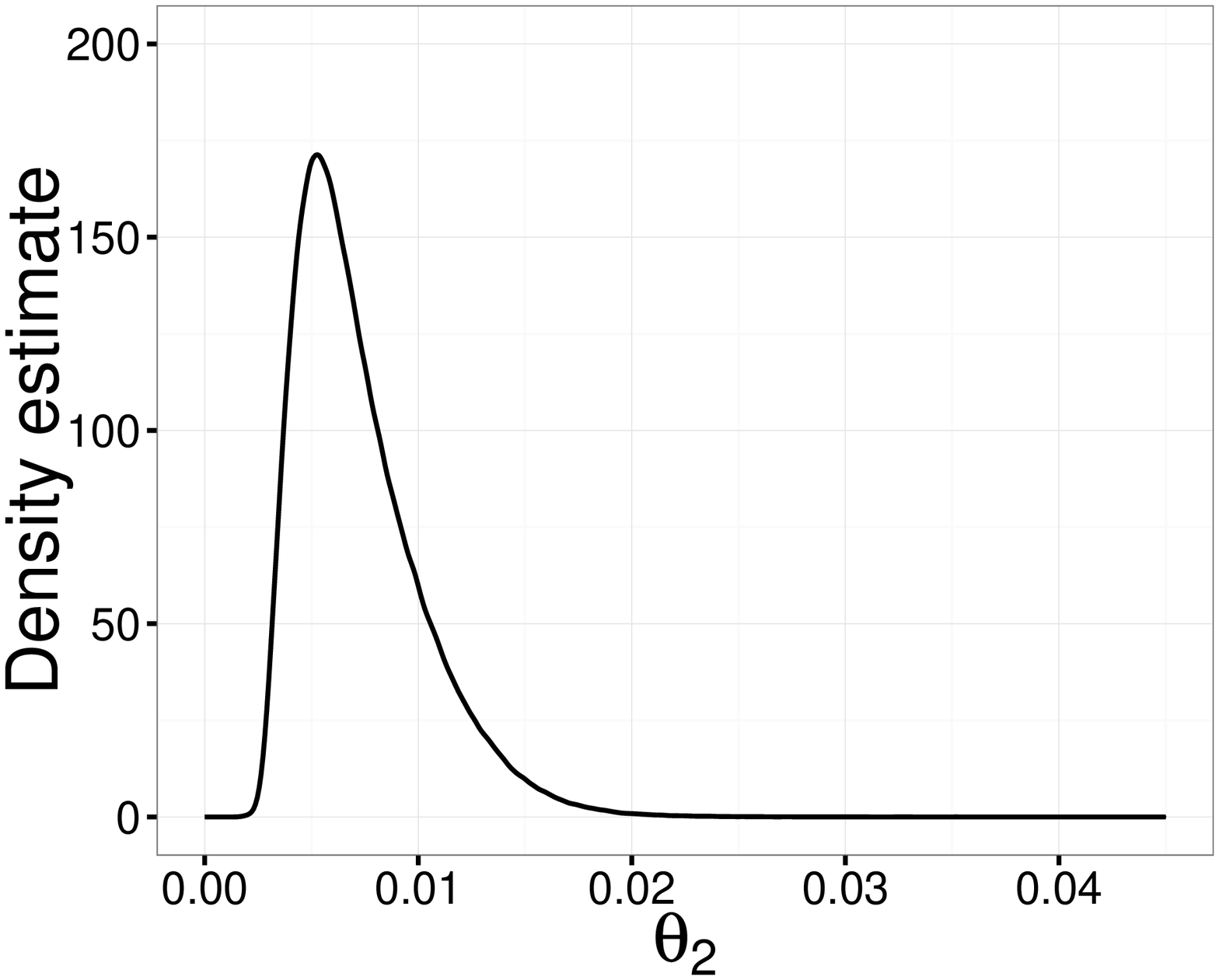}}\subfloat{\includegraphics[scale=0.25]{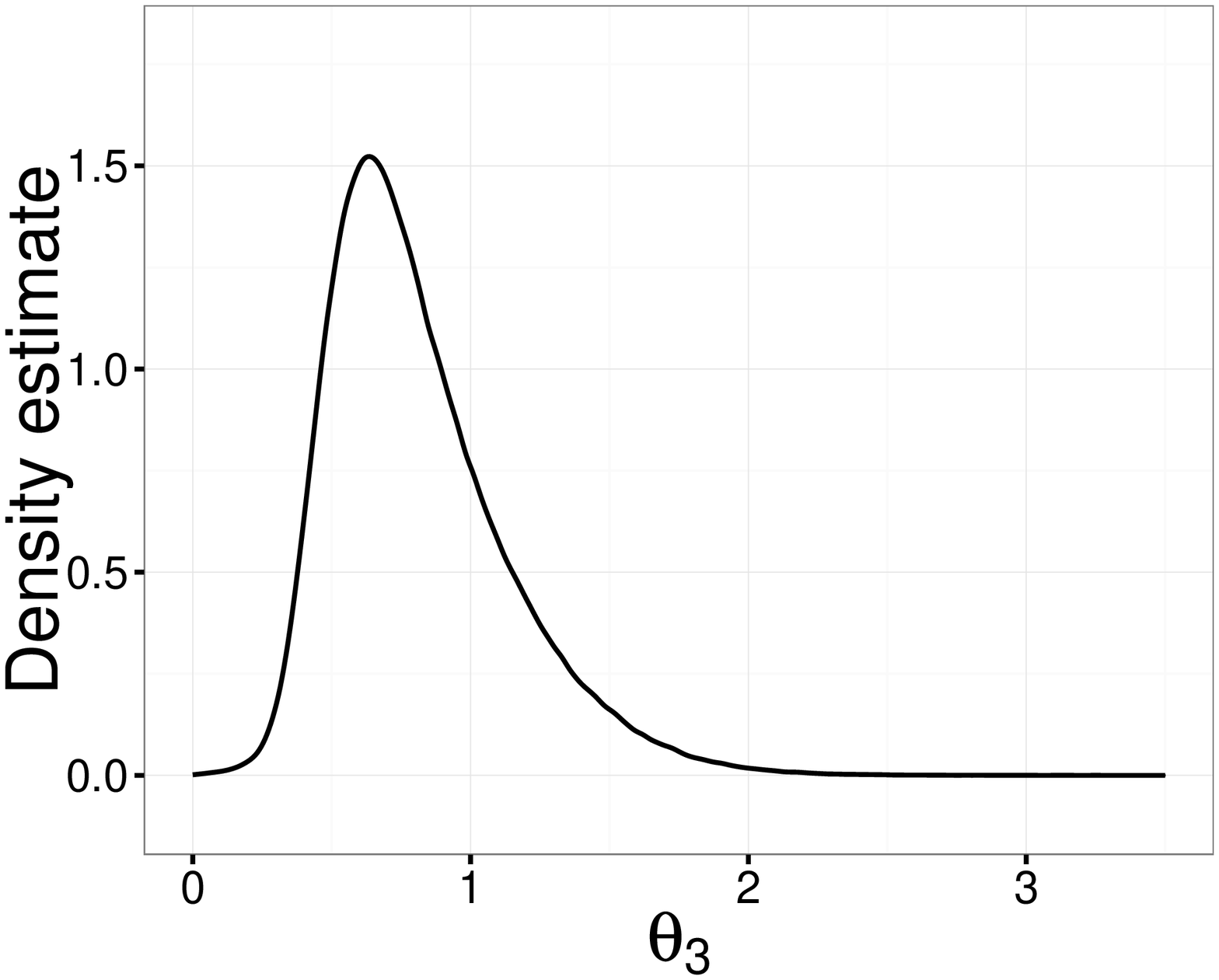}}
\par\end{centering}

\caption{Density estimates of the marginal posteriors for the Lotka--Volterra
model.\label{fig:lv_posteriors_normal}}
\end{figure}

\begin{figure}
\begin{centering}
\subfloat[$P_{\ref{alg:pm_kernel_1},1}$]{\includegraphics[scale=0.22]{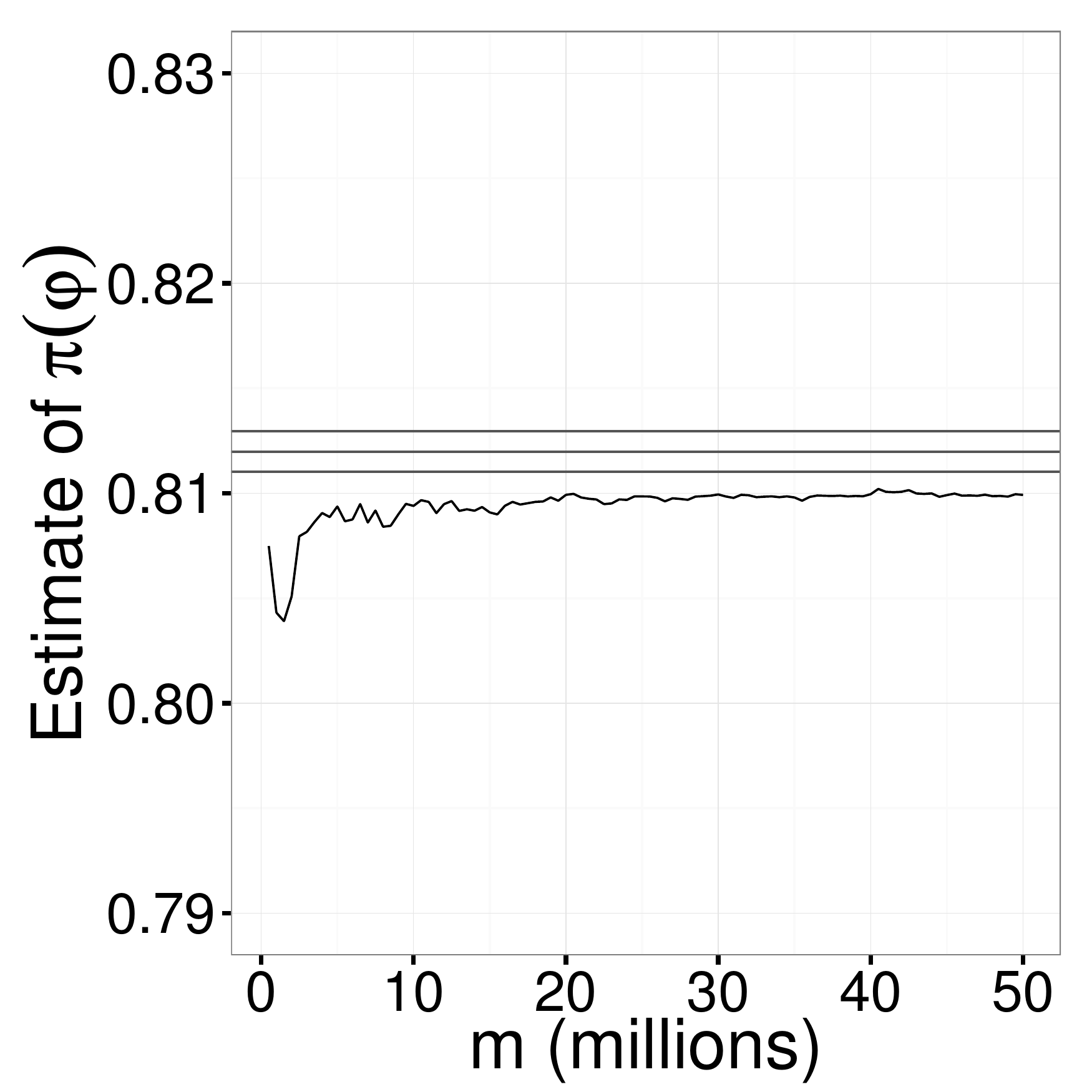}}\subfloat[$P_{\ref{alg:pm_kernel_1},15}$]{\includegraphics[scale=0.22]{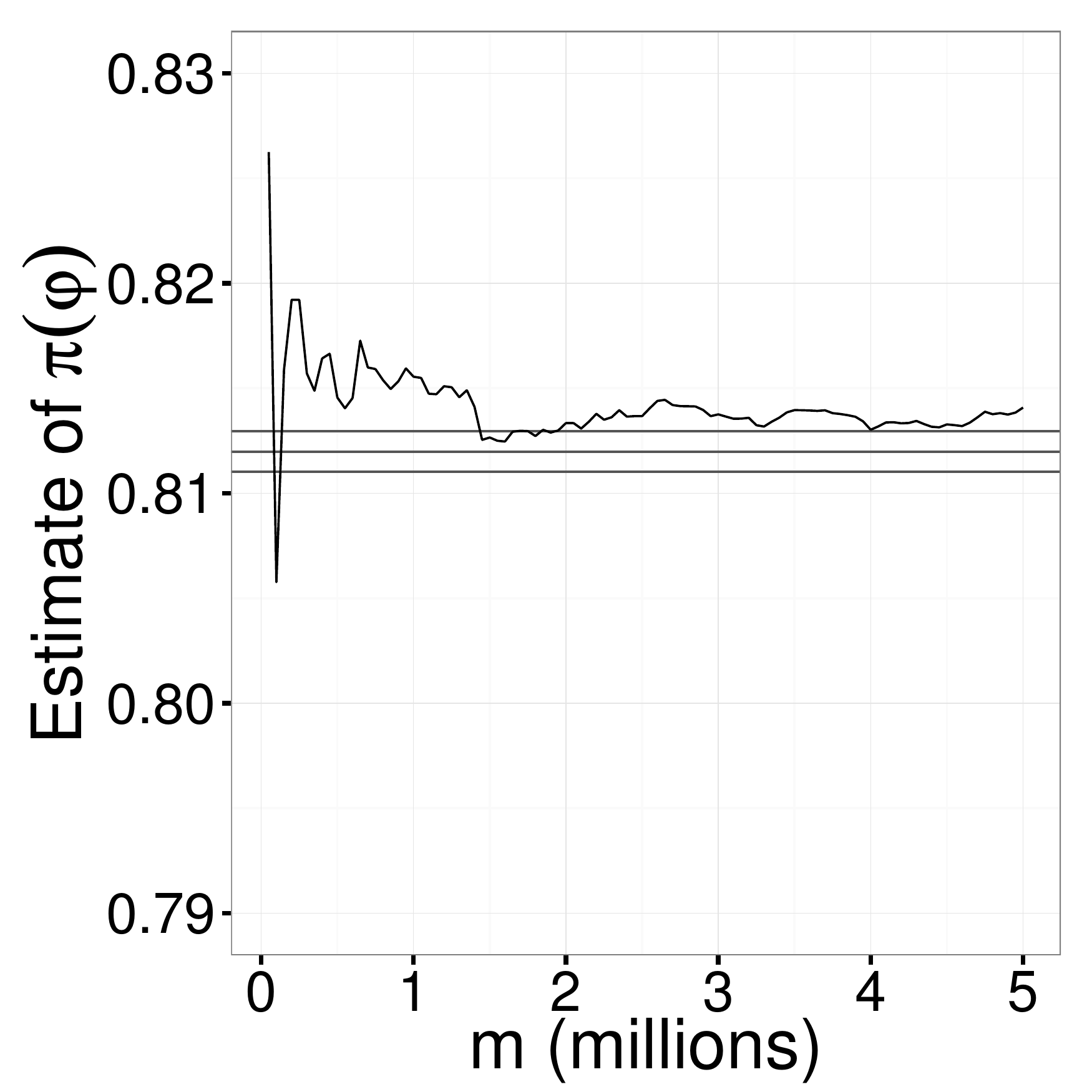}}\subfloat[$P_{\ref{alg:pm_kernel_2},15}$]{\includegraphics[scale=0.22]{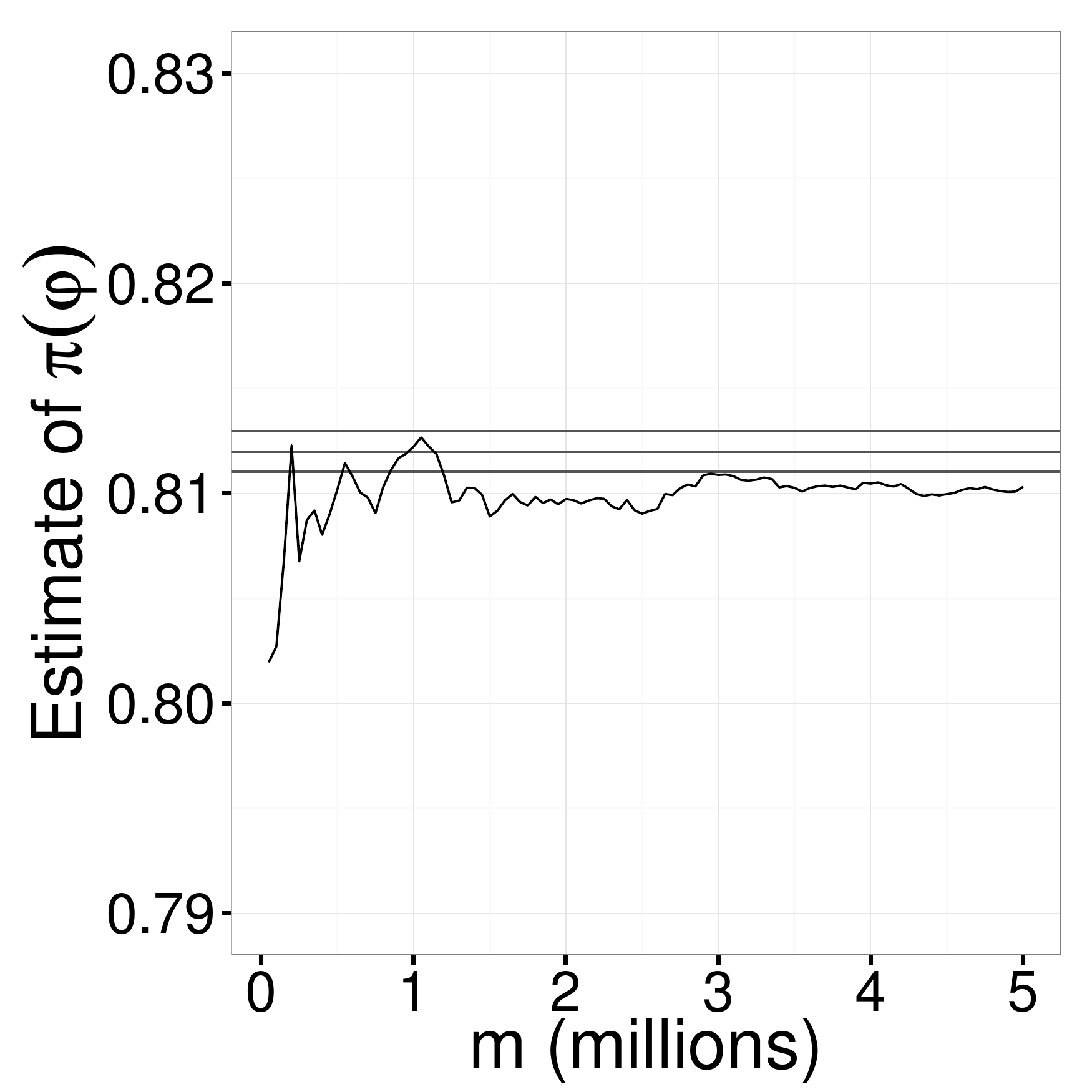}}\subfloat[$P_{\ref{alg:1_hit_kernel}}$]{\includegraphics[scale=0.22]{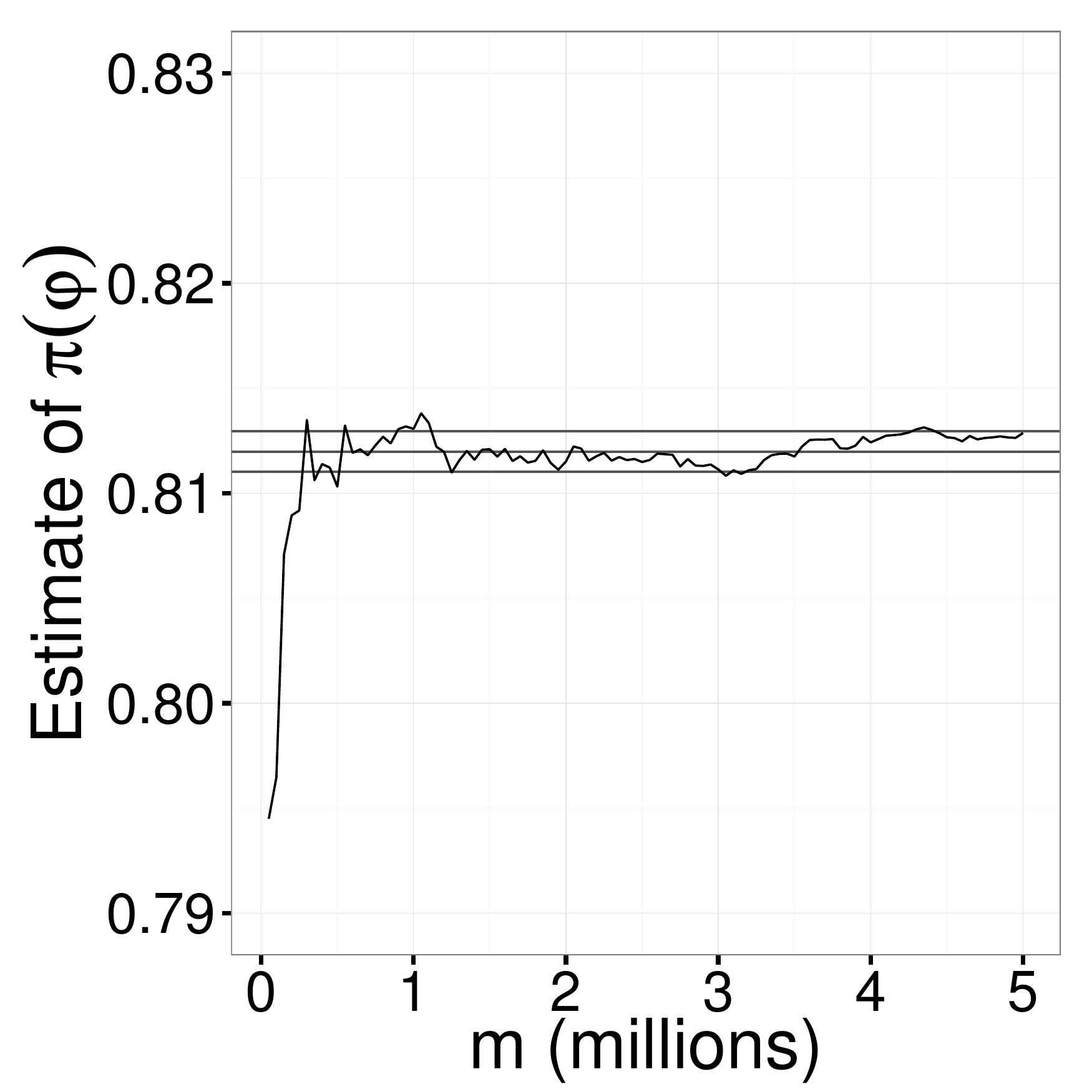}}
\par\end{centering}

\caption{Estimates of the posterior mean of $\theta_{3}$ by iteration using
each kernel. The three horizontal lines correspond to the estimate
obtained using the rejection sampler with two estimated standard deviations
added and subtracted.\label{fig:lv_partialsums_mean_normal}}
\end{figure}

\begin{figure}
\begin{centering}
\subfloat[$P_{\ref{alg:pm_kernel_1},1}$]{\includegraphics[scale=0.22]{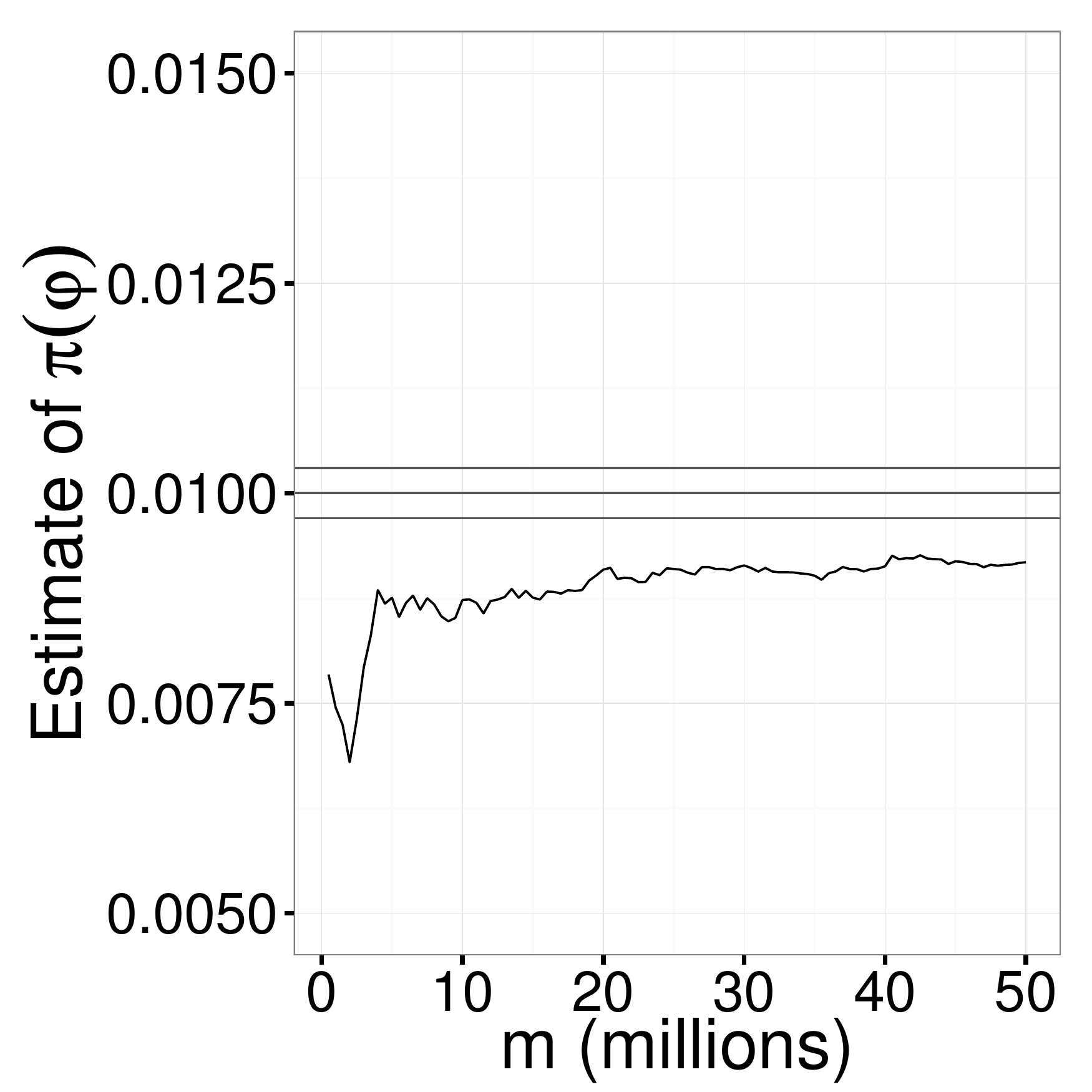}}\subfloat[$P_{\ref{alg:pm_kernel_1},15}$]{\includegraphics[scale=0.22]{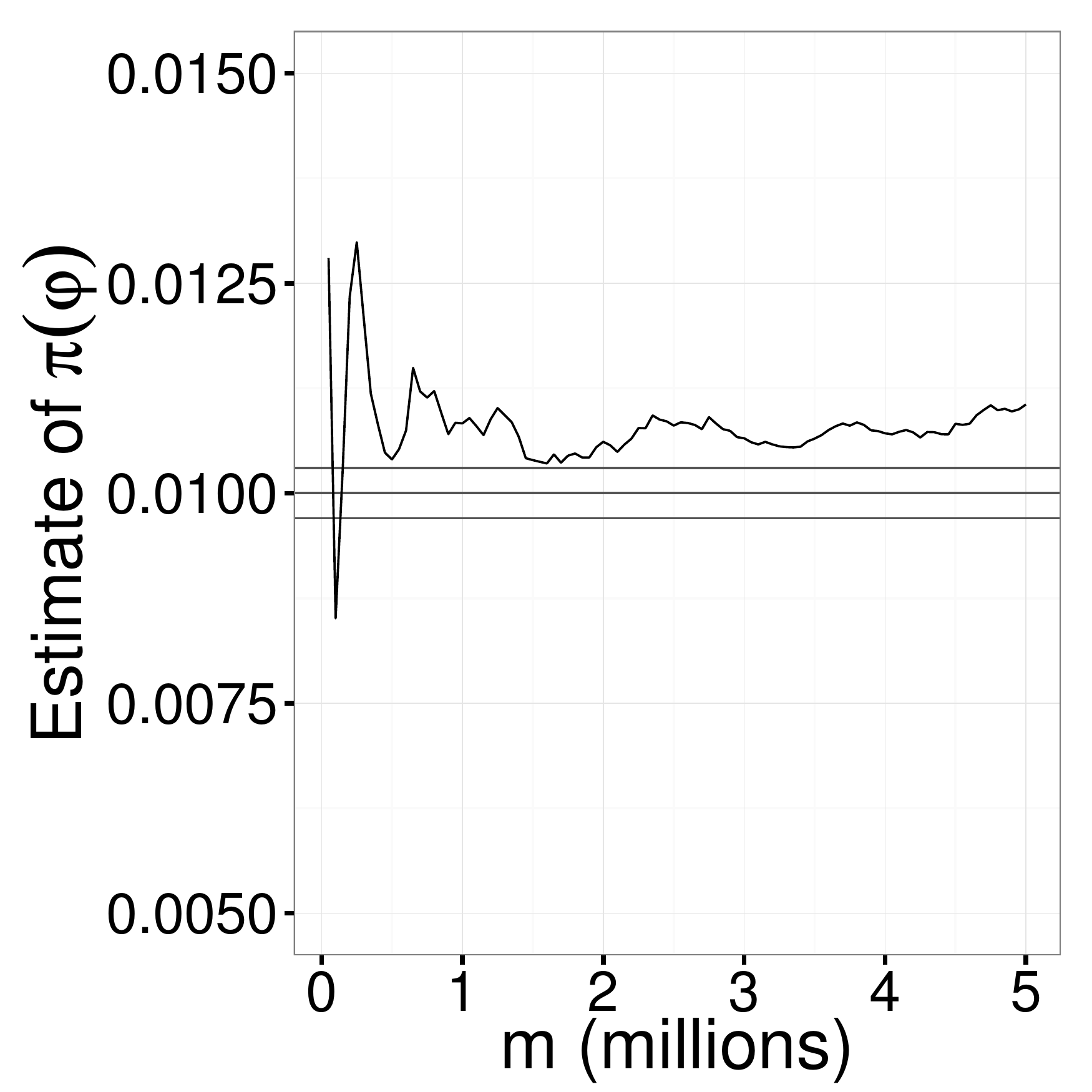}}\subfloat[$P_{\ref{alg:pm_kernel_2},15}$]{\includegraphics[scale=0.22]{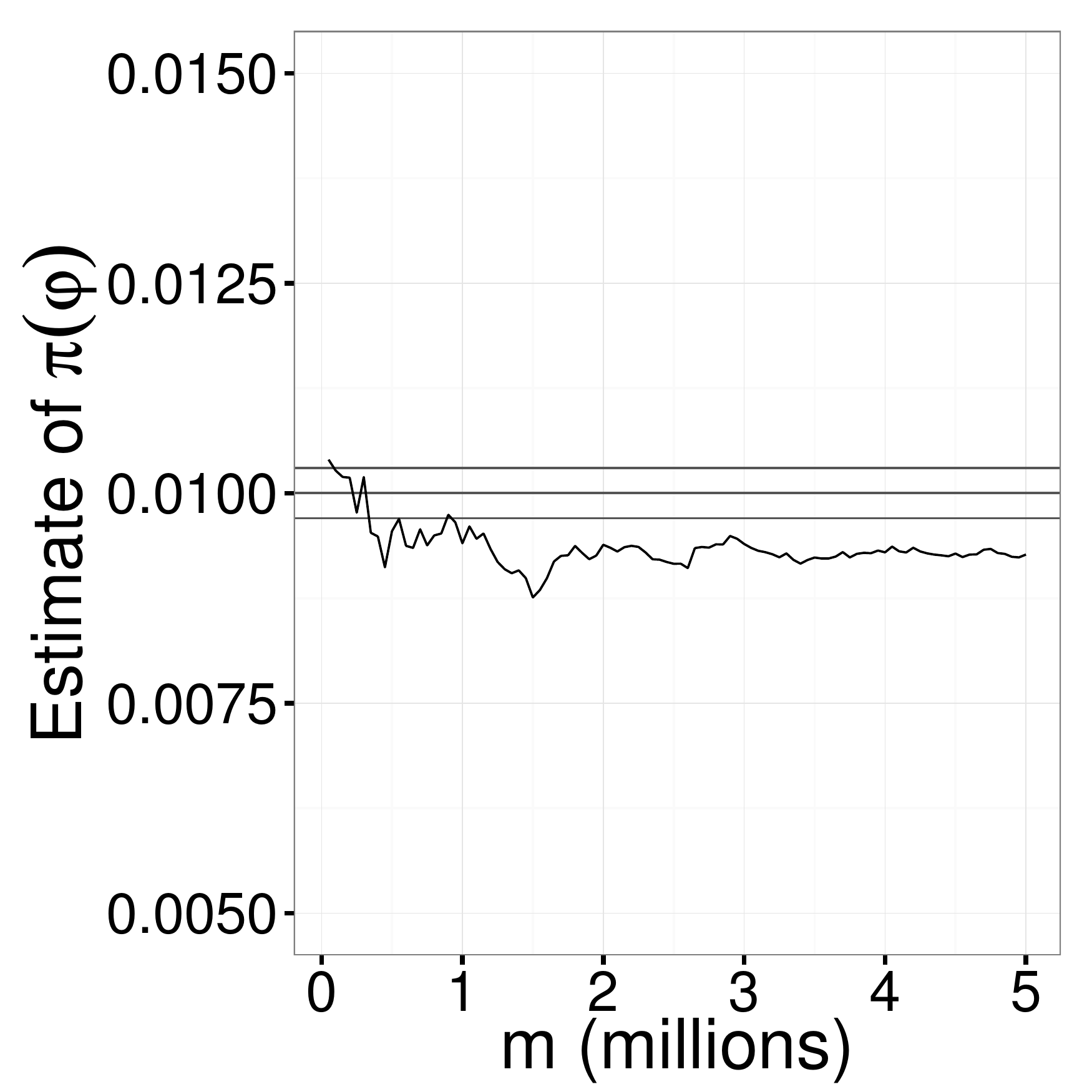}}\subfloat[$P_{\ref{alg:1_hit_kernel}}$]{\includegraphics[scale=0.22]{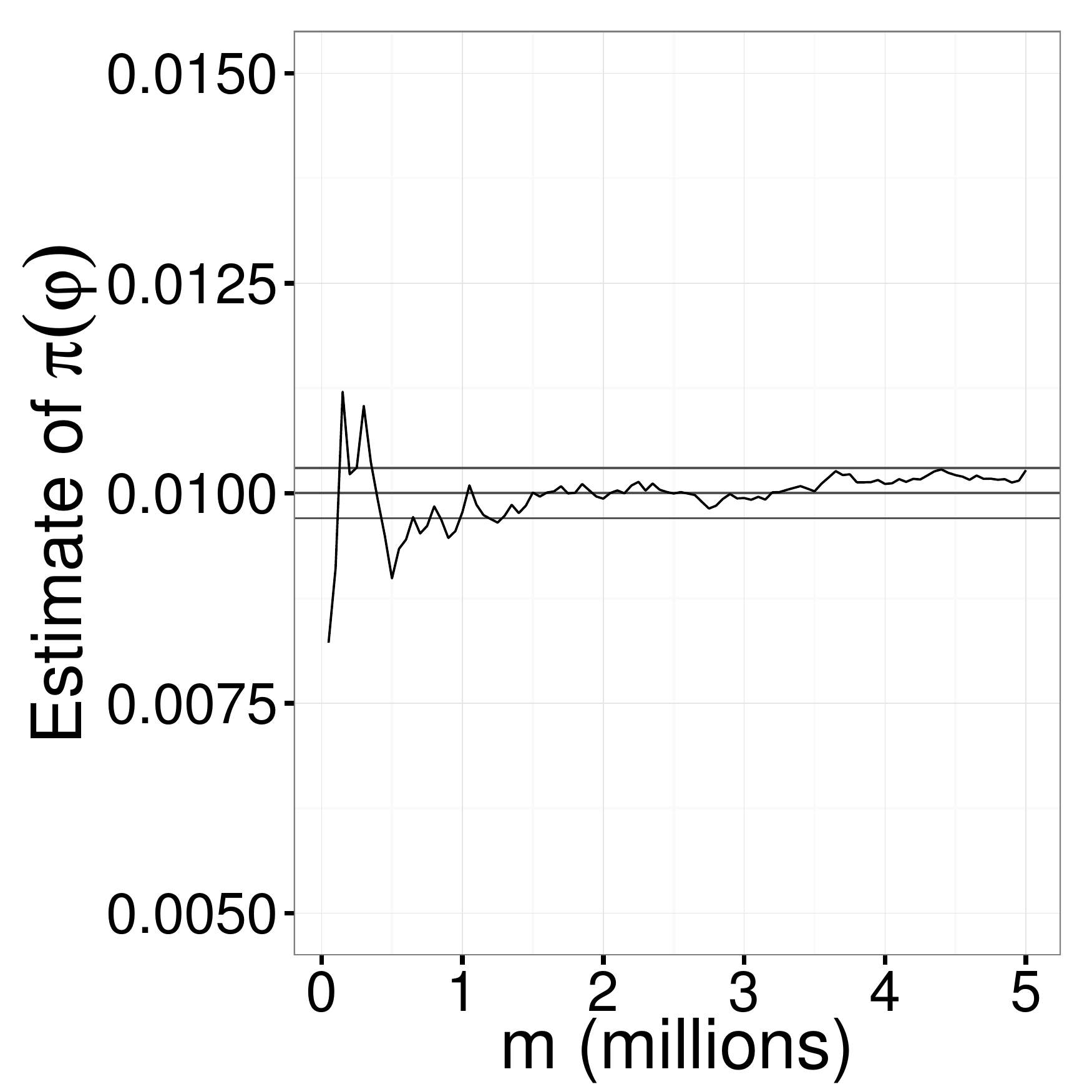}}
\par\end{centering}

\caption{Estimates of $\pi(\theta_{3}\geq1.79)$ by iteration using each kernel.
The three horizontal lines correspond to the estimate obtained using
the rejection sampler with two estimated standard deviations added
and subtracted.\label{fig:lv_partialsums_tail_normal}}
\end{figure}

We first model $\theta\in\Theta=[0,\infty)^{3}$ with $p(\theta)=100\exp(-\theta_{1}-100\theta_{2}-\theta_{3})$
and use $q(\theta,\vartheta)=\mathcal{N}(\vartheta;\theta,\Sigma)$
where $\Sigma={\rm diag}(.25,0.0025,.25)$. The choice of independent
exponential priors on $\theta$ is motivated by Condition~\ref{cond:c2}.
Density plots of the marginal posteriors for each component of $\theta$
are shown in Figure~\ref{fig:lv_posteriors_normal}, obtained using
$10^{6}$ samples from $\pi$ using a rejection sampler. $\theta_{1}$
has a tighter posterior than $\theta_{3}$ and while not shown here,
the samples indicate strong positive correlation between $\theta_{2}$
and $\theta_{3}$. In this setting, $P_{\ref{alg:1_hit_kernel}}$
for $5\times10^{6}$ iterations gave an average value of $n$ of $15$
and we also ran kernels $P_{\ref{alg:pm_kernel_1},1}=P_{\ref{alg:pm_kernel_2},1}$
for $5\times10^{7}$ iterations and $P_{\ref{alg:pm_kernel_1},15}$
and $P_{\ref{alg:pm_kernel_2},15}$ both for $5\times10^{6}$ iterations.
All kernels gave density estimates visibly indistinguishable from
those in Figure~\ref{fig:lv_posteriors_normal}, but inspection of
their partial sums by iteration reveals important differences. In
Figures~\ref{fig:lv_partialsums_mean_normal} and~\ref{fig:lv_partialsums_tail_normal}
we show estimates of the posterior mean of $\theta_{3}$ and the probability
that $\theta_{3}\geq1.79$ for each chain, accompanied by lines corresponding
to the estimate obtained using the samples from the rejection sampler.
The choice of $1.79$ corresponds to an estimate of the 90th percentile
using these latter samples. $P_{\ref{alg:1_hit_kernel}}$ seems to
accurately estimate both the same value as the estimate from the rejection
sampler and the uncertainty of the estimate seems to be correlated
with perturbations of the partial sum. However, the other kernels
seem to both miss the value of interest by some amount and, particularly
in the case of $P_{\ref{alg:pm_kernel_1},1}$, the perturbations of
the partial sum over time are small which may mislead practitioners
into believing the estimate has converged.

We performed a second analysis using a slightly different prior, with
$p(\theta)=0.01\exp(-\theta_{1}-0.01\theta_{2}-\theta_{3})$, where
differences in the kernels are accentuated. Here, the independent
prior for $\theta_{2}$ is all that has changed, and has been made
less informative. In this case, a rejection sampler cannot practically
be used to verify results as the expected number of proposals required
to obtain one sample by rejection is around $4.5\times10^{5}$. The
average value of $n$ for $P_{\ref{alg:1_hit_kernel}}$, however,
was $13$.

\begin{figure}
\begin{centering}
\subfloat[$P_{\ref{alg:pm_kernel_1},1}$]{\includegraphics[scale=0.22]{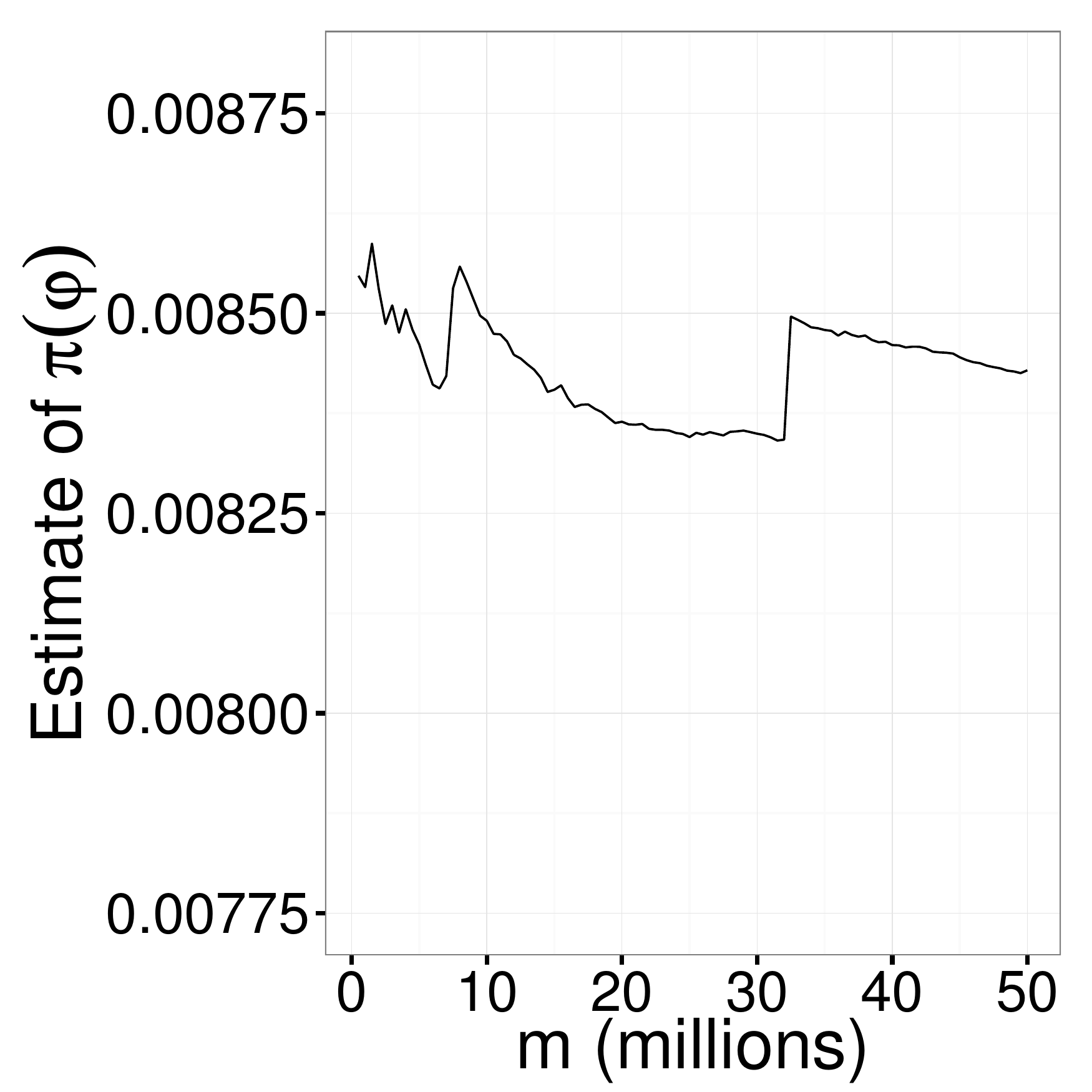}}\subfloat[$P_{\ref{alg:pm_kernel_1},15}$]{\includegraphics[scale=0.22]{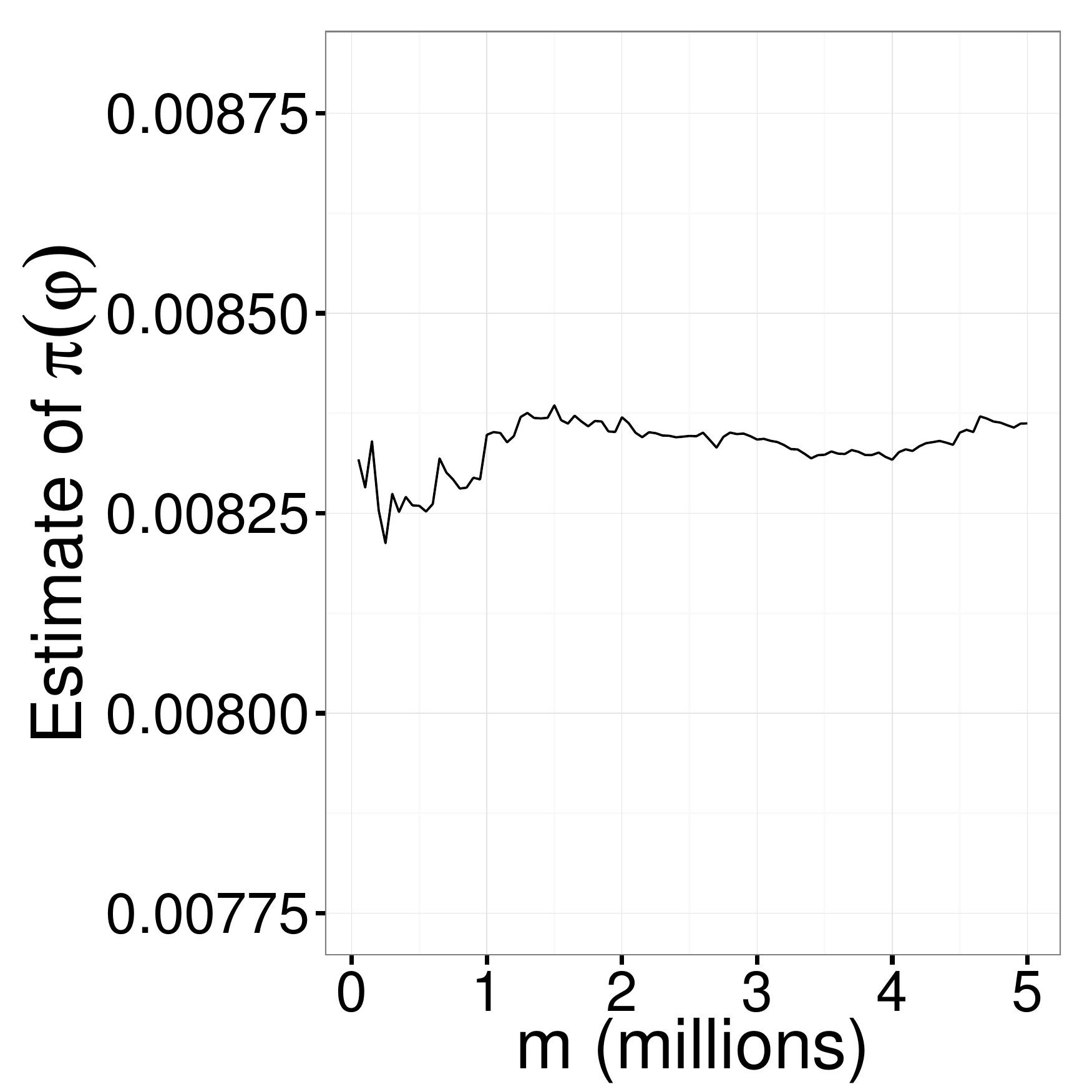}}\subfloat[$P_{\ref{alg:pm_kernel_2},15}$]{\includegraphics[scale=0.22]{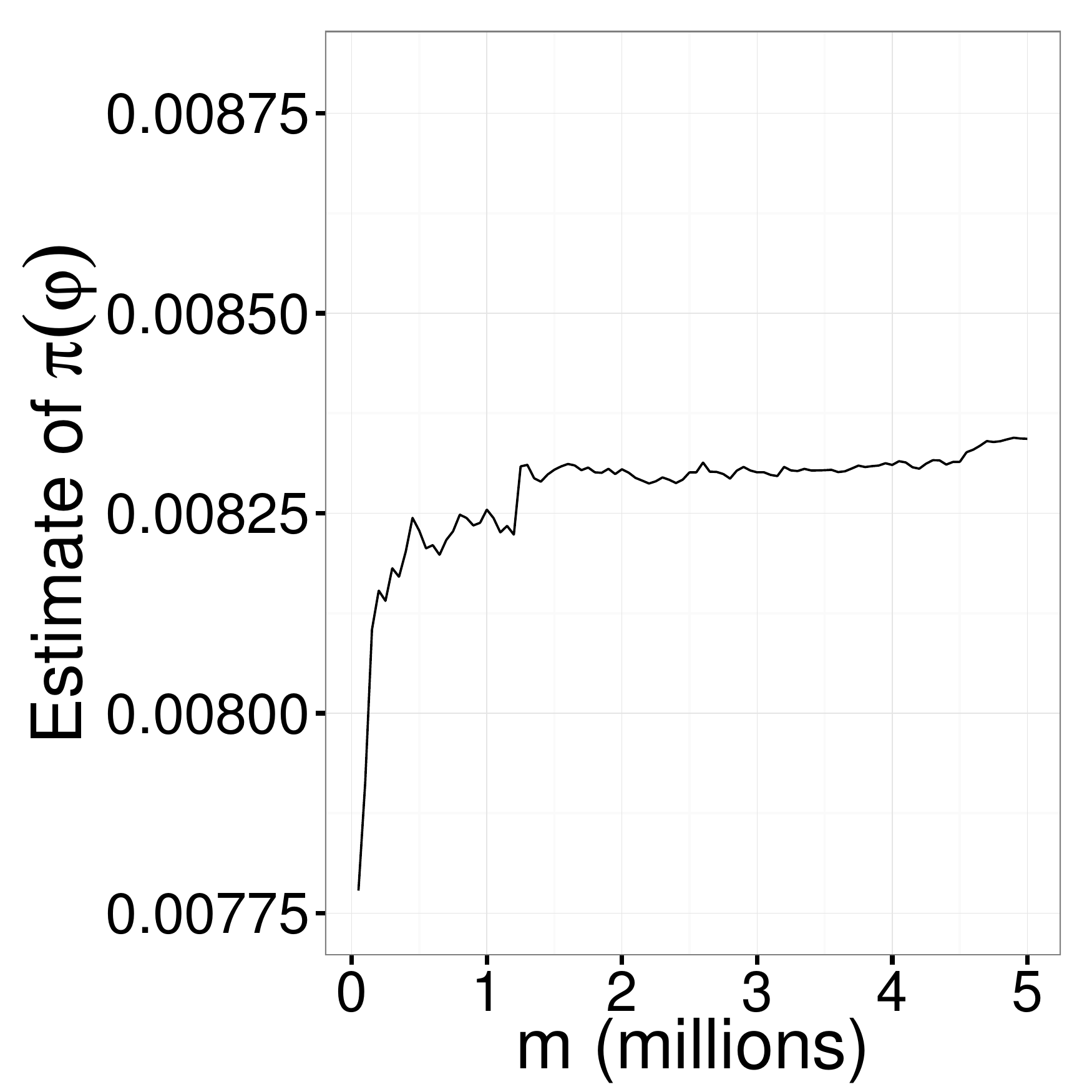}}\subfloat[$P_{\ref{alg:1_hit_kernel}}$]{\includegraphics[scale=0.22]{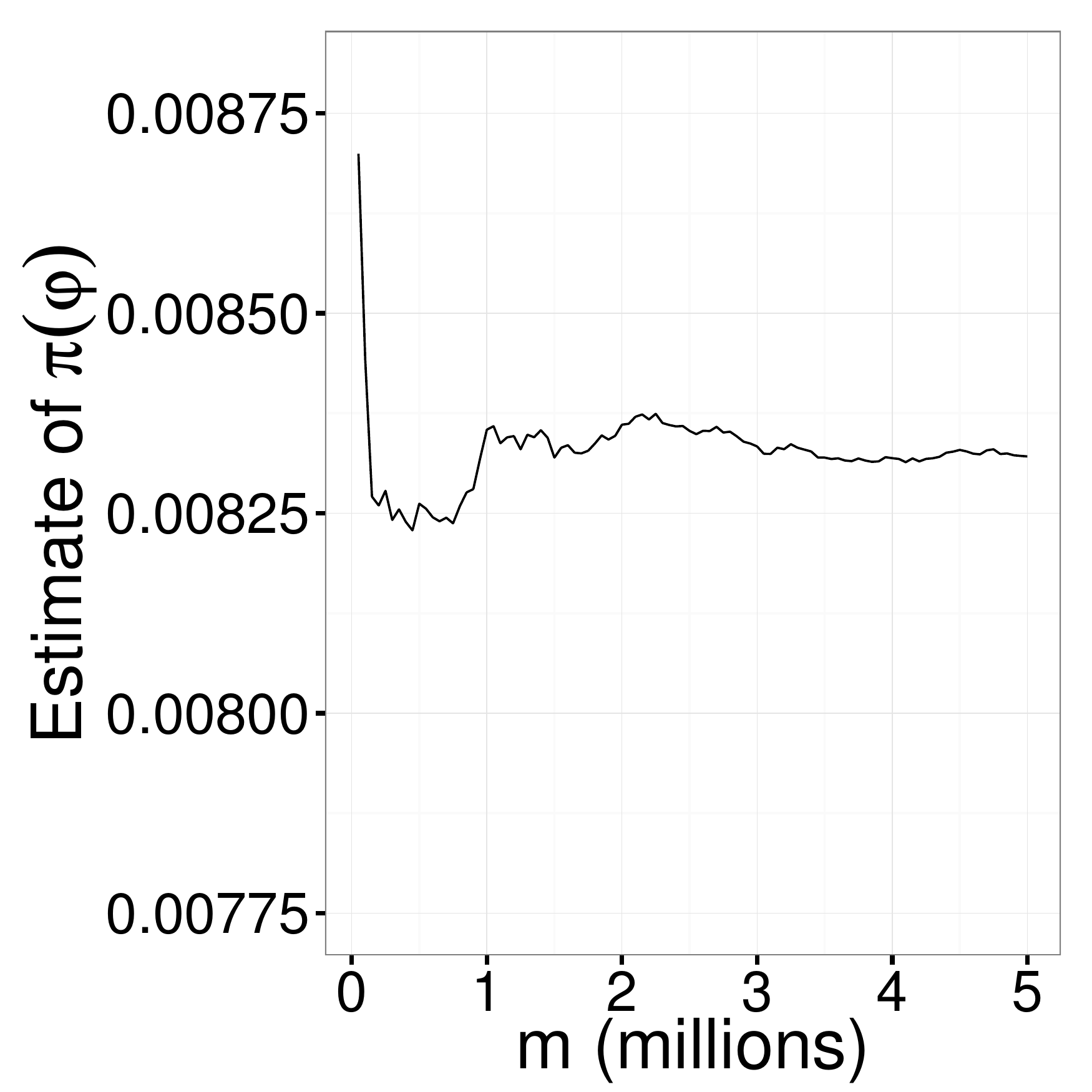}}
\par\end{centering}

\caption{Estimates of the posterior mean of $\theta_{2}$ by iteration using
each kernel.\label{fig:lv_partialsums_mean}}
\end{figure}

\begin{figure}
\begin{centering}
\subfloat[$P_{\ref{alg:pm_kernel_1},1}$]{\includegraphics[scale=0.22]{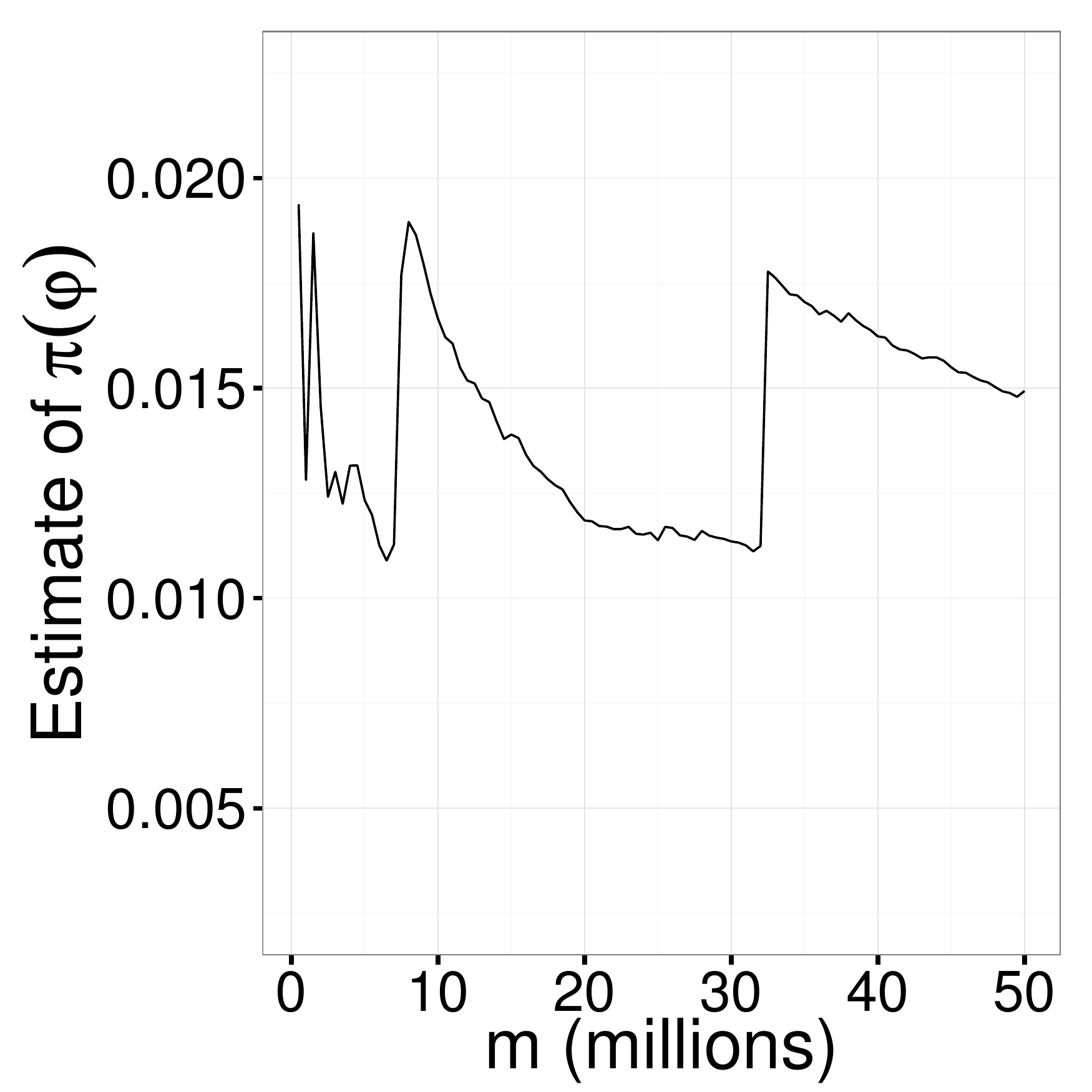}}\subfloat[$P_{\ref{alg:pm_kernel_1},15}$]{\includegraphics[scale=0.22]{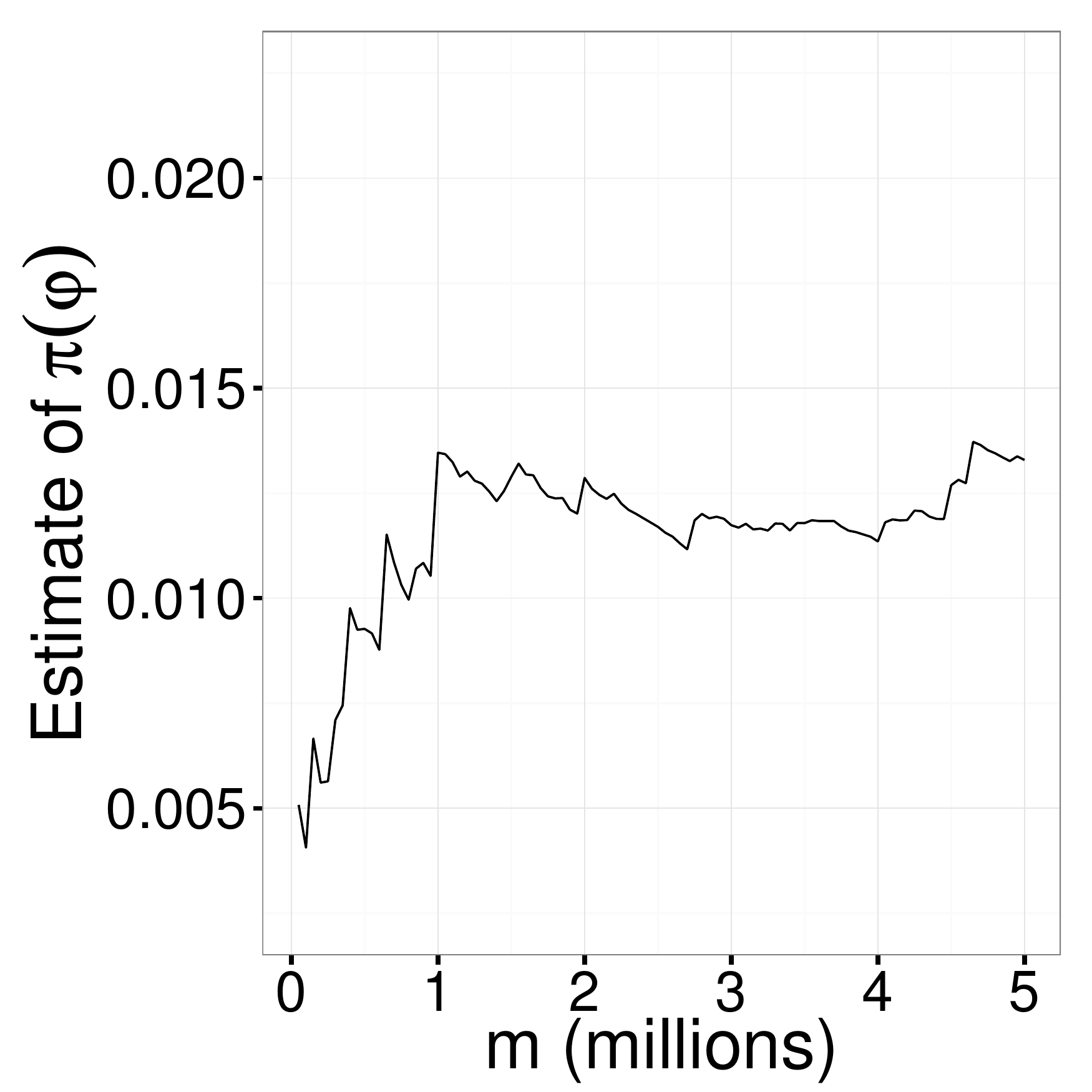}}\subfloat[$P_{\ref{alg:pm_kernel_2},15}$]{\includegraphics[scale=0.22]{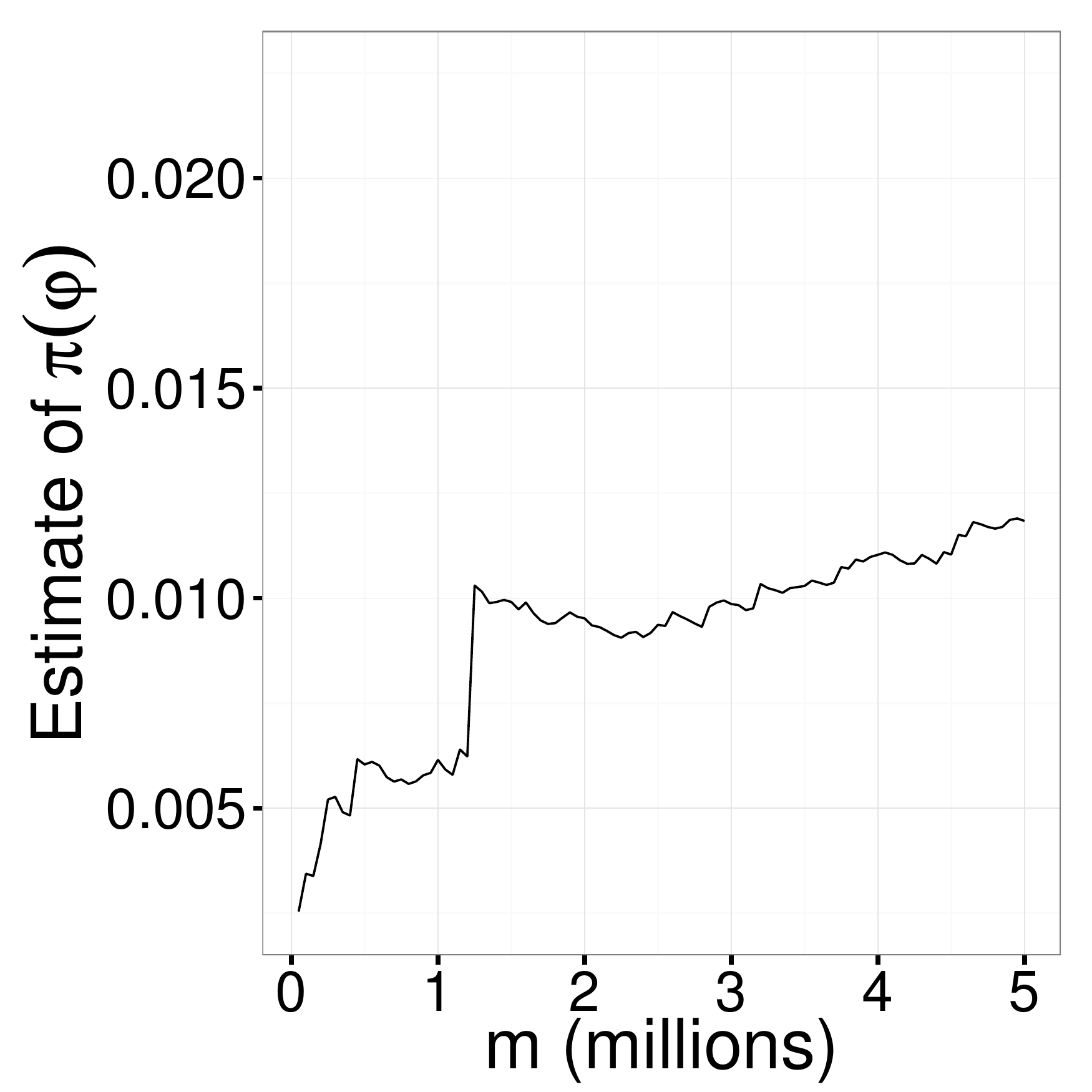}}\subfloat[$P_{\ref{alg:1_hit_kernel}}$]{\includegraphics[scale=0.22]{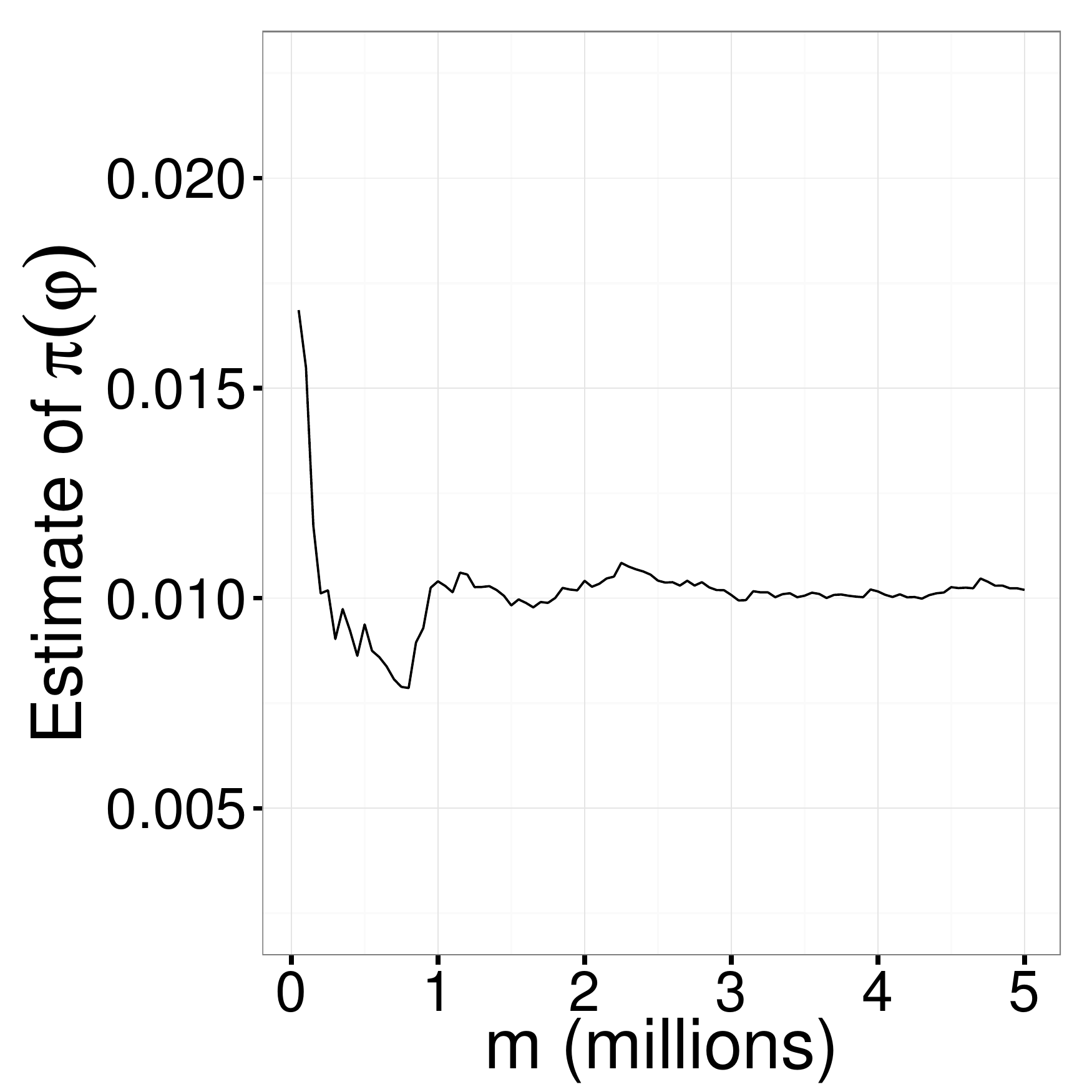}}
\par\end{centering}

\caption{Estimates of $\pi(\theta_{3}\geq2)$ by iteration using each kernel.\label{fig:lv_partialsums_tail}}
\end{figure}

While not shown here, marginal posterior density estimates using
each kernel for the parameters are reasonably close to those in Figure~\ref{fig:lv_posteriors_normal},
but those corresponding to $P_{\ref{alg:pm_kernel_1},1}$ exhibit
characteristic `bumps' in its tail. As above, we can inspect each
chain's corresponding partial sums by iteration to reveal important
differences. Figures~\ref{fig:lv_partialsums_mean} and~\ref{fig:lv_partialsums_tail}
show estimates of the posterior mean of $\theta_{2}$ and the posterior
probability that $\theta_{3}\geq2$ for each chain respectively, and
the latter is particularly illustrative of the inability of $P_{\ref{alg:pm_kernel_1}}$
and $P_{\ref{alg:pm_kernel_2}}$ to produce chains without long tail
excursions.

In practical applications such as this, it may not be possible to
determine easily if $P_{{\rm MH}}$ is variance bounding or geometrically
ergodic. However, Theorems~\ref{thm:vb}--\ref{thm:ge} do establish
that $P_{\ref{alg:1_hit_kernel}}$ will inherit either of these properties
from $P_{{\rm MH}}$ if it is. In practice, it is not unusual for
the conditions of Corollary~\ref{cor:jarner_hansen} to hold, and
one might expect them to do so here. Similarly, it is also quite common
for Condition~\ref{cond:c1} to hold, and so one might expect that
$P_{\ref{alg:pm_kernel_1}}$ and $P_{\ref{alg:pm_kernel_2}}$ are
not variance bounding here.

\section{Discussion}

Our analysis suggests that $P_{\ref{alg:1_hit_kernel}}$ may be geometrically
ergodic and/or variance bounding in a wide variety of situations where
kernels $P_{\ref{alg:pm_kernel_1},N}$ and $P_{\ref{alg:pm_kernel_2},N}$
are not. In practice, Condition~\ref{cond:c2} can be verified and
used to inform prior and proposal choice to ensure that $P_{\ref{alg:1_hit_kernel}}$
systematically inherits these properties from $P_{{\rm MH}}$. Of
course, variance bounding or geometric ergodicity of $P_{{\rm MH}}$
is often impossible to verify in the approximate Bayesian computation
setting due to the unknown nature of $f_{\theta}$. However, a prior
with regular contours as per \eqref{eq:regular_contours} will ensure
that $P_{{\rm MH}}$ is geometrically ergodic if $f_{\theta}$ decays
super-exponentially and also has regular contours. In addition, Condition~\ref{cond:c2}
is stronger than necessary but tighter conditions are likely to be
complicated and may require case-by-case treatment.

The combination of Theorems \ref{thm:nvb}--\ref{thm:vb} and Proposition
\ref{prop:n}, whose assumptions are not mutually exclusive, allow
us to conclude that the behaviour of $P_{\ref{alg:1_hit_kernel}}$
is characteristically different to $P_{\ref{alg:pm_kernel_1},N}$
and $P_{\ref{alg:pm_kernel_2},N}$ in some settings. In particular,
the use of a larger expected number of simulations from $f_{\theta}$
and $f_{\vartheta}$ in the tails of $\pi$ using $P_{\ref{alg:1_hit_kernel}}$
could be viewed as analogous to being ``stuck'' for many iterations
in the tails of $\pi$ using $P_{\ref{alg:1_hit_kernel}}$ or $P_{\ref{alg:pm_kernel_2},N}$.
However, while both the expected number of simulations and the asymptotic
variance of \eqref{eq:mcmc_estimate} for any $\varphi\in L^{2}(\pi)$
are finite under $P_{\ref{alg:1_hit_kernel}}$ under the conditions
of Theorem \ref{thm:vb}, there are $\varphi\in L^{2}(\pi)$ for which
a central limit theorem does not hold for \eqref{eq:mcmc_estimate}
when using $P_{\ref{alg:pm_kernel_1},N}$ or $P_{\ref{alg:pm_kernel_2},N}$
under the conditions of Theorem \ref{thm:nvb}.

Variance bounding and geometric ergodicity are likely to coincide
in most applications of interest, as variance bounding but non-geometrically
ergodic Metropolis--Hastings kernels exhibit periodic behaviour rarely
encountered in statistical inference. Bounds on the second largest
eigenvalue and/or spectral gap of $P_{\ref{alg:1_hit_kernel}}$ in
relation to properties of $P_{{\rm MH}}$ could be obtained through
Cheeger-like inequalities using conductance arguments as in the proofs
of Theorems~\ref{thm:vb} and~\ref{thm:ge}, although these may
be quite loose in some situations \citep[see, e.g., ][]{diaconis1991geometric}
and we have not pursued them here. Finally, \citet{robertsquantitative}
have demonstrated that some simple Markov chains that are not geometrically
ergodic can converge extremely slowly and that properties of such
algorithms can be very sensitive to even slight parameter changes.

The theoretical results obtained in Section \ref{sec:Theoretical-properties}
and the examples in Section \ref{sec:Examples} provide some understanding
of the relative qualitative merits of $P_{\ref{alg:1_hit_kernel}}$
over $P_{\ref{alg:pm_kernel_1},N}$ and $P_{\ref{alg:pm_kernel_2},N}$.
However, the results do not prove that $P_{\ref{alg:1_hit_kernel}}$
should necessarily be uniformly preferred over $P_{\ref{alg:pm_kernel_2},N}$,
although the examples do suggest that it may have better asymptotic
variance properties when taking cost of simulations into account in
a variety of scenarios. In addition, Theorem~\ref{thm:mix_geo_nongeo}
can be used to justify its mixture with alternative reversible kernels
such as $P_{\ref{alg:pm_kernel_2},N}$ if desired.

\section*{Acknowledgement}

Lee acknowledges the support of the Centre for Research in Statistical
Methodology. Łatuszyński would like to thank the Engineering and Physical
Sciences Research Council, U.K. We are grateful to Arnaud Doucet and
Gareth Roberts for helpful comments.

\section*{}

\appendix

\section{Proofs\label{sec:Proofs_main}}

Many of our proofs make use of the relationship between conductance,
the spectrum of a Markov kernel, and variance bounding for reversible
Markov kernels $P$. In particular, conductance $\kappa>0$ is equivalent
to $\sup S(P)<1$ \citep[Theorem 2.1]{lawler1988bounds}, which as
stated earlier is equivalent to variance bounding. Conductance $\kappa$
for a $\pi$-invariant, transition kernel $P$ on $\Theta$ is defined
as 
\[
\kappa=\inf_{A:0<\pi(A)\leq1/2}\kappa(A),\qquad\kappa(A)=\pi(A)^{-1}\int_{A}P(\theta,A^{\complement})\pi(\mathrm{d}\theta)=\int_{\Theta}P(\theta,A^{\complement})\pi_{A}(\mathrm{d}\theta),
\]
where $\pi_{A}(\mathrm{d}\theta)=\pi(\mathrm{d}\theta)\mathbf{1}_{A}(\theta)/\pi(A)$.

Finally, we make use of the fact that if $q\in\mathcal{Q}$ we can
define the function 
\[
r_{q}(\delta)=\inf\left\{ r:\text{ for all }\theta\in\Theta,\; q\left(\theta,B_{r,\theta}^{\complement}\right)<\delta\right\} .
\]

\begin{proof}[Proof of Theorem~\ref{thm:gen_nvb}]
If $\nu-{\rm ess}\sup_{\theta}P(\theta,\{\theta\})=1$ and $P(\theta,\{\theta\})$
is measurable, then the set $A_{\tau}=\{\theta\in\Theta\,:\, P(\theta,\{\theta\})\geq1-\tau\}$
is measurable and $\nu(A_{\tau})>0$ for every $\tau>0$. Moreover,
$a_{0}=\lim_{\tau\searrow0}\nu(A_{\tau})$ exists, since $A_{\tau_{2}}\subset A_{\tau_{1}}$
for $\tau_{2}<\tau_{1}$. Now, assume $a_{0}>0$, and define $A_{0}=\{\theta\in\Theta\::\: P(\theta,\{\theta\})=1\}=\bigcap_{n}A_{\tau_{n}}$
where $\tau_{n}\searrow0$. By continuity from above $\nu(A_{0})=a_{0}>0$
and since $\nu$ is not concentrated at a single point, $P$ is reducible,
which is a contradiction. Hence $a_{0}=0$. Consequently, by taking
$\tau_{n}\searrow0$ with $\tau_{1}$ small enough, we have $\nu(A_{\tau_{n}})<1/2$
for every $n$, and can upper bound the conductance of $P$ by 
\[
\kappa\leq\lim_{n}\kappa(A_{\tau_{n}})=\lim_{n}\int_{A_{\tau_{n}}}P(\theta,A_{\tau_{n}}^{\complement})\nu_{A_{\tau_{n}}}(\mathrm{d}\theta)\leq\lim_{n}\int_{A_{\tau_{n}}}P(\theta,\{\theta\}^{\complement})\nu_{A_{\tau_{n}}}(\mathrm{d}\theta)=\lim_{n}\tau_{n}=0.
\]
Therefore $P\notin\mathcal{V}$.
\end{proof}

\begin{proof}[Proof of Theorem~\ref{thm:nvb}]
We prove the result for $P_{\ref{alg:pm_kernel_2},N}$. The proof
for $P_{\ref{alg:pm_kernel_1},N}$ is essentially identical, with
minor adjustments for the extended state space, and is omitted. By
Theorem~\ref{thm:gen_nvb}, it suffices to show that $\pi-{\rm ess}\sup_{\theta}P_{\ref{alg:pm_kernel_2},N}(\theta,\{\theta\})=1$,
i.e., for all $\tau>0$, there exists $A\subseteq\Theta$ with $\pi(A)>0$
such that for all $\theta\in A$, $P_{\ref{alg:pm_kernel_2},N}(\theta,\{\theta\}^{\complement})\leq\tau$. 

From Condition~\ref{cond:c1}, $q\in\mathcal{Q}$. Given $\tau>0$,
let $r=r_{q}(\tau/2)$, $v=\inf\left\{ v:\sup_{\theta\in B_{v}^{c}(0)}h(\theta)<1-\left(1-\tau/2\right){}^{1/N}\right\} $
and $A=B_{v+r,0}^{\complement}$. From Condition~\ref{cond:c1},
$\pi(A)>0$ and using \eqref{eq:alpha_kernel} and \eqref{eq:alpha_pm2},
for all $\theta\in A$,
\begin{align*}
P_{\ref{alg:pm_kernel_2},N}(\theta,\{\theta\}^{\complement}) & =\int_{\{\theta\}^{\complement}}\int_{\mathsf{Y}^{N}}\int_{\mathsf{Y}^{N-1}}\left[1\wedge\frac{c(\vartheta,\theta)\sum_{j=1}^{N}w(z_{j})}{c(\theta,\vartheta)\left\{ 1+\sum_{j=1}^{N-1}w(x_{j})\right\} }\right]f_{\theta}^{\otimes N-1}({\rm d}x_{1:N-1})f_{\vartheta}^{\otimes N}({\rm d}z_{1:N})q(\theta,{\rm d}\vartheta)\\
 & \leq\sup_{\theta\in\Theta}q\left(\theta,B_{r,\theta}^{\complement}\right)+\int_{B_{r,\theta}}\int_{\mathsf{Y}^{N}}I\left\{ \sum_{i=1}^{N}w(z_{i})\geq1\right\} f_{\vartheta}^{\otimes N}({\rm d}z_{1:N})q(\theta,{\rm d}\vartheta)\\
 & \leq\frac{\tau}{2}+\int_{B_{r,\theta}}\left[1-\left\{ 1-\sup_{\vartheta\in B_{r,\theta}}h(\vartheta)\right\} ^{N}\right]q(\theta,{\rm d}\vartheta)\leq\tau.
\end{align*}

\end{proof}
The following two Lemmas are pivotal in the proofs of Proposition
\ref{prop:vb_necessity} and Theorems~\ref{thm:vb} and~\ref{thm:ge},
and make extensive use of \eqref{eq:alpha_kernel}, \eqref{eq:alpha_1hit}
and \eqref{eq:alpha_mh}. Their proofs can be found in Appendix~\ref{sec:proofs}.
\begin{lem}
\label{lem:diagonal_dominance}\textup{$P_{\ref{alg:1_hit_kernel}}(\theta,\{\theta\})\geq P_{{\rm MH}}(\theta,\{\theta\})$.}
\end{lem}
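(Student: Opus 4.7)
The plan is to show the pointwise inequality $\alpha_{\ref{alg:1_hit_kernel}}(\theta,\vartheta) \leq \alpha_{\mathrm{MH}}(\theta,\vartheta)$ for every $\vartheta$ and then integrate against $q(\theta,\cdot)$, using the common representation \eqref{eq:alpha_kernel} which expresses $P(\theta,\{\theta\}) = 1 - \int q(\theta,\mathrm{d}\vartheta)\alpha(\theta,\vartheta)$ for both kernels. Thus the claim reduces to a purely algebraic comparison of the two acceptance ratios \eqref{eq:alpha_1hit} and \eqref{eq:alpha_mh}.

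The first key observation is a simple bound on the denominator appearing in \eqref{eq:alpha_1hit}: since $h$ takes values in $[0,1]$,
\begin{equation*}
h(\theta) + h(\vartheta) - h(\theta)h(\vartheta) \;=\; h(\theta) + h(\vartheta)(1-h(\theta)) \;\geq\; \max\{h(\theta),h(\vartheta)\},
\end{equation*}
so dividing $h(\vartheta)$ by this quantity gives
\begin{equation*}
\frac{h(\vartheta)}{h(\theta)+h(\vartheta)-h(\theta)h(\vartheta)} \;\leq\; \frac{h(\vartheta)}{\max\{h(\theta),h(\vartheta)\}} \;=\; 1 \wedge \frac{h(\vartheta)}{h(\theta)}.
\end{equation*}

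The second ingredient is the elementary submultiplicativity $(1\wedge a)(1\wedge b) \leq 1\wedge(ab)$ for $a,b\geq 0$, applied with $a = c(\vartheta,\theta)/c(\theta,\vartheta)$ and $b = h(\vartheta)/h(\theta)$. Multiplying the two displays above therefore yields
\begin{equation*}
\alpha_{\ref{alg:1_hit_kernel}}(\theta,\vartheta) \;\leq\; \left\{1\wedge\frac{c(\vartheta,\theta)}{c(\theta,\vartheta)}\right\}\left\{1\wedge\frac{h(\vartheta)}{h(\theta)}\right\} \;\leq\; 1\wedge\frac{c(\vartheta,\theta)h(\vartheta)}{c(\theta,\vartheta)h(\theta)} \;=\; \alpha_{\mathrm{MH}}(\theta,\vartheta).
\end{equation*}
Integrating this pointwise inequality against $q(\theta,\mathrm{d}\vartheta)$ shows that the total acceptance mass under $P_{\ref{alg:1_hit_kernel}}$ is no larger than that under $P_{\mathrm{MH}}$, so the complementary holding probability is at least as large, which is the claim.

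There is no real obstacle here; the only subtlety is avoiding the division-by-zero issue when $h(\theta)=0$, which is handled either by restricting to the set where $\pi(\theta)>0$ (the kernels agree trivially elsewhere) or by interpreting $\alpha_{\mathrm{MH}}$ as $1$ when the denominator vanishes, in which case the inequality is immediate since $\alpha_{\ref{alg:1_hit_kernel}}\leq 1$ always.
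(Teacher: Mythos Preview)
Your proof is correct and follows essentially the same approach as the paper: establish the pointwise inequality $\alpha_{\ref{alg:1_hit_kernel}}(\theta,\vartheta)\leq\alpha_{\mathrm{MH}}(\theta,\vartheta)$ and then use the representation \eqref{eq:alpha_kernel}. The only difference is presentational: the paper splits into the two cases $c(\vartheta,\theta)\leq c(\theta,\vartheta)$ and $c(\vartheta,\theta)>c(\theta,\vartheta)$, whereas you handle both at once via the submultiplicativity $(1\wedge a)(1\wedge b)\leq 1\wedge(ab)$, which is a slightly cleaner packaging of the same argument.
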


\begin{lem}
\label{lem:kernel_bound}Assume Condition~\ref{cond:c2}\textup{.
For $\pi$-almost all $\theta$ and any $A\subseteq\Theta$ such that
$\theta\in A$ and $r>0$, 
\[
P_{{\rm MH}}(\theta,A^{\complement})\leq\sup_{\theta}q(\theta,B_{r,\theta}^{\complement})+(1+M_{r})P_{\ref{alg:1_hit_kernel}}(\theta,A^{\complement}),
\]
}where $M_{r}$ is as defined in Condition~\ref{cond:c2}.\end{lem}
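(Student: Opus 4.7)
My plan is to bound $P_{{\rm MH}}(\theta,A^{\complement})$ by splitting the proposal region into a ball $B_{r,\theta}$ around $\theta$ and its complement, handle the complement trivially using tightness, and reduce the ball contribution to a pointwise comparison of acceptance probabilities. Concretely, since $\theta\in A$, formulation \eqref{eq:alpha_kernel} gives
\[
P_{{\rm MH}}(\theta,A^{\complement})=\int_{A^{\complement}\cap B_{r,\theta}}q(\theta,{\rm d}\vartheta)\alpha_{{\rm MH}}(\theta,\vartheta)+\int_{A^{\complement}\cap B_{r,\theta}^{\complement}}q(\theta,{\rm d}\vartheta)\alpha_{{\rm MH}}(\theta,\vartheta),
\]
with the same decomposition for $P_{\ref{alg:1_hit_kernel}}$. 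The second integral is at most $q(\theta,B_{r,\theta}^{\complement})\le\sup_{\theta}q(\theta,B_{r,\theta}^{\complement})$, so the lemma will follow once I show
\[
\alpha_{{\rm MH}}(\theta,\vartheta)\le(1+M_{r})\,\alpha_{\ref{alg:1_hit_kernel}}(\theta,\vartheta)\qquad\text{for }\pi\text{-a.e. }\theta\text{ and all }\vartheta\in B_{r,\theta}.
\]

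For the measure-zero caveat, I would first argue that whenever $\pi(\vartheta)q(\vartheta,\theta)=0$ we have $\alpha_{{\rm MH}}(\theta,\vartheta)=0$, because $p(\vartheta)=0$, $h(\vartheta)=0$, or $q(\vartheta,\theta)=0$ each forces the numerator in \eqref{eq:alpha_mh} to vanish; hence the pointwise bound holds trivially outside the set on which Condition~\ref{cond:c2} is imposed, for any $\theta$ with $\pi(\theta)>0$.

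The main work is then the case analysis on the good set. Writing $a=c(\vartheta,\theta)/c(\theta,\vartheta)$, $t=h(\theta)$, $u=h(\vartheta)$, $b=u/t$, I note the identity
\[
\frac{\alpha_{{\rm MH}}(\theta,\vartheta)}{\alpha_{\ref{alg:1_hit_kernel}}(\theta,\vartheta)}=\frac{1\wedge(au/t)}{(1\wedge a)\,u/(t+u-tu)},
\]
and use the key geometric observation $t+u-tu\le t+u$ so $t+u-tu\le t(1+b)$. I will split into the four sub-cases determined by the sign of $a-1$ and the sign of $au-t$ (i.e.\ whether each $\min$ hits $1$ or not) and in each case show the ratio is at most $1+M_{r}$ by invoking whichever half of Condition~\ref{cond:c2} applies: if $b\in[M_{r}^{-1},M_{r}]$, then $b\le M_{r}$ gives the required bound whenever the ratio simplifies to something involving $b$; and in the remaining sub-case (e.g.\ $a>1$ with $au<t$) the constraint $u\ge t/M_{r}$ forces $a<M_{r}$, which again produces the $1+M_{r}$ bound. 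The alternative branch of the disjunction in Condition~\ref{cond:c2}, where $a\in[M_{r}^{-1},M_{r}]$, handles the cases symmetrically, using $1/a\le M_{r}$ whenever the ratio simplifies to an expression involving $1/a$.

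The main obstacle is just keeping the four-by-two case analysis tidy and verifying that every case genuinely yields the constant $1+M_{r}$ rather than something larger; the algebra is elementary but the coupling between which $\min$ is active in $\alpha_{{\rm MH}}$ versus $\alpha_{\ref{alg:1_hit_kernel}}$ and which branch of Condition~\ref{cond:c2} holds must be tracked carefully. Once the pointwise bound is established, integrating against $q(\theta,\cdot)$ over $A^{\complement}\cap B_{r,\theta}$ yields the $(1+M_{r})P_{\ref{alg:1_hit_kernel}}(\theta,A^{\complement})$ term (using $A^{\complement}\cap B_{r,\theta}\subseteq A^{\complement}$ and non-negativity of the integrand on the excluded region), and summing with the tail bound completes the proof.
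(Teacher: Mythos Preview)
Your proposal is correct and follows the same overall architecture as the paper: split $A^{\complement}$ into $B_{r,\theta}$ and its complement, bound the tail by $\sup_{\theta}q(\theta,B_{r,\theta}^{\complement})$, and reduce the ball contribution to the pointwise inequality $\alpha_{{\rm MH}}(\theta,\vartheta)\le(1+M_r)\alpha_{\ref{alg:1_hit_kernel}}(\theta,\vartheta)$ for $\vartheta\in B_{r,\theta}$, handling the degenerate case $\pi(\vartheta)q(\vartheta,\theta)=0$ exactly as you describe.

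The only substantive difference is in how the pointwise acceptance bound is proved. You propose an eight-fold case analysis (four sub-cases for which of the two minima in $\alpha_{{\rm MH}}$ and $\alpha_{\ref{alg:1_hit_kernel}}$ is active, crossed with the two branches of Condition~\ref{cond:c2}); this works, and each case does indeed collapse to a bound of the form $1+b$, $1+1/b$, $1+a$, or $1+1/a$ with the relevant quantity controlled by $M_r$. The paper instead gives a unified chain: writing $\alpha_{{\rm MH}}=c(\vartheta,\theta)h(\vartheta)/\{c(\theta,\vartheta)h(\theta)\vee c(\vartheta,\theta)h(\vartheta)\}$, it observes that either branch of Condition~\ref{cond:c2} yields
\[
M_r\{c(\theta,\vartheta)h(\theta)\vee c(\vartheta,\theta)h(\vartheta)\}\ge c(\theta,\vartheta)h(\vartheta)\vee c(\vartheta,\theta)h(\theta),
\]
and then uses $(A\vee B)+(C\vee D)\ge(A+C)\vee(B+D)$ to reach $\{c(\theta,\vartheta)\vee c(\vartheta,\theta)\}\{h(\theta)+h(\vartheta)\}\ge c(\vartheta,\theta)h(\vartheta)/\alpha_{\ref{alg:1_hit_kernel}}(\theta,\vartheta)$. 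This avoids all case splits and treats the disjunction in Condition~\ref{cond:c2} symmetrically in a single line. Your route is more pedestrian but entirely sound; the paper's is shorter and arguably more transparent about why the disjunctive hypothesis suffices.
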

\begin{proof}[Proof of Theorem~\ref{thm:vb}]
We prove the result under Condition~\ref{cond:c2}. Let $\kappa_{{\rm MH}}$
and $\kappa_{\ref{alg:1_hit_kernel}}$ be the conductance of $P_{{\rm MH}}$
and $P_{\ref{alg:1_hit_kernel}}$ respectively, and $A$ be a measurable
set with $\pi(A)>0$. Since $q\in\mathcal{Q}$ we let $R=r_{q}(\kappa_{{\rm MH}}/2)$
and $M_{R}$ be as in Condition~\ref{cond:c2}. Then by Lemma~\ref{lem:kernel_bound}
we have
\begin{align*}
\kappa_{{\rm MH}}(A) & =\int_{\Theta}P_{{\rm MH}}(\theta,A^{\complement})\pi_{A}(\mathrm{d}\theta)\leq\frac{\kappa_{{\rm MH}}}{2}+(1+M_{R})\int_{\Theta}P_{\ref{alg:1_hit_kernel}}(\theta,A^{\complement})\pi_{A}(\mathrm{d}\theta)\\
 & =\frac{\kappa_{{\rm MH}}}{2}+(1+M_{R})\kappa_{\ref{alg:1_hit_kernel}}(A).
\end{align*}
Since $A$ is arbitrary, we conclude that $\kappa_{{\rm MH}}\leq2(1+M_{R})\kappa_{\ref{alg:1_hit_kernel}}$
so $\kappa_{{\rm MH}}>0\Rightarrow\kappa_{\ref{alg:1_hit_kernel}}>0$.
\end{proof}

\section{Supplementary proofs\label{sec:proofs}}
\begin{proof}[Proof of Proposition~\ref{prop:vb_necessity}]
Lemma~\ref{lem:diagonal_dominance} gives $P_{\ref{alg:1_hit_kernel}}\preceq P_{{\rm MH}}$
in the sense of \citep{Peskun1973,Tierney1998} and so ${\rm var}(P_{\ref{alg:1_hit_kernel}},\varphi)\geq{\rm var}(P_{{\rm MH}},\varphi)$.
By \citet[Theorem 8]{roberts2008variance}, $P_{\ref{alg:1_hit_kernel}}\preceq P_{{\rm MH}}\Longrightarrow(P_{\ref{alg:1_hit_kernel}}\in\mathcal{V}\Rightarrow P_{{\rm MH}}\in\mathcal{V})$.
\end{proof}

\begin{proof}[Proof of Lemma~\ref{lem:diagonal_dominance}]
We show that for any $(\theta,\vartheta)$, $\alpha_{\ref{alg:1_hit_kernel}}(\theta,\vartheta)\leq\alpha_{{\rm MH}}(\theta,\vartheta)$.
Consider the case $c(\vartheta,\theta)\leq c(\theta,\vartheta)$.
Then since $h(\theta)\leq1$,
\begin{align*}
\alpha_{\ref{alg:1_hit_kernel}}(\theta,\vartheta) & =\frac{c(\vartheta,\theta)}{c(\theta,\vartheta)}\frac{h(\vartheta)}{h(\vartheta)+h(\theta)-h(\vartheta)h(\theta)}\leq1\wedge\frac{c(\vartheta,\theta)h(\vartheta)}{c(\theta,\vartheta)h(\theta)}=\alpha_{{\rm MH}}(\theta,\vartheta).
\end{align*}
Similarly, if $c(\vartheta,\theta)>c(\theta,\vartheta)$, we have
\[
\alpha_{\ref{alg:1_hit_kernel}}(\theta,\vartheta)=\frac{h(\vartheta)}{h(\vartheta)+h(\theta)-h(\vartheta)h(\theta)}\leq1\wedge\frac{c(\vartheta,\theta)h(\vartheta)}{c(\theta,\vartheta)h(\theta)}=\alpha_{{\rm MH}}(\theta,\vartheta).
\]
 This immediately implies $P_{\ref{alg:1_hit_kernel}}(\theta,\{\theta\})\geq P_{{\rm MH}}(\theta,\{\theta\})$
since $P(\theta,\{\theta\})=1-\int_{\Theta\setminus\{\theta\}}q(\theta,\vartheta)\alpha(\theta,\vartheta)d\vartheta$.
\end{proof}

\begin{proof}[Proof of Lemma~\ref{lem:kernel_bound}]
We begin by showing that for $\vartheta\in B_{r}(\theta)$ and $\vartheta\neq\theta$,
\begin{align}
\alpha_{{\rm MH}}(\theta,\vartheta) & \leq(1+M_{r})\alpha_{\ref{alg:1_hit_kernel}}(\theta,\vartheta).\label{eq:alpha_bound}
\end{align}
First we deal with the case $h(\vartheta)p(\vartheta)q(\vartheta,\theta)=0$.
Then the inequality is trivially satisfied as $\alpha_{{\rm MH}}(\theta,\vartheta)=\alpha_{\ref{alg:1_hit_kernel}}(\theta,\vartheta)=0$.
Conversely, if $\pi(\theta)q(\theta,\vartheta)>0$ and $\pi(\vartheta)q(\vartheta,\theta)>0$
and additionally $\vartheta\in B_{r,\theta}$, then under Condition~\ref{cond:c2},
\begin{align*}
\frac{(1+M_{r})c(\vartheta,\theta)h(\vartheta)}{\alpha_{{\rm MH}}(\theta,\vartheta)} & =(1+M_{r})\left\{ c(\theta,\vartheta)h(\theta)\vee c(\vartheta,\theta)h(\vartheta)\right\} \\
 & \geq\left\{ c(\theta,\vartheta)h(\theta)\vee c(\vartheta,\theta)h(\vartheta)\right\} +\left\{ c(\theta,\vartheta)h(\vartheta)\vee c(\vartheta,\theta)h(\theta)\right\} \\
 & \geq\{(c(\theta,\vartheta)h(\theta)+c(\theta,\vartheta)h(\vartheta))\vee(c(\vartheta,\theta)h(\vartheta)+c(\vartheta,\theta)h(\theta))\}\\
 & =\frac{c(\vartheta,\theta)h(\vartheta)}{\{\frac{c(\vartheta,\theta)h(\vartheta)}{c(\theta,\vartheta)h(\theta)+c(\theta,\vartheta)h(\vartheta)}\wedge\frac{c(\vartheta,\theta)h(\vartheta)}{c(\vartheta,\theta)h(\vartheta)+c(\vartheta,\theta)h(\theta)}\}}=\frac{c(\vartheta,\theta)h(\vartheta)}{\frac{h(\vartheta)}{h(\vartheta)+h(\theta)}\{\frac{c(\vartheta,\theta)}{c(\theta,\vartheta)}\wedge1\}}\\
 & \geq\frac{c(\vartheta,\theta)h(\vartheta)}{\alpha_{\ref{alg:1_hit_kernel}}(\theta,\vartheta)},
\end{align*}
i.e., $\alpha_{{\rm MH}}(\theta,\vartheta)\leq(1+M_{r})\alpha_{\ref{alg:1_hit_kernel}}(\theta,\vartheta)$.
The first inequality is obtained by recalling that under Condition~\ref{cond:c2},
when $\pi(\theta)q(\theta,\vartheta)\wedge\pi(\vartheta)q(\vartheta,\theta)>0$
we have $M_{r}^{-1}\leq h(\vartheta)/h(\theta)\leq M_{r}$ or $M_{r}^{-1}\leq c(\vartheta,\theta)/c(\theta,\vartheta)\leq M_{r}$
and in either case $M_{r}\left\{ c(\theta,\vartheta)h(\theta)\vee c(\vartheta,\theta)h(\vartheta)\right\} \geq\left\{ c(\theta,\vartheta)h(\vartheta)\right\} \vee\left\{ c(\vartheta,\theta)h(\theta)\right\} $.

Hence, we have
\begin{align*}
P_{{\rm MH}}(\theta,A^{\complement}) & =\int_{A^{\complement}}\alpha_{{\rm MH}}(\theta,\vartheta)q(\theta,{\rm d}\vartheta)\leq q(\theta,B_{r,\theta}^{\complement})+\int_{A^{\complement}\cap B_{r,\theta}}(1+M_{r})\alpha_{\ref{alg:1_hit_kernel}}(\theta,\vartheta)q(\theta,{\rm d}\vartheta)\\
 & \leq\sup_{\theta}q(\theta,B_{r,\theta}^{\complement})+(1+M_{r})P_{\ref{alg:1_hit_kernel}}(\theta,A^{\complement}).
\end{align*}

\end{proof}

\begin{proof}[Proof of Theorem~\ref{thm:ge}]
Recall that geometric ergodicity is equivalent to $\sup|\sigma_{0}(P)|<1$.
From the spectral mapping theorem \citep{conway1990course} this is
equivalent to $\sup\sigma_{0}(P^{2})<1$, where $\sigma_{0}(P^{2})$
is the spectrum of $P^{2}$, the two-fold iterate of $P$. We denote
by $\kappa_{\ref{alg:1_hit_kernel}}^{(2)}$ and $\kappa_{{\rm MH}}^{(2)}$
the conductance of $P_{\ref{alg:1_hit_kernel}}^{2}$ and $P_{{\rm MH}}^{2}$
respectively. Since $q\in\mathcal{Q}$ we let $R=r_{q}(\kappa_{{\rm MH}}^{(2)}/4)$
and $M_{R}$ be as in Condition~\ref{cond:c2}. By Lemmas~\ref{lem:diagonal_dominance}
and~\ref{lem:kernel_bound}, we have for any measurable $A\subseteq\Theta$
\begin{align*}
P_{{\rm MH}}(\theta,A) & =P_{{\rm MH}}(\theta,A\setminus\{\theta\})+I(\theta\in A)P_{{\rm MH}}(\theta,\{\theta\})\\
 & \leq\kappa_{{\rm MH}}^{(2)}/4+(1+M_{R})P_{\ref{alg:1_hit_kernel}}(\theta,A\setminus\{\theta\})+P_{\ref{alg:1_hit_kernel}}(\theta,\{\theta\})\\
 & \leq\kappa_{{\rm MH}}^{(2)}/4+(1+M_{R})P_{\ref{alg:1_hit_kernel}}(\theta,A).
\end{align*}
We can also upper bound, for any $\theta\in\Theta$, the Radon--Nikodym
derivative of $P_{{\rm MH}}(\theta,\cdot)$ with respect to $P_{\ref{alg:1_hit_kernel}}(\theta,\cdot)$
for any $\vartheta\in B_{R,\theta}$ as
\begin{align*}
\frac{{\rm d}P_{{\rm MH}}(\theta,\cdot)}{{\rm d}P_{\ref{alg:1_hit_kernel}}(\theta,\cdot)}(\vartheta) & =I(\vartheta\in B_{R,\theta}\setminus\{\theta\})\frac{{\rm d}q(\theta,\cdot)}{{\rm d}q(\theta,\cdot)}(\vartheta)\frac{\alpha_{{\rm MH}}(\theta,\vartheta)}{\alpha_{\ref{alg:1_hit_kernel}}(\theta,\vartheta)}+I(\vartheta=\theta)\frac{P_{{\rm MH}}(\theta,\{\theta\})}{P_{\ref{alg:1_hit_kernel}}(\theta,\{\theta\})}\\
 & \leq I(\vartheta\in B_{R,\theta}\setminus\{\theta\})(1+M_{R})+I(\vartheta=\theta)\leq1+M_{R},
\end{align*}
where we have used~\eqref{eq:alpha_bound} and Lemma~\ref{lem:diagonal_dominance}
in the first inequality.

Let $A$ be a measurable set with $\pi(A)>0$. We have
\begin{align*}
\kappa_{{\rm MH}}^{(2)}(A) & =\int_{A}\left\{ \int_{\Theta}P_{{\rm MH}}(\vartheta,A^{\complement})P_{{\rm MH}}(\theta,{\rm d}\vartheta)\right\} \pi_{A}(\mathrm{d}\theta)\\
 & =\int_{A}\left\{ \int_{B_{R,\theta}^{\complement}}P_{{\rm MH}}(\vartheta,A^{\complement})P_{{\rm MH}}(\theta,{\rm d}\vartheta)+\int_{B_{R,\theta}}P_{{\rm MH}}(\vartheta,A^{\complement})P_{{\rm MH}}(\theta,{\rm d}\vartheta)\right\} \pi_{A}(\mathrm{d}\theta)\\
 & \leq\int_{A}\left\{ q(\theta,B_{R,\theta}^{\complement})+\int_{B_{R,\theta}}P_{{\rm MH}}(\vartheta,A^{\complement})P_{{\rm MH}}(\theta,{\rm d}\vartheta)\right\} \pi_{A}(\mathrm{d}\theta)\\
 & \leq\kappa_{{\rm MH}}^{(2)}/4+\int_{A}\int_{B_{R,\theta}}P_{{\rm MH}}(\vartheta,A^{\complement})P_{{\rm MH}}(\theta,{\rm d}\vartheta)\pi_{A}(\mathrm{d}\theta)\\
 & \leq\kappa_{{\rm MH}}^{(2)}/4+\int_{A}\int_{B_{R,\theta}}\left\{ \kappa_{{\rm MH}}^{(2)}/4+(1+M_{R})P_{\ref{alg:1_hit_kernel}}(\vartheta,A^{\complement})\right\} P_{{\rm MH}}(\theta,{\rm d}\vartheta)\pi_{A}(\mathrm{d}\theta)\\
 & \leq\kappa_{{\rm MH}}^{(2)}/2+(1+M_{R})\int_{A}\int_{B_{R,\theta}}P_{\ref{alg:1_hit_kernel}}(\vartheta,A^{\complement})P_{{\rm MH}}(\theta,{\rm d}\vartheta)\pi_{A}(\mathrm{d}\theta)\\
 & =\kappa_{{\rm MH}}^{(2)}/2+(1+M_{R})\int_{A}\int_{B_{R,\theta}}P_{\ref{alg:1_hit_kernel}}(\vartheta,A^{\complement})\frac{{\rm d}P_{{\rm MH}}(\theta,\cdot)}{{\rm d}P_{\ref{alg:1_hit_kernel}}(\theta,\cdot)}(\vartheta)P_{\ref{alg:1_hit_kernel}}(\theta,{\rm d}\vartheta)\pi_{A}(\mathrm{d}\theta)\\
 & \leq\kappa_{{\rm MH}}^{(2)}/2+(1+M_{R})^{2}\int_{A}\int_{B_{R,\theta}}P_{\ref{alg:1_hit_kernel}}(\vartheta,A^{\complement})P_{\ref{alg:1_hit_kernel}}(\theta,{\rm d}\vartheta)\pi_{A}(\mathrm{d}\theta)\\
 & \leq\kappa_{{\rm MH}}^{(2)}/2+(1+M_{R})^{2}\int_{A}\int_{\Theta}P_{\ref{alg:1_hit_kernel}}(\vartheta,A^{\complement})P_{\ref{alg:1_hit_kernel}}(\theta,{\rm d}\vartheta)\pi_{A}(\mathrm{d}\theta)\\
 & =\kappa_{{\rm MH}}^{(2)}/2+(1+M_{R})^{2}\kappa_{\ref{alg:1_hit_kernel}}^{(2)}(A).
\end{align*}
Since $A$ is arbitrary, we conclude that $\kappa_{{\rm MH}}^{(2)}\leq2(1+M_{R})^{2}\kappa_{\ref{alg:1_hit_kernel}}^{(2)}$
so $\kappa_{{\rm MH}}^{(2)}>0\Rightarrow\kappa_{\ref{alg:1_hit_kernel}}^{(2)}>0$.\end{proof}
\begin{lem}
\label{lem:mix_vb_geo}Let $K_{1}$ be a reversible Markov kernel
with unique invariant distribution $\pi$ and let $K_{2}$ be reversible
with invariant distribution $\pi$. Let $\tilde{K}=aK_{1}+(1-a)K_{2}$
be a mixture of $K_{1}$ and $K_{2}$ for $a\in(0,1]$. Then $K_{1}\in\mathcal{V}\Rightarrow\tilde{K}\in\mathcal{V}$
and $K_{1}\in\mathcal{G}\Rightarrow\tilde{K}\in\mathcal{G}$.\end{lem}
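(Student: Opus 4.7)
The plan is to use the variational (Rayleigh--Ritz) description of the spectrum together with the spectral characterizations of $\mathcal{V}$ and $\mathcal{G}$ recalled in the introduction. Specifically, if $P$ is a $\pi$-reversible Markov kernel, viewed as a bounded self-adjoint operator on $L^{2}(\pi)$, then $\sigma(P)\subseteq[-1,1]$ and
\[
\sup\sigma_{0}(P)=\sup_{f\in L_{0}^{2}(\pi),\,\|f\|_{\pi}=1}\langle Pf,f\rangle_{\pi},\qquad\inf\sigma_{0}(P)=\inf_{f\in L_{0}^{2}(\pi),\,\|f\|_{\pi}=1}\langle Pf,f\rangle_{\pi}.
\]

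First I would observe that $\tilde{K}$ is itself $\pi$-reversible (detailed balance is preserved under convex combinations) and $\pi$-invariant, and inherits irreducibility from $K_{1}$ via the minorization $\tilde{K}(\theta,\cdot)\geq aK_{1}(\theta,\cdot)$ with $a>0$; in particular $\pi$ is the unique invariant distribution of $\tilde{K}$, so that the spectral characterizations of $\mathcal{V}$ and $\mathcal{G}$ apply to $\tilde{K}$. Then, for any unit $f\in L_{0}^{2}(\pi)$, linearity of the inner product gives
\[
\langle\tilde{K}f,f\rangle_{\pi}=a\langle K_{1}f,f\rangle_{\pi}+(1-a)\langle K_{2}f,f\rangle_{\pi},
\]
and since $K_{2}$ is $\pi$-reversible with operator norm at most $1$, $\langle K_{2}f,f\rangle_{\pi}\in[-1,1]$. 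Taking suprema and infima over unit $f\in L_{0}^{2}(\pi)$ therefore yields
\[
\sup\sigma_{0}(\tilde{K})\leq a\sup\sigma_{0}(K_{1})+(1-a),\qquad\inf\sigma_{0}(\tilde{K})\geq a\inf\sigma_{0}(K_{1})-(1-a).
\]

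Both implications then follow immediately. If $K_{1}\in\mathcal{V}$, so that $\sup\sigma_{0}(K_{1})<1$, the first inequality gives $\sup\sigma_{0}(\tilde{K})<a+(1-a)=1$ and hence $\tilde{K}\in\mathcal{V}$. If $K_{1}\in\mathcal{G}$, so that additionally $\inf\sigma_{0}(K_{1})>-1$, the second inequality gives $\inf\sigma_{0}(\tilde{K})>-1$; combined with the right-edge bound, $\sup|\sigma_{0}(\tilde{K})|<1$, i.e., $\tilde{K}\in\mathcal{G}$.

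There is no genuinely hard step: the entire argument rests on the monotonicity of the Rayleigh quotient under convex combinations together with $\|K_{2}\|_{L^{2}(\pi)\to L^{2}(\pi)}\leq 1$. The only point requiring a moment's care is ensuring that the spectral criteria may legitimately be invoked for $\tilde{K}$, and this is immediate from the observation that $\pi$ is the unique invariant distribution of $\tilde{K}$, as established in the first step.
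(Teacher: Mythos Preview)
Your argument is correct, but it proceeds by a different route than the paper's. The paper works with conductance: for the variance-bounding part it notes that $\tilde{K}(\theta,A^{\complement})\geq aK_{1}(\theta,A^{\complement})$ implies $\tilde{\kappa}\geq a\kappa_{1}>0$; for geometric ergodicity it passes to $\tilde{K}^{2}$ via the spectral mapping theorem and uses $\tilde{K}^{2}(\theta,A^{\complement})\geq a^{2}K_{1}^{2}(\theta,A^{\complement})$ to get $\tilde{\kappa}^{(2)}\geq a^{2}\kappa_{1}^{(2)}>0$. Your Rayleigh-quotient approach is arguably more direct: it avoids the detour through conductance and the two-step kernel altogether, and it delivers sharper quantitative control, namely $1-\sup\sigma_{0}(\tilde{K})\geq a\bigl(1-\sup\sigma_{0}(K_{1})\bigr)$ and $1+\inf\sigma_{0}(\tilde{K})\geq a\bigl(1+\inf\sigma_{0}(K_{1})\bigr)$, whereas the conductance route only yields spectral bounds after passing back through Cheeger-type inequalities. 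The paper's approach, on the other hand, stays within the conductance machinery used elsewhere in the article and makes the minorization $\tilde{K}\geq aK_{1}$ do all the work at the level of transition probabilities rather than quadratic forms.
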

\begin{proof}
For the first part, assume $K_{1}\in\mathcal{V}$. Then since $K_{1}$
is reversible with unique invariant distribution $\pi$, its conductance
$\kappa_{1}$ satisfies $\kappa_{1}>0$. Since $K_{2}$ is also reversible,
the mixture $\tilde{K}$ is reversible with unique invariant distribution
$\pi$ and its conductance is
\begin{align*}
\tilde{\kappa} & =\inf_{A:0<\pi(A)\leq1/2}\int_{A}\tilde{K}(\theta,A^{\complement})\pi_{A}(\mathrm{d}\theta)\\
 & \geq\inf_{A:0<\pi(A)\leq1/2}\int_{A}aK_{1}(\theta,A^{\complement})\pi_{A}(\mathrm{d}\theta)\\
 & =a\kappa_{1}>0.
\end{align*}
Hence $\tilde{K}\in\mathcal{V}$.

Similarly, for the second part, assume $K_{1}\in\mathcal{G}$. Then
the conductance of $K_{1}^{2}$, $\kappa_{1}^{(2)}$, satisfies $\kappa_{1}^{(2)}>0$
by the spectral mapping theorem \citep{conway1990course}. Let $\tilde{\kappa}^{(2)}$
be the conductance of $\tilde{K}^{2}$, and it suffices to show that
$\tilde{\kappa}^{(2)}>0$. We have 
\begin{align*}
\tilde{\kappa}^{(2)} & =\inf_{A:0<\pi(A)\leq1/2}\int_{A}\tilde{K}^{2}(\theta,A^{\complement})\pi_{A}(\mathrm{d}\theta)\\
 & \geq\inf_{A:0<\pi(A)\leq1/2}\int_{A}a^{2}K_{1}^{2}(\theta,A^{\complement})\pi_{A}(\mathrm{d}\theta)\\
 & =a^{2}\kappa_{1}^{(2)}>0.
\end{align*}
Hence $\tilde{K}\in\mathcal{G}$.
\end{proof}

\begin{proof}[Proof of Theorem~\ref{thm:mix_geo_nongeo}]
The result is immediate upon defining $L_{1}=K_{1}$, $L_{2}=\left(1-a_{1}\right)^{-1}\sum_{i=2}^{\infty}a_{i}K_{i}$
and $\tilde{L}=a_{1}L_{1}+(1-a_{1})L_{2}$ and applying Lemma~\ref{lem:mix_vb_geo}.
\end{proof}

\begin{proof}[Proof of Proposition~\ref{prop:n}]
If the current state of the Markov chain is $\theta$, the expected
value of $N$ is
\[
n(\theta)=\int_{\Theta}\frac{\left[1\wedge\left\{ c(\vartheta,\theta)/c(\theta,\vartheta)\right\} \right]}{h(\theta)+h(\vartheta)-h(\theta)h(\vartheta)}q(\theta,{\rm d}\vartheta),
\]
since upon drawing $\vartheta\sim q(\theta,\cdot)$, $N=0$ with probability
$1-\left\{ 1\wedge c(\vartheta,\theta)\right\} $ and with probability
$\left\{ 1\wedge c(\vartheta,\theta)\right\} $ it is the minimum
of two geometric random variables with success probabilities $h(\theta)$
and $h(\vartheta)$, i.e.\ it is a geometric random variable with
success probability $h(\theta)+h(\vartheta)-h(\theta)h(\vartheta)$.

Since $P_{\ref{alg:1_hit_kernel}}$ is $\pi$-invariant and irreducible,
the strong law of large numbers for Markov chains implies
\[
n=\int_{\Theta}n(\theta)\pi(\mathrm{d}\theta)=H^{-1}\int_{\Theta^{2}}\frac{\left[1\wedge\left\{ c(\vartheta,\theta)/c(\theta,\vartheta)\right\} \right]h(\theta)}{h(\theta)+h(\vartheta)-h(\theta)h(\vartheta)}q(\theta,{\rm d}\vartheta)p({\rm d}\theta)\leq H^{-1}<\infty,
\]
where we have used $\int_{\Theta}p(\theta)\mathrm{d}\theta=1$ in
the first inequality.
\end{proof}

\section{Negative results in other settings\label{sec:more_neg_results}}

This appendix extends Theorem~\ref{thm:nvb} to a number of related
approximate Bayesian computation settings. These results indicate
that the conclusions of Theorem~\ref{thm:nvb} about lack of geometric
ergodicity and variance bounding property hold much more universally.
We first consider the case where one utilizes a proposal that falls
just outside the definition of $\mathcal{Q}$. Of particular interest
could be those proposals that are biased towards the centre of $\Theta$
but are not global. To this end, we can define
\[
\mathcal{Q}_{0}=\{q\,:\,\text{for all }\delta>0\:\text{and }r>0,\:\text{there exists }R>0\:\text{such that for all }\theta\in B_{R,0}^{\complement},\: q(\theta,B_{r,0})<\delta\},
\]
which includes, for example, the autoregressive proposal $q(\theta,\vartheta)=\mathcal{N}(\vartheta;\rho\theta,\sigma^{2})$
for some $\rho\in(0,1)$. The following result indicates that such
proposals are similarly associated with lack of variance bounding
for $P_{\ref{alg:pm_kernel_2},N}$.
\begin{prop}
Let $q\in\mathcal{Q}_{0}$ and assume that \textup{for all $r>0$,
$\pi(B_{r,0}^{\complement})>0$, and for all $\delta>0$ there exists
$v>0$ such that $\sup_{\theta\in B_{v,0}^{\complement}}h(\theta)<\delta$.
Then }$P_{\ref{alg:pm_kernel_2},N}\notin\mathcal{V}$ for any $N\in\mathbb{N}$.\end{prop}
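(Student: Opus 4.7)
The plan is to follow the blueprint of the proof of Theorem~\ref{thm:nvb}, using Theorem~\ref{thm:gen_nvb} as the driver: it suffices to show that $\pi$-$\mathrm{ess}\sup_\theta P_{\ref{alg:pm_kernel_2},N}(\theta,\{\theta\})=1$. The only thing that changes is how tightness of $q$ is exploited. Under $q\in\mathcal{Q}$ we previously obtained a radius $r$ such that $q(\theta,B_{r,\theta}^\complement)$ was small \emph{uniformly in} $\theta$. Here $q\in\mathcal{Q}_0$ only tells us that, once $\theta$ is far enough from $0$, the mass of $q(\theta,\cdot)$ near $0$ is small. Correspondingly, the sticky set $A$ will be taken as the complement of a ball around the origin, matching the region where $h$ is small by hypothesis.

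Fix $\tau>0$. First choose $\delta_0>0$ small enough that $1-(1-\delta_0)^N<\tau/2$. By the hypothesis on $h$, pick $v>0$ with $\sup_{\theta\in B_{v,0}^\complement}h(\theta)<\delta_0$. Then, applying the definition of $\mathcal{Q}_0$ with parameters $r=v$ and $\delta=\tau/2$, obtain $R>0$ such that $q(\theta,B_{v,0})<\tau/2$ for all $\theta\in B_{R,0}^\complement$. Set $A=B_{R,0}^\complement$; by assumption $\pi(A)>0$.

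The key bound is that the acceptance probability $\alpha_{\ref{alg:pm_kernel_2},N}(\theta,\vartheta)$ vanishes on the event $\sum_{j=1}^{N}w(z_j)=0$, so
\[
P_{\ref{alg:pm_kernel_2},N}(\theta,\{\theta\}^\complement)
\;\leq\;\int_\Theta\bigl[1-(1-h(\vartheta))^N\bigr]\,q(\theta,\mathrm{d}\vartheta),
\]
interpreting the $\sum_{j=1}^{N}w(z_j)$ as $\mathrm{Binomial}\{N,h(\vartheta)\}$. For $\theta\in A$, split this integral over $B_{v,0}$ and $B_{v,0}^\complement$: the former contributes at most $q(\theta,B_{v,0})<\tau/2$ by choice of $R$, and on the latter $h(\vartheta)<\delta_0$, so the integrand is bounded by $1-(1-\delta_0)^N<\tau/2$. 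Adding these yields $P_{\ref{alg:pm_kernel_2},N}(\theta,\{\theta\}^\complement)\leq\tau$, hence $P_{\ref{alg:pm_kernel_2},N}(\theta,\{\theta\})\geq 1-\tau$ on $A$.

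Since $\tau>0$ was arbitrary and $\pi(A)>0$ in each case, $\pi$-$\mathrm{ess}\sup_\theta P_{\ref{alg:pm_kernel_2},N}(\theta,\{\theta\})=1$, and Theorem~\ref{thm:gen_nvb} delivers $P_{\ref{alg:pm_kernel_2},N}\notin\mathcal{V}$. The only subtle point is the coupling of choices: one must first fix $v$ from the assumption on $h$ and only then invoke $\mathcal{Q}_0$ at radius $v$ to secure $R$; attempting the choices in the opposite order fails because $\mathcal{Q}_0$ supplies no uniform tightness radius. There is no real obstacle beyond this bookkeeping.
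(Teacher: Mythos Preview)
Your proof is correct and follows essentially the same route as the paper's: reduce via Theorem~\ref{thm:gen_nvb} to showing $\pi\text{-ess}\sup_\theta P_{\ref{alg:pm_kernel_2},N}(\theta,\{\theta\})=1$, then for given $\tau$ first choose the radius $v$ (the paper calls it $r$) so that $h$ is small outside $B_{v,0}$, next use $\mathcal{Q}_0$ to find $R$ with $q(\theta,B_{v,0})<\tau/2$ for $\theta\in B_{R,0}^\complement$, and split the acceptance integral accordingly. The only difference is cosmetic: you introduce an auxiliary $\delta_0$ where the paper directly sets the threshold on $h$ to $1-(1-\tau/2)^{1/N}$.
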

\begin{proof}
By Theorem~\ref{thm:gen_nvb}, it suffices to show that $\pi-{\rm ess}\sup_{\theta}P_{\ref{alg:pm_kernel_2},N}(\theta,\{\theta\})=1$.
Let $q\in\mathcal{Q}_{0}$, $\tau>0$ and take $r=\inf\left\{ r:\sup_{\theta\in B_{r,0}^{\complement}}h(\theta)<1-\left(1-\tau/2\right)^{1/N}\right\} $
and $R=\inf\left\{ R:\sup_{\theta\in B_{R,0}^{\complement}}q(\theta,B_{r,0})<\tau/2\right\} $,
which both exist by assumption. Furthermore, $\pi(B_{R,0}^{\complement})>0$.
Let $A_{\tau}=B_{R,0}^{\complement}$. We have for all $\theta\in A$,
\begin{align*}
P_{\ref{alg:pm_kernel_2},N}(\theta,\{\theta\}^{\complement}) & =\int_{\{\theta\}^{\complement}}\int_{\mathsf{Y}^{N}}\int_{\mathsf{Y}^{N-1}}\left[1\wedge\frac{c(\vartheta,\theta)\sum_{j=1}^{N}w(z_{j})}{c(\theta,\vartheta)\left\{ 1+\sum_{j=1}^{N-1}w(x_{j})\right\} }\right]f_{\theta}^{\otimes N-1}({\rm d}x_{1:N-1})f_{\vartheta}^{\otimes N}({\rm d}z_{1:N})q(\theta,{\rm d}\vartheta)\\
 & \leq q\left(\theta,B_{r,\theta}\right)+\int_{B_{r,\theta}^{\complement}}\int_{\mathsf{Y}^{N}}I\left\{ \sum_{i=1}^{N}w(z_{i})\geq1\right\} f_{\vartheta}^{\otimes N}({\rm d}z_{1:N})q(\theta,{\rm d}\vartheta)\\
 & \leq\frac{\tau}{2}+\int_{B_{r,\theta}^{\complement}}\left[1-\left\{ 1-\sup_{\vartheta\in B_{r,\theta}^{\complement}}h(\vartheta)\right\} ^{N}\right]q(\theta,{\rm d}\vartheta)\leq\tau,
\end{align*}
so $\pi-{\rm ess}\sup_{\theta}P_{\ref{alg:pm_kernel_2},N}(\theta,\{\theta\})=1$.
\end{proof}
We now consider a more general specification of~\eqref{eq:abc_approx_likelihood},
and consider the artificial likelihood
\[
\tilde{f}_{\theta}^{\epsilon}(y)=\int_{\mathsf{Y}}K_{\epsilon}(x,y)f_{\theta}(x)\mathrm{d}x,
\]
where $K_{\epsilon}$ is a Markov kernel. Note that with $K_{\epsilon}(x,y)=V(\epsilon)^{-1}I(y\in B_{\epsilon,x})$
we recover~\eqref{eq:abc_approx_likelihood}. We further consider
a target augmented with $\epsilon$, i.e. 
\[
\pi(\theta,\epsilon)\propto p(\theta,\epsilon)\tilde{f}_{\theta}^{\epsilon}(y),
\]
as such targets have been suggested in an attempt to improve performance
of associated Markov kernels \citep[see, e.g., ][]{bortot2007inference,sisson2010likelihood}.
Note that one could allow $p(\epsilon)$ to be concentrated at a single
point to define a target with a fixed value of $\epsilon$.

We consider the Markov kernel
\begin{align*}
P_{4}(\theta,\epsilon,x;{\rm d}\vartheta,{\rm d}\varepsilon,{\rm d}z) & =q_{4}(\theta,\epsilon;{\rm d}\vartheta,{\rm d}\varepsilon)f_{\vartheta}({\rm d}z)\alpha_{4}(\theta,\epsilon,x;\vartheta,\varepsilon,z)\\
 & \quad+\left\{ 1-\int_{\Theta}q_{4}(\theta,\epsilon;{\rm d}\theta',{\rm d}\epsilon')f_{\vartheta}({\rm d}x')\alpha_{4}(\theta,\theta')\right\} \delta_{(\theta,\epsilon,x)}({\rm d}\vartheta,{\rm d}\varepsilon,{\rm d}z),
\end{align*}
where
\[
\alpha_{4}(\theta,\epsilon,x;\vartheta,\varepsilon,z)=1\wedge\frac{p(\vartheta,\varepsilon)q((\vartheta,\varepsilon),(\theta,\epsilon))K_{\varepsilon}(x,y)}{p(\theta,\epsilon)q((\theta,\epsilon),(\vartheta,\varepsilon))K_{\epsilon}(z,y)},
\]
which can be seen as an analogue of $P_{\ref{alg:pm_kernel_1},1}$.
Extensions to $N>1$ are possible using the methodology of \citet{Beaumont2003,Andrieu2009},
and the following result also holds for $N>1$. Furthermore, if $P_{4}$
is irreducible and aperiodic it admits $\tilde{\pi}(\theta,\epsilon,x)\propto p(\theta,\epsilon)f_{\theta}(x)K_{\epsilon}(x,y)$
as its unique invariant distribution which after integrating out $x$
results in the $(\theta,\epsilon)$-marginal $\tilde{\pi}(\theta,\epsilon)\propto p(\theta,\epsilon)\tilde{f}_{\theta}^{\epsilon}(y)$.
The following result indicates that $P_{4}$ is not variance bounding
under some mild general conditions.

We first introduce mild general assumptions for Propositions~\ref{prop:weakest_neg}
and~\ref{prop:light_tails}.
\begin{enumerate}
\item [(G1)]The prior can be factorized as $p(\theta,\epsilon)=p_{\theta}(\theta)p_{\epsilon}(\epsilon)$,
\item [(G2)]The proposal can be factorized as $q_{4}(\theta,\epsilon;\vartheta,\varepsilon)=q(\theta,\vartheta)g(\theta,\vartheta,\epsilon;\varepsilon)$
with $q\in\mathcal{Q}$,
\item [(G3)]For every $\varepsilon>\varepsilon_{0}>0$, $\sup{}_{\theta,\vartheta,\epsilon}g(\theta,\vartheta,\epsilon;\varepsilon)<M_{1}(\varepsilon_{0})$,
\item [(G4)]The proposal satisfies $\sup_{\theta,\vartheta}q(\theta,\vartheta)<M_{2}<\infty$,
\item [(G5)]For every $\epsilon_{0}>0$, there exists $k=k(\epsilon_{0})>0$
such that for every $\epsilon<\epsilon_{0}$ we have 
\[
\int_{\mathsf{Y}}I\left\{ K_{\epsilon}(x,y)>k\right\} f_{\theta}({\rm d}x)>0,
\]

\item [(G6)]For every $(\epsilon,x)$ such that $K_{\epsilon}(x,y)>0$,
the conditional distribution of $\theta$ under $\tilde{\pi}$ is
not compactly supported, i.e.,\ $\int_{B_{R,0}^{\complement}}\tilde{\pi}(\theta,\epsilon,x){\rm d}\theta>0$
for all $R>0$.\end{enumerate}
\begin{prop}
\label{prop:weakest_neg}Assume in addition to (G1)--(G6), the following
additional conditions:\end{prop}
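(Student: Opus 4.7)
The plan is to mirror the blueprint of the proof of Theorem~\ref{thm:nvb} by reducing to Theorem~\ref{thm:gen_nvb}. Since $P_4$ is reversible with respect to $\tilde{\pi}$ (when irreducible and aperiodic) and $\tilde{\pi}$ is not concentrated at a single point by (G6), it suffices to show that $\tilde{\pi}\text{-}\mathrm{ess\,sup}_{(\theta,\epsilon,x)} P_4((\theta,\epsilon,x),\{(\theta,\epsilon,x)\}) = 1$. Equivalently, for every $\tau > 0$ I must exhibit a measurable set $A_\tau \subseteq \Theta \times \mathcal{E} \times \mathsf{Y}$ with $\tilde{\pi}(A_\tau) > 0$ on which the move-away probability $P_4((\theta,\epsilon,x),\{(\theta,\epsilon,x)\}^\complement)$ is at most $\tau$.

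The set $A_\tau$ will have the product form
\[
A_\tau \;=\; B_{v,0}^\complement \times [\epsilon_0,\infty) \times \{x \in \mathsf{Y} : K_\epsilon(x,y) \geq k_0\},
\]
with $v,\epsilon_0,k_0$ chosen appropriately. Positivity of $\tilde{\pi}(A_\tau)$ will come from combining (G5) (to obtain $f_\theta$-mass on $\{K_\epsilon(\cdot,y)\geq k_0\}$ uniformly in $\theta$ for $\epsilon$ in the relevant regime), (G6) (to guarantee $\tilde{\pi}$-mass on the $\theta$-tail $B_{v,0}^\complement$), and the factorization in (G1).

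For $(\theta,\epsilon,x)\in A_\tau$, the core calculation is to split the integral defining $P_4((\theta,\epsilon,x),\{(\theta,\epsilon,x)\}^\complement)$ according to whether $\vartheta \in B_{r,\theta}$. By (G2) and $q\in\mathcal{Q}$, the contribution from $\vartheta \notin B_{r,\theta}$ is at most $\sup_\theta q(\theta,B_{r,\theta}^\complement)$, which is below $\tau/2$ for large enough $r = r_q(\tau/2)$. For the local part, the acceptance probability is bounded by the MH ratio itself, and then integrating out $z$ against $f_\vartheta$ collapses the relevant piece to
\[
\int_{\mathsf{Y}} f_\vartheta(dz)\,\alpha_4(\theta,\epsilon,x;\vartheta,\varepsilon,z) \;\leq\; \frac{p(\vartheta,\varepsilon)\,q_4(\vartheta,\varepsilon;\theta,\epsilon)}{p(\theta,\epsilon)\,q_4(\theta,\epsilon;\vartheta,\varepsilon)\,K_\epsilon(x,y)}\,\tilde{f}_\vartheta^\varepsilon(y),
\]
so that the problem reduces to controlling $\tilde{f}_\vartheta^\varepsilon(y)$ against the denominator $K_\epsilon(x,y) \geq k_0$. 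The integration over $\varepsilon \geq \varepsilon_0$ is then clean thanks to boundedness of $g$ via (G3), and (G4) plus (G1) give a uniform bound on the prior-proposal ratio over the local ball.

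The main obstacle I anticipate is the two-sided handling of $\varepsilon$. On the one hand, (G3) provides control of $g$ only away from $\varepsilon=0$, so the contribution from $\varepsilon < \epsilon_0$ must be absorbed separately; this is where one of the unstated ``additional conditions'' of the proposition will be needed, presumably to restrict proposal mass on arbitrarily small $\varepsilon$. On the other hand, the analogue of Condition~\ref{cond:c1}'s vanishing $h$ — namely that $\sup_{\vartheta \in B_{v,0}^\complement}\tilde{f}_\vartheta^\varepsilon(y)$ vanishes as $v\to\infty$ uniformly in the relevant $\varepsilon$ regime — must also be supplied by the unstated conditions. With these two ingredients the local contribution is made smaller than $\tau/2$, yielding $P_4((\theta,\epsilon,x),\{(\theta,\epsilon,x)\}^\complement) \leq \tau$ uniformly on $A_\tau$ and completing the application of Theorem~\ref{thm:gen_nvb}.
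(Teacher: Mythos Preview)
Your overall strategy is the same as the paper's: reduce to Theorem~\ref{thm:gen_nvb}, build a product set $A_\tau$ with positive $\tilde{\pi}$-mass far out in $\theta$, split the move-away integral by $\vartheta\in B_{r,\theta}$ versus its complement, bound the acceptance ratio by the full ratio, and integrate out $z$ to produce $\tilde{f}_\vartheta^\varepsilon(y)$. That skeleton is right.

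There are, however, three genuine gaps in how you fill it in. First, your claim that ``(G4) plus (G1) give a uniform bound on the prior-proposal ratio over the local ball'' is wrong: (G4) bounds only $q(\vartheta,\theta)$, and (G1) is a factorisation; nothing in (G1)--(G6) controls $p_\theta(\vartheta)/p_\theta(\theta)$ on $B_{r,\theta}$. This is precisely the role of the paper's condition (G8), which you did not anticipate. Second, your $\epsilon$-slab $[\epsilon_0,\infty)$ does not guarantee $p_\epsilon(\epsilon)$ is bounded away from zero, yet $p_\epsilon(\epsilon)$ sits in the denominator of the acceptance ratio; the paper instead restricts to $\Re_\epsilon=\{\epsilon\in(\epsilon_l,\epsilon_r):p_\epsilon(\epsilon)>\delta_1\}$, which is what actually makes the bound go through. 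Third, your anticipated obstacle about proposed $\varepsilon$ near $0$ is a wrong turn: after cancelling the forward proposal density, the surviving $g$-factor is $g(\vartheta,\theta,\varepsilon;\epsilon)$ with the \emph{current} $\epsilon$ in the last slot, so (G3) applies with no restriction on the proposed $\varepsilon$. The paper then integrates over \emph{all} $\varepsilon\in\mathbb{R}_+$ against $p_\epsilon(\varepsilon)$, collapsing $\int p_\epsilon(\varepsilon)\tilde{f}_\vartheta^\varepsilon(y)\,{\rm d}\varepsilon$ to the marginal $\tilde{f}_\vartheta(y)$; condition (G7) is exactly that $\sup_{\theta\in B_{R,0}^\complement}\tilde{f}_\theta(y)\to0$, not a uniform-in-$\varepsilon$ statement as you guessed.
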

\begin{enumerate}
\item [{\rm(G7)}]The artificial likelihood satisfies $\lim{}_{R\to\infty}\sup_{\theta\in B_{R,0}^{\complement}}\tilde{f}_{\theta}(y)=0$,
where $\tilde{f}_{\theta}(y)=\int_{0}^{\infty}p(\epsilon)\tilde{f}_{\theta}^{\epsilon}(y){\rm d}\epsilon$,
\item [{\rm(G8)}]The prior $p_{\theta}(\theta)$ has at most exponentially
decaying tails, i.e.,\ for every $r>0$ there exist $M_{3}(r)>0$
and $M_{4}(r)>0$ such that 
\[
\sup_{\theta\in B_{M_{3}(r),0}^{\complement},\vartheta\in B_{r,\theta}}\frac{p_{\theta}(\vartheta)}{p_{\theta}(\theta)}<M_{4}(r)<\infty.
\]

\end{enumerate}
Then $P_{4}\notin\mathcal{V}$, and consequently also $P_{4}\notin\mathcal{G}$.
\begin{proof}
By Theorem~\ref{thm:gen_nvb}, it suffices to show that $\tilde{\pi}-{\rm ess}\sup_{\theta,\epsilon,x}P_{4}(\theta,\epsilon,x;\{(\theta,\epsilon,x)\})=1$.
First choose fixed $0<\epsilon_{l}<\epsilon_{r}<\infty$ and $\delta_{1}>0$
so that $\int_{(\epsilon_{l},\epsilon_{r})}p_{\epsilon}(\epsilon)I\{p_{\epsilon}(\epsilon)>\delta_{1}\}{\rm d}\epsilon>0$,
and then by assumption (G5) choose $\delta_{2}$ so that $\int_{\mathsf{Y}}I\{K_{\epsilon}(x,y)>\delta_{2}\}f_{\theta}({\rm d}x)>0$
for every $\epsilon<\epsilon_{r}$. Define $\Re_{\epsilon}=\left\{ \epsilon\in(\epsilon_{l},\epsilon_{r})\,:\, p_{\epsilon}(\epsilon)>\delta_{1}\right\} $
and $\Re_{z}=\left\{ z\in\mathsf{Y}\,:\, K_{\epsilon}(x,y)>\delta_{2}\right\} $.
The sets $\Re_{\epsilon}$ and $\Re_{z}$ will be fixed throughout
the proof. Now for every $R>0$ define the set $A(R)$ as 
\[
\Theta\times\mathbb{R}_{+}\times\mathsf{Y}\;\supset\; A(R)\;=\; B_{R,0}^{\complement}\times\Re_{\epsilon}\times\Re_{z}.
\]
The set $A(R)$ has positive $\tilde{\pi}$ mass for every $R$ by
(G5) and (G6). We will investigate the behaviour of $P_{4}$ in $A(R)$
as $R\to\infty$. Let $\tau>0$ and take $r=\inf\left\{ r\,:\, q\left(\theta,B_{r,\theta}^{\complement}\right)<\tau/2\right\} $.
For every $(\theta,\epsilon,x)\in A(R)$ we can compute
\begin{align*}
 & P_{4}((\theta,\epsilon,x),\{(\theta,\epsilon,x)\}^{c})\\
 & =\int_{\Theta\times\mathbb{R}_{+}\times\mathsf{Y}}\left\{ 1\wedge\frac{p_{\theta}(\vartheta)p_{\epsilon}(\varepsilon)q(\vartheta,\theta)g(\vartheta,\theta,\varepsilon;\epsilon)K_{\varepsilon}(z,y)}{p_{\theta}(\theta)p_{\epsilon}(\epsilon)q(\theta,\vartheta)g(\theta,\vartheta,\epsilon;\varepsilon)K_{\epsilon}(x,y)}\right\} q(\theta,\vartheta)g(\theta,\vartheta,\epsilon;\varepsilon)f_{\vartheta}(z){\rm d}z{\rm d}\varepsilon{\rm d}\vartheta\\
 & \leq\frac{\tau}{2}+\int_{B_{r,\theta}\times\mathbb{R}_{+}}\left\{ 1\wedge\frac{p_{\theta}(\vartheta)p_{\epsilon}(\varepsilon)q(\vartheta,\theta)g(\vartheta,\theta,\varepsilon;\epsilon)K_{\varepsilon}(z,y)}{p_{\theta}(\theta)p_{\epsilon}(\epsilon)q(\theta,\vartheta)g(\theta,\vartheta,\epsilon;\varepsilon)K_{\epsilon}(x,y)}\right\} q(\theta,\vartheta)g(\theta,\vartheta,\epsilon;\varepsilon)f_{\vartheta}(z){\rm d}z{\rm d}\varepsilon{\rm d}\vartheta\\
 & \leq\frac{\tau}{2}+\int_{B_{r,\theta}\times\mathbb{R}_{+}}\frac{p_{\theta}(\vartheta)p_{\epsilon}(\varepsilon)q(\vartheta,\theta)g(\vartheta,\theta,\varepsilon;\epsilon)K_{\varepsilon}(z,y)}{p_{\theta}(\theta)p_{\epsilon}(\epsilon)q(\theta,\vartheta)g(\theta,\vartheta,\epsilon;\varepsilon)K_{\epsilon}(x,y)}q(\theta,\vartheta)g(\theta,\vartheta,\epsilon;\varepsilon)f_{\vartheta}(z){\rm d}z{\rm d}\varepsilon{\rm d}\vartheta\\
 & \leq\frac{\tau}{2}+\int_{B_{r,\theta}\times\mathbb{R}_{+}}\frac{p_{\theta}(\vartheta)p_{\epsilon}(\varepsilon)q(\vartheta,\theta)g(\vartheta,\theta,\varepsilon;\epsilon)}{p_{\theta}(\theta)p_{\epsilon}(\epsilon)K_{\epsilon}(x,y)}\tilde{f}_{\vartheta}^{\varepsilon}(y){\rm d}\varepsilon{\rm d}\vartheta\\
 & \leq\frac{\tau}{2}+M_{1}(\epsilon_{l})\int_{B_{r,\theta}\times\mathbb{R}_{+}}\frac{p_{\theta}(\vartheta)p_{\epsilon}(\varepsilon)q(\vartheta,\theta)}{p_{\theta}(\theta)p_{\epsilon}(\epsilon)K_{\epsilon}(x,y)}\tilde{f}_{\vartheta}^{\varepsilon}(y){\rm d}\varepsilon{\rm d}\vartheta\\
 & \leq\frac{\tau}{2}+\frac{M_{1}(\epsilon_{l})M_{2}}{\delta_{2}\delta_{1}}\int_{B_{r,\theta}\times\mathbb{R}_{+}}\frac{p_{\theta}(\vartheta)}{p_{\theta}(\theta)}p_{\epsilon}(\varepsilon)\tilde{f}_{\vartheta}^{\varepsilon}(y){\rm d}\varepsilon{\rm d}\vartheta
\end{align*}
Then by assumption (G8) for $R>M_{3}(r)$ we have
\begin{align*}
P_{4}(\theta,\epsilon,x;\{(\theta,\epsilon,x)\}^{\complement}) & \leq\frac{\tau}{2}+\frac{M_{1}(\epsilon_{l})M_{2}M_{4}(r)}{\delta_{2}\delta_{1}}\int_{B_{r,\theta}\times\mathbb{R}_{+}}p_{\epsilon}(\varepsilon)\tilde{f}_{\vartheta}^{\varepsilon}(y){\rm d}\varepsilon{\rm d}\vartheta\\
 & =\frac{\tau}{2}+\frac{M_{1}(\epsilon_{l})M_{2}M_{4}(r)}{\delta_{2}\delta_{1}}\int_{B_{r,\theta}}\tilde{f}_{\vartheta}(y){\rm d}\vartheta\\
 & \leq\frac{\tau}{2}+\frac{M_{1}(\epsilon_{l})M_{2}M_{4}(r)V(r)}{\delta_{2}\delta_{1}}\sup_{\vartheta\in B_{r,\theta}}\tilde{f}_{\vartheta}(y).
\end{align*}
Now by (G7), $\sup_{\theta\in B_{R,0}^{\complement},\vartheta\in B_{r,\theta}}\tilde{f}_{\vartheta}(y)\to0$
as $R\to\infty$. Consequently, for fixed $\tau$ we obtain $\tilde{\pi}-{\rm ess}\sup_{\theta,\epsilon,x}P_{4}(\theta,\epsilon,x;\{(\theta,\epsilon,x)\})\geq1-\tau$
by taking an increasing sequence $R_{i}$. Since $\tau$ can be taken
arbitrarily small, this implies that $\tilde{\pi}-{\rm ess}\sup_{\theta,\epsilon,x}P_{4}(\theta,\epsilon,x;\{(\theta,\epsilon,x)\})=1$
and we conclude.\end{proof}
\begin{rem}
Of the conditions under which Proposition \ref{prop:weakest_neg}
holds, (G8) is perhaps the strongest. We relax this assumption in
the statement of Proposition \ref{prop:light_tails}, replacing it
with assumptions on $g$.\end{rem}
\begin{prop}
\label{prop:light_tails}Assume in addition to (G1)--(G6), the following
additional conditions:\end{prop}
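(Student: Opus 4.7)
The plan is to follow the template of Proposition~\ref{prop:weakest_neg}, reducing the problem via Theorem~\ref{thm:gen_nvb} to showing that $\tilde{\pi}\text{-}\mathrm{ess}\sup_{(\theta,\epsilon,x)} P_4(\theta,\epsilon,x;\{(\theta,\epsilon,x)\}) = 1$. I would fix $\tau>0$, set $r = r_q(\tau/2)$, pick $0<\epsilon_l<\epsilon_r<\infty$ and thresholds $\delta_1,\delta_2>0$ so that the sets $\Re_\epsilon$ and $\Re_z$ are constructed as in the previous proof via (G1) and (G5), and consider the family $A(R)=B_{R,0}^{\complement}\times\Re_\epsilon\times\Re_z$, each of positive $\tilde{\pi}$-mass by (G5)--(G6). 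The goal is then to show, for $R$ large enough, that $P_4(\theta,\epsilon,x;\{(\theta,\epsilon,x)\}^\complement)<\tau$ uniformly for $(\theta,\epsilon,x)\in A(R)$.

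For $(\theta,\epsilon,x)\in A(R)$, the tail contribution of $q$ outside $B_{r,\theta}$ is at most $\tau/2$, so only the integral over $\vartheta\in B_{r,\theta}$, $\varepsilon>0$, $z\in\mathsf{Y}$ remains. Bounding the acceptance by the full Metropolis ratio and integrating $K_\varepsilon(z,y)f_\vartheta(z)\,\mathrm{d}z$ into $\tilde{f}_\vartheta^\varepsilon(y)$, one is left with an integral of
\[
\frac{p_\theta(\vartheta)p_\epsilon(\varepsilon)q(\vartheta,\theta)g(\vartheta,\theta,\varepsilon;\epsilon)}{p_\theta(\theta)p_\epsilon(\epsilon)K_\epsilon(x,y)}\tilde{f}_\vartheta^\varepsilon(y)
\]
against $\mathrm{d}\varepsilon\,\mathrm{d}\vartheta$. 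The divisor $p_\epsilon(\epsilon)K_\epsilon(x,y)$ is uniformly bounded below on $A(R)$ by $\delta_1\delta_2$, and by (G3)--(G4) the factors $q(\vartheta,\theta)\leq M_2$ and $g(\vartheta,\theta,\varepsilon;\epsilon)\leq M_1(\epsilon_l)$. What remains is to bound $\int_{B_{r,\theta}\times\mathbb{R}_+}p_\epsilon(\varepsilon)[p_\theta(\vartheta)/p_\theta(\theta)]\tilde{f}_\vartheta^\varepsilon(y)\,\mathrm{d}\varepsilon\,\mathrm{d}\vartheta$, collapse the $\varepsilon$-integral into $\tilde{f}_\vartheta(y)$, and conclude via (G7) that the whole quantity is $o(1)$ as $|\theta|\to\infty$.

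The main obstacle is precisely the loss of (G8): under super-exponential decay of $p_\theta$, the ratio $p_\theta(\vartheta)/p_\theta(\theta)$ is no longer uniformly bounded on $B_{r,\theta}$ and in fact blows up along directions toward the origin. I would handle this by splitting $B_{r,\theta}$ into the subregion on which $p_\theta(\vartheta)\leq p_\theta(\theta)$, where the Proposition~\ref{prop:weakest_neg} computation applies verbatim, and its complement. On the complement the new conditions on $g$ must supply a compensating factor that renders the integrand integrable; alternatively, the elementary bound $1\wedge a\leq a^{s}$ for some $s\in(0,1)$ can be applied to the acceptance prior to integration, replacing $p_\theta(\vartheta)/p_\theta(\theta)$ by its $s$-th power, which is integrable against a suitable density if the additional hypotheses are chosen accordingly. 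Verifying that whichever route the new conditions enable produces a uniform-in-$\theta$ bound of the form $\tau/2 + C\sup_{\vartheta\in B_{r,\theta}}\tilde{f}_\vartheta(y)$ is the delicate step; granting it, letting $R\to\infty$ and invoking Theorem~\ref{thm:gen_nvb} yields $P_4\notin\mathcal{V}$ and hence $P_4\notin\mathcal{G}$.
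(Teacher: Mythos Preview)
Your architecture is right --- reduce via Theorem~\ref{thm:gen_nvb}, build $A(R)=B_{R,0}^\complement\times\Re_\epsilon\times\Re_z$, and bound the off-diagonal mass --- but you have mis-guessed the additional hypotheses and hence the mechanism that closes the argument. Condition~(G7) is \emph{not} among the assumptions here; the extra conditions are (G9)--(G11). In particular (G10) forces $g(\theta,\vartheta,\epsilon;\varepsilon)=g(\epsilon,\varepsilon)$, (G11) is tightness of the family $\{g(\epsilon,\cdot)\}_{\epsilon\in[\epsilon_L,\epsilon_R]}$, and (G9) asserts that for every fixed $\bar\varepsilon,r>0$,
\[
\lim_{R\to\infty}\ \sup_{\theta\in B_{R,0}^\complement,\ \vartheta\in B_{r,\theta},\ \varepsilon\in[0,\bar\varepsilon]}\ \frac{p_\theta(\vartheta)}{p_\theta(\theta)}\,p_\epsilon(\varepsilon)\,\tilde f_\vartheta^{\,\varepsilon}(y)=0.
\]
Since (G9) already absorbs the ratio $p_\theta(\vartheta)/p_\theta(\theta)$ into the quantity assumed to vanish, no separate control of that ratio is required: your proposed split of $B_{r,\theta}$ and the $1\wedge a\le a^s$ device are neither needed nor used.

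The step you are missing is an $\varepsilon$-truncation. One must choose $\epsilon_l,\epsilon_r$ inside the interval $[\epsilon_L,\epsilon_R]$ supplied by (G11), and after restricting $\vartheta$ to $B_{r,\theta}$ (cost $\tau/2$) one uses tightness to pick $\bar\varepsilon(\tau)=\phi(1-\tau/4)$ so that $\int_{\bar\varepsilon(\tau)}^\infty g(\epsilon,\varepsilon)\,\mathrm{d}\varepsilon<\tau/4$ uniformly for $\epsilon\in\Re_\epsilon$; this cuts the $\varepsilon$-integral down to $[0,\bar\varepsilon(\tau)]$ at cost $\tau/4$. Only then does one drop the $1\wedge\cdot$, apply (G3)--(G4) and the lower bounds $p_\epsilon(\epsilon)\ge\delta_1$, $K_\epsilon(x,y)\ge\delta_2$, and bound the residual integral over $B_{r,\theta}\times[0,\bar\varepsilon(\tau)]$ by $V(r)\,\bar\varepsilon(\tau)$ times the supremum in (G9), which tends to $0$ as $R\to\infty$. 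The $\varepsilon$-integral is never collapsed to $\tilde f_\vartheta(y)$: doing so would both require (G7), which is unavailable, and leave the unbounded prior ratio exposed --- precisely the obstacle you identified but could not resolve with the conditions you conjectured.
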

\begin{enumerate}
\item [{\rm(G9)}]For any fixed $\bar{\varepsilon}>0$ and $r>0$,
\[
\lim_{R\rightarrow\infty}\sup_{\theta\in B_{R,0}^{\complement},\vartheta\in B_{r,\theta},\varepsilon\in[0,\bar{\varepsilon}]}\frac{p_{\theta}(\vartheta)}{p_{\theta}(\theta)}p_{\epsilon}(\varepsilon)\tilde{f}_{\vartheta}^{\varepsilon}(y)=0,
\]

\item [{\rm(G10)}]The proposal for $\varepsilon$ is independent of $(\theta,\vartheta)$,
i.e., $g(\theta,\vartheta,\epsilon;\varepsilon)=g(\epsilon,\varepsilon)$,
\item [{\rm(G11)}]There exist $0<\epsilon_{L}<\epsilon_{R}<\infty$ with
$\int_{\epsilon_{L}}^{\epsilon_{R}}p_{\epsilon}(\epsilon){\rm d}\epsilon>0$
such that the family of distributions $\{g(\epsilon,\cdot)\}_{\epsilon\in[\epsilon_{L},\epsilon_{R}]}$
is tight. In particular, if $G_{\epsilon}$ is the cumulative distribution
function associated with $g(\epsilon,\cdot)$ then there exists a
function $\phi$ such that for all $u\in(0,1)$, $\sup_{\epsilon\in[\epsilon_{L},\epsilon_{R}]}G_{\epsilon}^{-1}(u)\leq\phi(u)<\infty$.
\end{enumerate}
Then $P_{4}\notin\mathcal{V}$, and consequently also $P_{4}\notin\mathcal{G}$.
\begin{proof}
By Theorem~\ref{thm:gen_nvb}, it suffices to show that $\tilde{\pi}-{\rm ess}\sup_{\theta,\epsilon,x}P_{4}(\theta,\epsilon,x;\{(\theta,\epsilon,x)\})=1$.
From (G11) choose fixed $\epsilon_{L}\leq\epsilon_{l}<\epsilon_{r}\leq\epsilon_{R}$
and $\delta_{1}>0$ so that $\int_{(\epsilon_{l},\epsilon_{r})}p_{\epsilon}(\epsilon)\mathbb{I}(p_{\epsilon}(\epsilon)>\delta_{1}){\rm d}\epsilon>0$,
and then by (G5) choose $\delta_{2}$ so that $\int_{\mathsf{Y}}I\{K_{\epsilon}(x,y)>\delta_{2}\}f_{\theta}({\rm d}x)>0$
for every $\epsilon<\epsilon_{r}$. Define $\Re_{\epsilon}=\left\{ \epsilon\in(\epsilon_{l},\epsilon_{r})\,:\, p_{\epsilon}(\epsilon)>\delta_{1}\right\} $
and $\Re_{z}=\left\{ z\in\mathsf{Y}\,:\, K_{\epsilon}(x,y)>\delta_{2}\right\} $.
The sets $\Re_{\epsilon}$ and $\Re_{z}$ will be fixed throughout
the proof. Now for every $R>0$ define the set $A(R)$ as 
\[
\Theta\times\mathbb{R}_{+}\times\mathsf{Y}\;\supset\; A(R)\;=\; B_{R,0}^{\complement}\times\Re_{\epsilon}\times\Re_{z}.
\]
The set $A(R)$ has positive $\tilde{\pi}$ mass for every $R$ by
(G5) and (G6). We will investigate the behaviour of $P_{4}$ in $A(R)$
as $R\to\infty$. Let $\tau>0$ and take $r=\inf\left\{ r\,:\, q(\theta,B_{r,\theta}^{\complement})<\tau/2\right\} $.
We take $\bar{\varepsilon}(\tau)=\phi(1-\tau/4)$. For every $(\theta,\epsilon,x)\in A(R)$
we can compute
\begin{align*}
 & P_{4}((\theta,\epsilon,x),\{(\theta,\epsilon,x)\}^{c})\\
 & =\int_{\Theta\times\mathbb{R}_{+}\times\mathsf{Y}}\left\{ 1\wedge\frac{p_{\theta}(\vartheta)p_{\epsilon}(\varepsilon)q(\vartheta,\theta)g(\varepsilon,\epsilon)K_{\varepsilon}(z,y)}{p_{\theta}(\theta)p_{\epsilon}(\epsilon)q(\theta,\vartheta)g(\epsilon,\varepsilon)K_{\epsilon}(x,y)}\right\} q(\theta,\vartheta)g(\epsilon,\varepsilon)f_{\vartheta}(z){\rm d}z{\rm d}\varepsilon{\rm d}\vartheta\\
 & \leq\frac{\tau}{2}+\int_{B_{r,\theta}\times\mathbb{R}_{+}}\left\{ 1\wedge\frac{p_{\theta}(\vartheta)p_{\epsilon}(\varepsilon)q(\vartheta,\theta)g(\varepsilon,\epsilon)K_{\varepsilon}(z,y)}{p_{\theta}(\theta)p_{\epsilon}(\epsilon)q(\theta,\vartheta)g(\epsilon,\varepsilon)K_{\epsilon}(x,y)}\right\} q(\theta,\vartheta)g(\epsilon,\varepsilon)f_{\vartheta}(z){\rm d}z{\rm d}\varepsilon{\rm d}\vartheta\\
 & \leq\frac{\tau}{2}+\frac{\tau}{4}+\int_{B_{r,\theta}\times[0,\bar{\varepsilon}(\tau)]}\left\{ 1\wedge\frac{p_{\theta}(\vartheta)p_{\epsilon}(\varepsilon)q(\vartheta,\theta)g(\varepsilon,\epsilon)K_{\varepsilon}(z,y)}{p_{\theta}(\theta)p_{\epsilon}(\epsilon)q(\theta,\vartheta)g(\epsilon,\varepsilon)K_{\epsilon}(x,y)}\right\} q(\theta,\vartheta)g(\epsilon,\varepsilon)f_{\vartheta}(z){\rm d}z{\rm d}\varepsilon{\rm d}\vartheta\\
 & \leq\frac{3\tau}{4}+\int_{B_{r,\theta}\times[0,\bar{\varepsilon}(\tau)]}\frac{p_{\theta}(\vartheta)p_{\epsilon}(\varepsilon)q(\vartheta,\theta)g(\varepsilon,\epsilon)K_{\varepsilon}(z,y)}{p_{\theta}(\theta)p_{\epsilon}(\epsilon)q(\theta,\vartheta)g(\epsilon,\varepsilon)K_{\epsilon}(x,y)}q(\theta,\vartheta)g(\epsilon,\varepsilon)f_{\vartheta}(z){\rm d}z{\rm d}\varepsilon{\rm d}\vartheta\\
 & \leq\frac{3\tau}{4}+\int_{B_{r,\theta}\times[0,\bar{\varepsilon}(\tau)]}\frac{p_{\theta}(\vartheta)p_{\epsilon}(\varepsilon)q(\vartheta,\theta)g(\varepsilon,\epsilon)}{p_{\theta}(\theta)p_{\epsilon}(\epsilon)K_{\epsilon}(x,y)}\tilde{f}_{\vartheta}^{\varepsilon}(y){\rm d}\varepsilon{\rm d}\vartheta\\
 & \leq\frac{3\tau}{4}+M_{1}(\epsilon_{l})M_{2}\int_{B_{r,\theta}\times[0,\bar{\varepsilon}(\tau)]}\frac{p_{\theta}(\vartheta)p_{\epsilon}(\varepsilon)}{p_{\theta}(\theta)p_{\epsilon}(\epsilon)K_{\epsilon}(x,y)}\tilde{f}_{\vartheta}^{\varepsilon}(y){\rm d}\varepsilon{\rm d}\vartheta\\
 & \leq\frac{3\tau}{4}+\frac{M_{1}(\epsilon_{l})M_{2}}{\delta_{2}\delta_{1}}\int_{B_{r,\theta}\times[0,\bar{\varepsilon}(\tau)]}\frac{p_{\theta}(\vartheta)}{p_{\theta}(\theta)}p_{\epsilon}(\varepsilon)\tilde{f}_{\vartheta}^{\varepsilon}(y){\rm d}\varepsilon{\rm d}\vartheta.\\
 & \leq\frac{3\tau}{4}+\frac{M_{1}(\epsilon_{l})M_{2}V(r)\bar{\varepsilon}(\tau)}{\delta_{2}\delta_{1}}\sup_{\vartheta\in B_{r,\theta},\varepsilon\in[0,\bar{\varepsilon}(\tau)]}\frac{p_{\theta}(\vartheta)}{p_{\theta}(\theta)}p_{\epsilon}(\varepsilon)\tilde{f}_{\vartheta}^{\varepsilon}(y).
\end{align*}
Now by (G9),
\[
\sup_{\theta\in B_{R,0}^{\complement},\vartheta\in B_{r,\theta},\varepsilon\in[0,\bar{\varepsilon}(\tau)]}\frac{p_{\theta}(\vartheta)}{p_{\theta}(\theta)}p_{\epsilon}(\varepsilon)\tilde{f}_{\vartheta}^{\varepsilon}(y)\rightarrow0
\]
as $R\to\infty$ for any $r>0$. Consequently, for fixed $\tau$ we
obtain $\tilde{\pi}-{\rm ess}\sup_{\theta,\epsilon,x}P_{4}(\theta,\epsilon,x;\{(\theta,\epsilon,x)\})\geq1-\tau$
by taking an increasing sequence $R_{i}$. Since $\tau$ can be taken
arbitrarily small, this implies that $\tilde{\pi}-{\rm ess}\sup_{\theta,\epsilon,x}P_{4}(\theta,\epsilon,x;\{(\theta,\epsilon,x)\})=1$
and we conclude.
\end{proof}
We provide two examples to show how Propositions~\ref{prop:weakest_neg}
and~\ref{prop:light_tails} can be applied.
\begin{example}
If $f_{\theta}(\cdot)=\mathcal{N}(\cdot;\theta,\sigma^{2})$, $K_{\epsilon}(x,y)=\mathcal{N}(y;x,\epsilon)$
and $p(\theta,\epsilon)=\lambda_{1}\lambda_{2}/2\exp(-\lambda_{1}|\theta|-\lambda_{2}\epsilon)$,
the conditions of Proposition~\ref{prop:weakest_neg} are met for
any $(\sigma^{2},\lambda_{1},\lambda_{2})\in(0,\infty)^{3}$ when
$q$ and $g$ satisfy (G2)--(G4).
\end{example}

\begin{example}
Let $f_{\theta}(\cdot)=\mathcal{N}(\cdot;\theta,\sigma^{2})$, $K_{\epsilon}(x,y)=\mathcal{N}(y;x,\epsilon)$
and $p(\theta,\epsilon)=\mathcal{N}(\theta;0,\delta^{2})\lambda\exp(-\lambda\epsilon)$,
with $q$ and $g$ satisfying (G2)--(G4) and (G10)--(G11). (G1) and
(G5)--(G6) hold in this case and it remains to show that (G9) is satisfied
so we can apply Proposition~\ref{prop:light_tails}. Without loss
of generality assume that $y\geq0$ and note that
\[
\sup_{\vartheta\in B_{r,\theta}}\frac{p_{\theta}(\vartheta)}{p_{\theta}(\theta)}p_{\epsilon}(\varepsilon)\tilde{f}_{\vartheta}^{\varepsilon}(y)\leq\left\{ 2\pi(\sigma^{2}+\varepsilon)\right\} ^{-\frac{1}{2}}\lambda\exp\left[\frac{r\theta}{\delta^{2}}-\lambda\varepsilon-\frac{\left\{ \theta-(y+r)\right\} {}^{2}}{2(\sigma^{2}+\varepsilon)}\right].
\]
With $\theta\in(R,\infty)$ and large enough $R$, we have 
\[
\sup_{\vartheta\in B_{r,\theta}}\sup_{\varepsilon\in[0,\varepsilon_{0}]}\frac{p_{\theta}(\vartheta)}{p_{\theta}(\theta)}p_{\epsilon}(\varepsilon)\tilde{f}_{\vartheta}^{\varepsilon}(y)\leq\left\{ 2\pi(\sigma^{2}+\varepsilon_{0})\right\} ^{-\frac{1}{2}}\lambda\exp\left[\frac{r\theta}{\delta^{2}}-\lambda\varepsilon_{0}-\frac{\left\{ \theta-(y+r)\right\} {}^{2}}{2(\sigma^{2}+\varepsilon_{0})}\right].
\]
Therefore, 
\[
\lim_{R\rightarrow\infty}\sup_{\theta\in B_{R,0}^{\complement},\vartheta\in B_{r,\theta},\varepsilon\in[0,\varepsilon_{0}]}\frac{p_{\theta}(\vartheta)}{p_{\theta}(\theta)}p_{\epsilon}(\varepsilon)\tilde{f}_{\vartheta}^{\varepsilon}(y)=0,
\]
so (G9) is satisfied for any $(\delta^{2},\lambda,\sigma^{2})\in(0,\infty)^{3}$.
\end{example}

\section{Calculations for the example in Section~\ref{eg:geometric}\label{sec:eg1_calculations}}

To obtain $n_{R}=H^{-1}$ calculate 
\[
H=(1-a)\sum_{\theta=1}^{\infty}a^{\theta-1}b^{\theta}=b(1-a)\sum_{\theta=0}^{\infty}(ab)^{\theta}=\frac{b(1-a)}{(1-ab)},
\]
 so $n_{R}=(1-ab)/(b(1-a))$. To bound $n$, we have
\begin{align*}
n & =(1-ab)\left\{ \sum_{\theta=1}^{\infty}(ab)^{\theta-1}\frac{1}{2}\left(\frac{1}{b^{\theta}+b^{\theta-1}-b^{2\theta-1}}+\frac{a}{b^{\theta}+b^{\theta+1}-b^{2\theta+1}}\right)\right\} -(1-ab)\frac{1}{2}\\
 & =\frac{1-ab}{2}\left\{ -1+\sum_{\theta=1}^{\infty}a^{\theta-1}\left(\frac{1}{b+1-b^{\theta}}+\frac{a/b}{1+b-b^{\theta+1}}\right)\right\} ,
\end{align*}
and so both
\[
n\leq\frac{1-ab}{2}\left\{ -1+\sum_{\theta=1}^{\infty}a^{\theta-1}\left(1+\frac{a}{b}\right)\right\} =\frac{1-ab}{2}\left\{ \frac{a+b}{b(1-a)}-1\right\} ,
\]
and
\[
n\geq\frac{1-ab}{2}\left\{ -1+\sum_{\theta=1}^{\infty}a^{\theta-1}\left(\frac{1+a/b}{1+b}\right)\right\} =\frac{1-ab}{2}\left\{ \frac{a+b}{b(1-a)(1+b)}-1\right\} .
\]

\bibliographystyle{biometrika}
\bibliography{abc_ge}

\end{document}